\newcommand{\des}{f^{\emph{des}}}
\newcommand{\tour}[1]{\ensuremath{\mathring{#1}}}
\newcommand{\first}[1]{\ensuremath{f^{\tiny \emph{first} }(#1)}}
\newcommand{\last}[1]{\ensuremath{f^{\tiny \emph{last} }(#1)}}
\newcommand{\proj}{\hat{f}^{\emph{des}}}
\newcommand{\T}{\ensuremath{T_G}}
\newcommand{\D}{\ensuremath{T_G^\Delta}}
\newcommand{\TT}{\ensuremath{  \scalebox{1.5}{$\tau_{\scalebox{0.5}{$G$}}$}}}
\newcommand{\interval}{border\xspace}
\newcommand{\intervals}{borders\xspace}
\newtheorem{property}{Property}
\newtheorem{corollary}{Corollary}
\newtheorem{lemma}{Lemma}
\newtheorem{definition}{Definition}
\newtheorem{observation}{Observation}
\title{Worst-case Deterministic Fully-Dynamic Planar $2$-vertex Connectivity}
\author{Jacob Holm\thanks{Basic Algorithms Research Copenhagen.
University of Copenhagen, Denmark. \textit{jaho@di.ku.dk}.}, Ivor van der Hoog\thanks{
    Algorithms, Logic and Graphs. Technical University of Denmark, Denmark. \texttt{idjva@dtu.dk}.
  }, Eva Rotenberg\thanks{Algorithms, Logic and Graphs. Technical University of Denmark, Denmark. \textit{erot@dtu.dk}.}}
\begin{document}
\date{}
\maketitle
\thispagestyle{empty}

\begin{abstract}
We study dynamic planar graphs with $n$ vertices, subject to edge deletion, edge contraction, edge insertion across a face, and the splitting of a vertex in specified corners. We dynamically maintain a combinatorial embedding of such a planar graph, subject to connectivity and $2$-vertex-connectivity (biconnectivity) queries between pairs of vertices. Whenever a query pair is connected and not biconnected, we find the first and last cutvertex separating them.

Additionally, we allow local changes to the embedding by flipping the embedding of a subgraph that is connected by at most two vertices to the rest of the graph.

We support all queries and updates in deterministic, worst-case, $O(\log^2 n)$ time,
using an $O(n)$-sized data structure.

Previously, the best bound for fully-dynamic biconnectivity (subject to our set of operations) was an \emph{amortised} $\tilde{O}(\log^3 n)$ for general graphs, and algorithms with worst-case polylogarithmic update times were known only in the partially dynamic (insertion-only or deletion-only) setting.
\end{abstract}



\thispagestyle{empty}
\newpage

\section{Introduction}
In dynamic graph algorithms, the task is to efficiently update information about a graph that undergoes updates from a specified family of potential updates. Simultaneously, we want to efficiently support questions about properties of the graph or relations between vertices.
Two vertices $u$ and $v$ are $2$-vertex connected (i.e. biconnected) in a graph $G$, whenever after the removal of any vertex in $G$ (apart from $u$ and $v$) they are still connected in $G$.
This work considers dynamically maintaining a combinatorial embedding of a graph that is planar, subject to biconnectivity queries between vertices. We show how to efficiently maintain $G$ in $O(\log^2 n)$ time per update operation using linear space. 
We additionally support biconnectivity queries in $O(\log^2 n)$ time. The competitive parameters for dynamic algorithms include update time, query time, the class of allowed updates, the adversarial model, and whether times are worst-case or amortized. 
 We present a deterministic algorithm: which means that all statements hold in the strictest adversarial model; against adaptive adversaries. Interestingly, for general graphs, there seems to be a large class of problems for which the deterministic amortized algorithms grossly outperform the deterministic worst-case time algorithms:
for dynamic connectivity 
the state-of-the-art worst-case update time is of the form 
$O(n^{o(1)})$~\cite{DBLP:conf/soda/GoranciRST21},
whilst the state-of-the-art amortized update time is $\tilde{O}(\log^2 n)$~\cite{Holm:2001,Wulff-Nilsen16};
for planarity testing, the best amortized solution has $O(\log^3 n)$~\cite{DBLP:conf/stacs/HolmR21} update time, compared to 
$O(n^{2/3})$ worst-case~\cite{GalilIS99} (in a restricted setting).
For biconnectivity in general graphs the current best worst-case solution has update time $O(\sqrt{n})$ \cite{Eppstein97}. The best amortized update time is $\tilde{O}(\log^3 n)$ \cite{Holm:2001,Thorup:2000,Holm18a}. For plane graphs, Henzinger~\cite{henzinger2000improved} shows how to support biconnenctivity queries for a plane graph in $O(\log n)$ query time and $O(\log^2 n)$ update time (where the updates may be edge deletions and insertions across a face in the plane embedding). This algorithm is deterministic and worst-case.

In this work, we provide algorithms for updating connectivity information of a combinatorially embedded planar graph, that is both deterministic, worst-case, and fully-dynamic. 

\begin{restatable}{theorem}{theoremmain}
\label{thm:main_full}
We maintain a planar combinatorial embedding in $O(\log^2 n)$ time subject to:
\begin{itemize}[noitemsep, nolistsep]
    \item $\operatorname{delete}(e)$: where $e$ is an edge, deleting the edge $e$,
    \item $\operatorname{insert}(u,v,f)$: where $u,v$ are incident to the face $f$, inserting an edge $uv$ across $f$,
    \item $\operatorname{find-face}(u,v)$: returns some face $f$ incident to both $u$ and $v$, if any such face exists.
    \item $\operatorname{contract}(e)$: where $e$ is an edge, contract the edge $e$,
    \item $\operatorname{split}(v,c_1,c_2)$: where $c_1$ and $c_2$ are corners (corresponding to gaps between consecutive edges) around the vertex $v$, split $v$ into two vertices $v_{12}$ and $v_{21}$ such that the edges of $v_{12}$ are the edges of $v$ after $c_1$ and before $c_2$, and $v_{21}$ are the remaining edges of $v$,
    \item $\operatorname{flip}(v)$: for a vertex $v$: flip the orientation of the 
    connected component containing $v$.
\end{itemize}
\noindent
We may answer the following queries in $O(\log ^2 n)$ time:
\begin{itemize}[noitemsep, nolistsep]
    \item $\operatorname{connected}(u,v)$, where $u$ and $v$ are vertices, answer whether they are connected,
    \item $\operatorname{biconnected}(u,v)$, where $u$ and $v$ are connected, answer whether they are biconnected.
    When not biconnected, we may report the separating cutvertex closest to $u$.
\end{itemize}
\end{restatable}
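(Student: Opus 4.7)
My plan is to maintain, for each connected component of $G$, a spanning tree $\T$ together with a representation of the combinatorial embedding, plugged into a worst-case $O(\log n)$-time top-tree data structure. Connectivity queries then reduce to a standard same-tree test. The embedding is encoded by the cyclic edge-ordering at each vertex, which I would store as balanced BSTs with reversal bits so that $\operatorname{contract}$ and $\operatorname{split}$ (which merge or cut these cyclic orderings at specified corners) are realized via a constant number of BST splits and joins, while $\operatorname{flip}$ becomes a lazy orientation bit propagated through the top tree, analogous to the reverse bit used for path reversal in standard top trees.

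For biconnectivity I would exploit the classical fact that a pair $u, v$ is biconnected iff every internal vertex on the tree path from $u$ to $v$ in $\T$ is \emph{covered} by some non-tree edge whose endpoints lie on opposite sides of that vertex along the path; the separating cut vertex closest to $u$ is then the first uncovered such vertex. Planarity provides the extra leverage: cut vertices correspond exactly to repeated occurrences of a vertex in a single face walk, so the coverage information can be summarized using the facial tour $\TT$, with $\first{\cdot}$ and $\last{\cdot}$ recording the first and last positions where relevant endpoints occur within a top-tree cluster and $\meet$ identifying where two such occurrences cross. Storing these summaries in the cluster data of the top tree enables a top-down search for the first uncovered vertex in $O(\log n)$ levels.

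To turn this into $O(\log^2 n)$ worst-case time I expect a two-level design: the outer top tree over $\T$ handles the path descent, while each cluster maintains an inner balanced structure over its associated non-tree edges (or over the triangulation $\D$ of the incident faces), queried in $O(\log n)$ time per level. The main obstacle is coordinating all of these structures under the full update set without any amortization. In particular, $\operatorname{contract}$ and $\operatorname{split}$ change the vertex set itself and simultaneously perturb the cyclic orders, the face structure, and the coverage summaries; ensuring that each such update touches only $O(\log n)$ clusters, each repaired in worst-case $O(\log n)$ time, and that a \textbf{first cut vertex closest to $u$} query is still answerable by a nested descent through both levels, is the delicate core that I would expect the bulk of the paper to develop.
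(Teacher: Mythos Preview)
Your proposal has a genuine gap at its core: the ``coverage'' characterization you state is the one for \emph{two-edge} connectivity, not biconnectivity. It is true that if some non-tree edge has its endpoints in the two components of $\T\setminus\{m\}$ containing $u$ and $v$, then $m$ does not separate $u$ from $v$. But the converse fails. Take the tree path $u\,-\,m\,-\,v$ and a vertex $a$ in a third subtree hanging off $m$, with non-tree edges $(u,a)$ and $(a,v)$. Then $u$ and $v$ are biconnected (the path $u\to a\to v$ avoids $m$), yet no single non-tree edge has endpoints on opposite sides of $m$ along the $u$--$v$ path. Biconnectivity requires chaining such edges through off-path subtrees, and this chaining is exactly what makes the problem harder than two-edge connectivity; the paper explicitly remarks that the Hershberger et al.\ edge-bundle ideas for two-edge connectivity do not transfer, because vertex-splitting destroys the biconnectivity structure. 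Your two-level ``first uncovered vertex'' search is thus attacking the wrong predicate.

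The paper's route is quite different from a coverage argument. It maintains the slim-path top tree of Holm--Rotenberg together with the cotree $\D$, and for each cluster $\nu$ stores only the biconnected components of $G[\nu]$ that are both \emph{relevant} (touch the spine) and \emph{alive} (edge-incident to one of two designated faces $\des_u(\nu),\des_v(\nu)$). The key structural lemma is that only $O(1)$ such components are \emph{new} at each merge: any new component must contain a ``gap closer'' edge between two child clusters, and planarity forces all gap closers between a fixed pair of tourpaths to lie on a single path in the cotree, of which only the outermost edge can be alive. This bounds the per-cluster bookkeeping by a constant and lets each merge be processed in $O(\log n)$ time via meets in $\D$ and border comparisons, with one delicate special case (the four-way merge with two point clusters) handled by a direct $O(\log n)$ test of whether the central vertex separates the spine. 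None of this machinery---designated faces, cotree meets, the constant bound on $BC^*_u(\nu)$---appears in your sketch, and it is precisely what replaces the failed coverage idea.
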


\noindent
Our update time of $O(\log ^2 n)$ should be seen in the light of the fact that even just supporting edge-deletion, insertion, and $\operatorname{find-face}(u,v)$, currently requires $O(\log ^2 n)$ time~\cite{ItalianoPR93}.
We briefly review the concepts in this paper and the state-of-the-art.

\textbf{Biconnectivity.} For each connected component of a graph, the cutvertices are vertices whose removal disconnects the component. These cutvertices partition the edges of the graph into \emph{blocks} where each block is either a single edge (a \emph{bridge} or \emph{cut-edge}), or a biconnected component. A pair of vertices are biconnected if they are incident to the same biconnected component, or, equivalently, if there are two vertex-disjoint paths connecting them~\cite{menger1927allgemeinen}. 
This notion generalises to $k$-connectivity where $k$ objects of the graph are removed.
While $k$-edge-connectivity is always an equivalence relation on the vertices, $k$-vertex-connectivity happens to be an equivalence relation for the edges only when $k\leq 2$.


\textbf{Dynamic higher connectivity} aims to facilitate queries to $k$-vertex-connectivity or $k$-edge-connectivity as the graph undergoes updates. For two-edge connectivity and biconnectivity in general graphs, there has been a string of work~\cite{Frederickson97,Henzinger95,Eppstein97,Henzinger97,Holm:2001,Thorup:2000,Holm18a}, and the current best deterministic results have $O(\log^2 n\log\log ^2 n)$ amortized update time for $2$-edge connectivity~\cite{Holm18a}, and spend an additional amortized $\log(n)$-factor for biconnectivity~\cite{Holm:2001,Thorup:2000}.  
Thus, the current state of the art for deterministic two-edge connectivity is $\log\log (n)$-factors away from the best deterministic connectivity algorithm~\cite{Wulff-Nilsen16a}, while deterministic biconnectivity is $\log (n)$-factors away. See \cite{Frederickson85, HeTh97, Henzinger:1999, Holm:2001, Thorup:2000, Kapron:2013, HuangHKP17, Wulff-Nilsen16a, kejlbergrasmussen_et_al:LIPIcs:2016:6395, NSW17, DBLP:conf/soda/GoranciRST21} for more work on dynamic connectivity. 
For $k$-(edge-)connectivity with $k > 2$, only partial results have appeared, including incremental~\cite{PoutreW98,BattistaT96, PoutreLO93,Poutre00} and decremental~\cite{GiammarresiI96,Thorup99,HolmIKLR18,Aamand21} results.
The strongest lower bound is by P{\u{a}}tra{\c{s}}cu et al.~\cite{patrascu06}, and implies that of update- and query time cannot both be $o(\log n)$ for any of the mentioned fully dynamic problems on general graphs, and this holds even for planar embedded graphs.
For special graph classes, such as planar graphs, graphs of bounded genus, and minor-free graphs, there has been a bulk of work on connectivity and higher connectivity, e.g.~\cite{EppItaTam92,Hershberger94,Giammarresi:96,Gustedt98,Eppstein99,Lacki2011,Lacki15,Holm17,Holm18b}. 
For dynamic planar embedded graphs, the deterministic poly-logarithmic worst-case algorithm for two-edge connectivity dates back three decades~\cite{Hershberger94}. In this paper, we obtain the same $O(\log ^2 n)$ bound as in \cite{Hershberger94}, 
for the harder problem of biconnectivity. An open question remains whether higher connectivity can generally be maintained in polylogarithmic worst-case time for dynamic planar graphs ($k$-connectivity and $k$-edge connectivity, $k>2$). 

\textbf{Techniques.} Exploiting properties of planar graphs, we use the tree-cotree decomposition: a partitioning of edges into a spanning tree of the graph and a spanning tree of its dual. Using tree-cotree compositions to obtain fast dynamic algorithms is a technique introduced by Eppstein~\cite{Eppstein:2003}, who obtains algorithms for dynamic graphs that have efficient genus-dependent running times. Note that the construction in~\cite{Eppstein:2003} does not facilitate inserting edges in a way that minimises the resulting genus. Such queries are, however, allowed in the structure by Holm and Rotenberg~\cite{Holm17}, which also utilises the tree-cotree decomposition. 

On this spanning tree and cotree, we use top-trees to handle local biconnectivity information. Much of our work concerns carefully choosing which biconnectivity information is relevant and sufficient to maintain, as top-tree clusters are merged and split. Note that the ideas for two-edge connectivity introduced by Hershberger et al. in~\cite{Hershberger94}, i.e. ideas of using topology trees on a vertex-split version of the graph to keep track of edge bundles, do not transfer to the problem at hand, since vertex-splitting changes the biconnectivity structure. 

 
%
\section{Preliminaries}

We study a dynamic plane embedded graph $G = (V, E)$, where $V$ has $n$ vertices. We assume access to $G$ and some \emph{combinatorial embedding}~\cite{eppstein2003dynamic} of $G$ that specifies for every vertex in $G$ the cyclical ordering of the edges incident to that vertex.
Throughout the paper, we maintain some associated spanning tree $\T = (V, E')$ over $G$. 
We study the combinatorial embedding subject to the update operations specified in Theorem~\ref{thm:main_full}. This is the same setting and includes the same updates as by Holm and Rotenberg~\cite{holm2017dynamic}. 

\textbf{Spanning and co- trees.}
If $G$ is a connected graph, a spanning tree $\T$ is a tree where its vertices are $V$, and the edges of $\T$ are a subset of $E$ such that $\T$ is connected.
Given $\T$, the cotree  $\D$ has as vertices the faces in $G$, and as edges of $\D$ are all edges in $G \setminus \T$.
It is known that the cotree is a spanning tree of the dual graph of $G$~\cite{eppstein1992maintenance}.

\textbf{Induced graph.} We adopt the standard notion of (vertex) induced subgraphs: for any $V' \subseteq V$, $G[V']$ is the subgraph created by all edges  $e \in E$ with both endpoints of $e$ in $V'$. 
For any $G$ and $V'$, we denote by $G \setminus G[V']$ the graph $G$ minus all edges in $G[V']$.
Observe that for $(V_1, V_2)$ the set $G[V_1 \cup V_2]$ is not necessarily equal to $G[V_1] \cup G[V_2]$ (Figure~\ref{fig:edgeinequality}). 

\textbf{Top trees.}
Our data structure maintains a specific variant of a \emph{top tree} $\TT$ over the graph $G$~\cite{alstrup2005maintaining, tarjan2005self, holm2017dynamic}. This data structure is a hierarchical decomposition of a planar, embedded, graph $G$ based on a spanning tree $\T$ of $G$.
Formally, for every connected subgraph $S$ of $\T$ we define the \emph{boundary vertices} of $S$ as the vertices incident to an edge in $\T\setminus S$.  A \emph{cluster} is a connected subgraph of $\T$ with at most $2$ boundary vertices.
A cluster with one boundary vertex is a \emph{point cluster}; otherwise a \emph{path cluster}.
A top tree $\TT$ is a hierarchical decomposition of $G$ (with depth $O(\log n)$) into point and path clusters that is structured as follows: 
the leaves of $\TT$ are the path and point clusters for each edge $(u, v)$ in $\T$ (a leaf in $\TT$ is a point cluster if and only if the corresponding edge $(u, v)$ is a leaf in $\T$).
Each inner node $\nu \in \TT$ \emph{merges} a constant number of \emph{child} clusters sharing a single vertex into a new point or path cluster. The vertex set of $\nu$ is the union of those corresponding to its children. 
We refer to combining a constant number of nodes into a new inner node as a \emph{merge}. We refer to its inverse as a \emph{split}.
Furthermore, for planar embedded graphs, we restrict our attention to \emph{embedding-respecting} top trees; that is, given for each vertex a circular ordering of its incident edges, top trees that only allow merges of neighbouring clusters according to this ordering. In other words, if two clusters $\nu$ and $\mu$ share a boundary vertex $b$, and are mergable according to the usual rules of top trees, we only allow them to merge if furthermore they contain a pair of neighbouring edges $e_{\mu}\in\mu$ and $e_{\nu}\in\nu$ where $e_{\mu}$ is a neighbour of $e_{\nu}$ around $b$. 
Holm and Rotenberg~\cite{holm2017dynamic} show how to dynamically maintain $\TT$ (and the spanning tree and cotree) with the following property:

\begin{property}
\label{prop:ordering}
Let $\nu \in \TT$ be a point cluster with boundary vertex $u$. 
The graph $G[\nu]$ is a contiguous segment of the extended Euler tour of $\T$.
\end{property}

\begin{corollary}
\label{corollary:ordering}
Let $\nu \in \TT$ be a point cluster with boundary vertex $u$. The edges of $G[\nu]$ that are incident to $u$ form a connected interval in the clockwise order around $u$.
\end{corollary}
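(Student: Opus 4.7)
The plan is to derive Corollary~\ref{corollary:ordering} directly from Property~\ref{prop:ordering}. The defining feature of the extended Euler tour of $\T$ is that it respects the combinatorial embedding at every vertex: each visit to a vertex $w$ lists $w$'s incident edges in the clockwise cyclic embedding order, interleaving the tree traversals at $w$ with the non-tree edges that sit between them. Restricting the whole tour to the subsequence of appearances of edges at a fixed vertex $u$ therefore enumerates $u$'s incident edges in clockwise cyclic order (with each edge appearing twice in the tour, once at each of its two incidences).

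First, I would invoke Property~\ref{prop:ordering} to fix the contiguous segment $S$ of the extended Euler tour corresponding to $G[\nu]$, so that the edges occurring in $S$ are exactly the edges of $G[\nu]$. Restricting attention to edges incident to $u$, the edges of $G[\nu]$ incident to $u$ are then precisely the edges listed at the visits to $u$ that fall inside $S$.

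Second, I would combine this with the cyclic-order observation above: since the appearances at $u$, read along the tour, traverse $u$'s incident edges in clockwise cyclic order, the appearances that fall inside the contiguous segment $S$ form a contiguous sub-arc of this cyclic order. This immediately yields the claim that the edges of $G[\nu]$ incident to $u$ form a connected interval in the clockwise order around $u$.

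The only subtle point, and the place I would verify most carefully, is the bookkeeping for non-tree edges incident to $u$: one must check that whenever such an edge lies in $G[\nu]$ (i.e.\ its other endpoint is also in $V(\nu)$) its $u$-incidence falls inside $S$, and conversely that no edge whose other endpoint lies outside $V(\nu)$ appears at $u$ inside $S$. Both follow from the construction of the extended Euler tour in~\cite{holm2017dynamic}, which records every non-tree edge at both of its endpoint-visits; this is exactly what makes Property~\ref{prop:ordering} say what the corollary needs.
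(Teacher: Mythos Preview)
Your proposal is correct and matches the paper's approach: the paper states Corollary~\ref{corollary:ordering} as an immediate consequence of Property~\ref{prop:ordering} without giving any separate proof, and your argument spells out exactly the intended one-line derivation (a contiguous segment of the embedding-respecting extended Euler tour, restricted to incidences at $u$, yields a contiguous arc in the clockwise order around $u$). Your extra paragraph on non-tree edges is more careful than the paper itself, but not in tension with it.
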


\begin{figure}[b]
  \centering
  \includegraphics{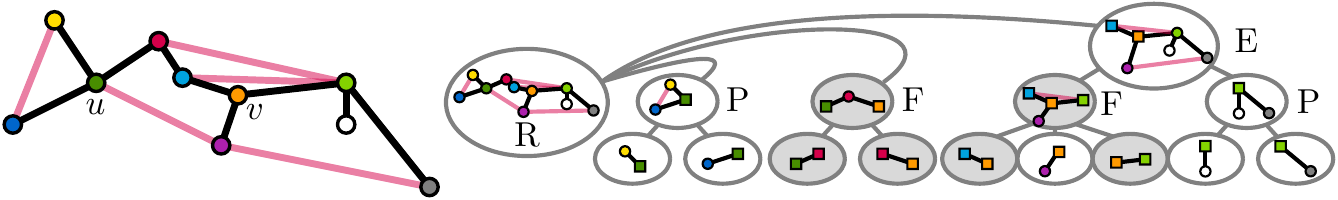}
  \caption{
    We recursively decompose $G$ based on a spanning tree $\T$. 
    Square vertices are boundary vertices. We highlight path clusters.
    The root node has three children, where one is a path cluster that exposes $\{ u, v \}$. The letters indicate the later defined merge type.
  }
  \label{fig:toptree}
\end{figure}

\textbf{Edge division.} For ease of exposition, we perform the trick of subdividing edges into paths of length three. We refer to $G^o$ as the original and $G$ as this edge-divided graph. Since $G^o$ is planar, this does not asymptotically increase the number of vertices.  We note:
\begin{enumerate}[noitemsep, nolistsep]
    \item Edge subdivision respects biconnectivity (since edge subdivision preserves the cycles in the graph; it preserves biconnectivity).
    \item Any spanning tree of $G^o$ can be transformed into a spanning tree of $G$ where all non-tree edges have end points of degree two: for each non-tree edge in $G$, include exactly the first and last edge on its corresponding path in the spanning tree. This property can easily be maintained by any dynamic tree algorithm.
    \item Dynamic operations in $G^o$ easily transform to a constantly many operations in $G$.
\end{enumerate}
With this in place, our top tree structure automatically maintains more information about the endpoints of non-tree edges and their ordering around each endpoint. 

\textbf{Paper notation.}
We refer to vertices in $G$ with Latin letters. We refer to nodes in the top tree $\TT$ with Greek letters. We refer indistinguishably to nodes $\nu \in \TT$ and their associated vertex set. 
Vertices $u$ and $v$ are boundary vertices. 
For a path cluster $\nu \in \TT$ with boundary vertices $\{ u, v \}$ we call its \emph{spine} $\pi(\nu)$ the path in $\T$ that connects $u$ and $v$.
For any path, its \emph{internal vertices} exclude the two endpoints. 
For a point cluster with boundary vertex $u$, its spine $\pi(\nu)$ is $u$. 
We denote by $\TT(\nu)$ the subtree rooted at $\nu$.

\textbf{Slim-path top trees over $G$.}
We use a variant of the a top tree called a \emph{slim-path top tree} by Holm and Rotenberg~\cite{holm2017dynamic}. 
This variant of top trees upholds the \emph{slim-path invariant}: for any path-cluster $\nu$, all edges (of the spanning tree $\T$) in the cluster that are incident to a boundary vertex
belong to the spine. 
In other words: for every path cluster $\nu \in \TT$, for each boundary vertex $u$, there is exactly one edge in the induced subgraph $G[\nu]$ that is connected to $u$. 
The root of this top tree is the merge between a path cluster with boundary vertices $u$ and $v$, with at most two point clusters $\lambda$, $\mu$, with $\pi(\lambda) = \{ u \}$ and $\pi(\mu) = \{v \}$.\footnote{In the degenerate case where the graph is a star, we add one dummy edge to $G$ to create a path cluster.} 
Holm and Rotenberg show how to obtain (and dynamically maintain) this top tree with four types of merges between clusters, illustrated by Figure~\ref{fig:mergetypes} and~\ref{fig:invariant_one_large}. Our operations merge:
\begin{description}[nosep,labelindent=\parindent]
\item[(Root merge)] at most two point clusters and a path cluster to create the root node,

\item[(Point merge)] two point clusters $\mu, \nu$ with $\pi(\mu) = \pi(\nu)$.
    
\item[(End merge)] a point and a path cluster that results in a point cluster, and
    
\item[(Four-way merge)] two path clusters $\mu, \nu$ and at most two point clusters $\alpha, \beta$,     where their common intersection is one \emph{central} vertex $m$. If there are two point clusters, they are not adjacent around $m$. This merge creates a path cluster.
\end{description}

\noindent    
Holm and Rotenberg~\cite{holm2017dynamic} dynamically maintain the above data structure with at most $O(\log n)$ merges and splits per graph operation (where each merge or split requires $O(\log n)$ additional operations). Their data structure supports two additional crticical operations:

\begin{description}[nosep, labelindent=\parindent]
\item[Expose$(u, v)$] selects two vertices $u,v$ of $G$ and ensures that for the unique path cluster 

$\nu$ of 
the root node, $u$ and $v$ are the two endpoints of $\pi(\nu)$; ($O(\log n)$ splits/merges).

\item[Meet$(u, v, w)$] selects three vertices $u, v, w$ of $G$ and returns their meet in $\T$, defined as the unique common vertex on all $3$ paths between the vertices. Moreover, they also support this operation on the cotree $\D$; ($O(\log n)$ time).

\end{description}

\begin{figure}[b]
  \centering
  \includegraphics{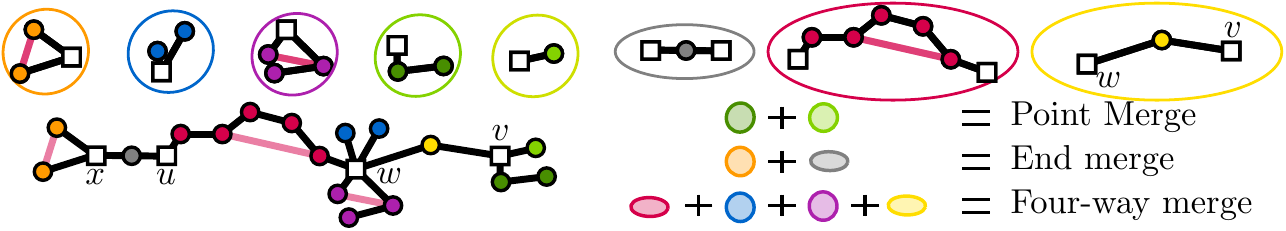}
  \caption{
    A graph $G$ which we already split into five point clusters (circles) and three path clusters (ovals). 
    We show the combinations of clusters that create three merge types. To obtain the root: execute the three suggested merges and merge the remaining components.
  }
  \label{fig:mergetypes}
\end{figure}

\noindent
When $\nu$ is a node in an embedding-respecting slim-path toptree, and $G$ is formed from $G^o$ via edge-subdivisions, note that $G[\nu]$ has the following properties: \label{inducedII}
\begin{itemize}[noitemsep, nolistsep]
    \item For a point cluster $\nu$ with boundary vertex $b$ that encompasses the tree-edges $e_1$ to $e_l$ in the circular ordering around $b$, $G[\nu]$ corresponds to the sub-graph in $G^o$ induced by all vertices in $\nu$, except edges to $b$ that are not in $e_1,\ldots,e_l$.
    \item When $\nu$ is a path cluster, $G[\nu]$ corresponds to the sub-graph in $G^o$ induced by all vertices in $\nu$ except non-tree edges incident to either of the two boundary vertices.
\end{itemize}


\section{Dynamic biconnectivity queries and structure }
\label{sec:datastructure_proofs}

\begin{figure}[b]
  \centering
  \includegraphics{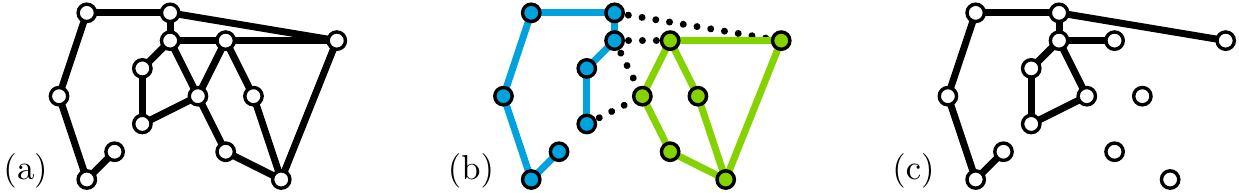}
  \caption{
    (a) A planar graph $G$. (b) A set $\beta \subset V$ and $\gamma \subset V$. We show $G[\beta]$ and $G[\gamma]$. The set $G[\beta] \cup G[\gamma]$ contains no black edges.  (c) We show $G \setminus G[\gamma]$.
  }
  \label{fig:edgeinequality}
\end{figure}

We want  maintain a slim-path top tree, subject to the aforementioned operations, that additionally supports biconnectivity queries in $O(\log^2 n)$ time.    Our data structure consists of the slim-path top tree from~\cite{holm2017dynamic} with three invariants which we define later in this section:
\begin{enumerate}[noitemsep, nolistsep]
    \item For each cluster $\nu \in \TT$, we store the biconnected components in $G[\nu]$ which are relevant for the exposed vertices $(u, v)$. 
    \item For each cluster $\nu$, we store the information required to navigate through the top tree.
    \item For each stored biconnected component, we store its `border' along the spine $\pi(\nu)$.
\end{enumerate}
Here, we show the technical details required to define these invariants.
We specify for each $\nu \in \TT$, for each endpoint $u$ of the spine, a \emph{designated face}:
\begin{restatable}{lemma}{uniquePathFace}
\label{lemma:unique_path_face}
Let $\nu \in \TT$ be a path cluster. Each boundary vertex $u$ (resp. $v$) is incident to a unique face $\des_u(\nu)$ (resp. $\des_v(\nu)$) of $G[\nu]$.
Moreover, all edges in $G$ that are not in $G[\nu]$ are contained in either $\des_u(\nu)$ or $\des_v(\nu)$.\footnote{Formally, we can say that an edge, vertex, or face in $G$ is contained in face $f$ of a subgraph $G'$, if it is contained in $f$ in any drawing of $G$ that is consistent with the current combinatorial embedding.}
\end{restatable}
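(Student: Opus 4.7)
The plan is to first pin down the unique face at each boundary vertex using the slim-path invariant and the edge-subdivision trick, and then prove the containment of outside edges by structural induction on the top tree.

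For the first claim, I show that $u$ (and symmetrically $v$) has degree exactly one in $G[\nu]$. The slim-path invariant guarantees that the only spanning-tree edge of $\nu$ incident to the boundary vertex $u$ is the spine edge at $u$. The definition of $G[\nu]$ for a path cluster explicitly excludes non-tree edges incident to boundary vertices. Moreover, by the edge-subdivision trick every vertex carrying a non-tree edge in $G$ has degree exactly two; since $u$ is a boundary vertex it already has tree-degree at least two (one spine edge inside $\nu$ and at least one tree edge in $\T\setminus\nu$), so $u$ carries no non-tree edge of $G$ at all. Hence the only edge of $G[\nu]$ incident to $u$ is the spine edge, so $u$ is a leaf of $G[\nu]$, and a leaf is incident to a unique face in any planar embedding; this face is taken as $\des_u(\nu)$.

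For the second claim, I would induct on the structure of $\TT$. Since path clusters arise either as leaves of $\TT$ or as four-way merges, and the analogous statement for point clusters (a single designated face at the sole boundary vertex containing all outside edges, following from Corollary~\ref{corollary:ordering}) is proved by a parallel induction over the other merge types, this suffices. The base case is immediate: if $\nu$ is a leaf of $\TT$, then $G[\nu]$ is a single tree edge with one face, coinciding with both $\des_u$ and $\des_v$. For the inductive step, let $\nu$ arise from a four-way merge at a central vertex $m$ of path subclusters $\mu_1$ (with boundary $\{u, m\}$) and $\mu_2$ (with boundary $\{v, m\}$) together with at most two point clusters $\alpha, \beta$. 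By induction, every edge of $G\setminus G[\mu_1]$ lies in $\des_u(\mu_1)\cup\des_m(\mu_1)$, and analogously for the other children. The embedding-respecting rule, together with the condition that the two point clusters are not adjacent around $m$, forces the cyclic order at $m$ to be $\mu_1$-arc, $\alpha$-arc, $\mu_2$-arc, $\beta$-arc. Gluing in the other children subdivides $\des_m(\mu_1)$ locally at $m$ into two pieces, one on each side of the two spine edges meeting at $m$; on the $\alpha$-side, the $m$-local pieces of $\des_m(\mu_1)$, $\des_m(\alpha)$ and $\des_m(\mu_2)$ fuse into a single face of $G[\nu]$, and symmetrically on the $\beta$-side. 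Tracing these fused faces outward along the new spine from $m$, one reaches each of the leaves $u$ and $v$ of $G[\nu]$, so the unique face at $u$ (resp.\ $v$) equals the fusion of the appropriate local pieces. Combining the inductive hypotheses, every edge of $G\setminus G[\nu]$ is then contained in $\des_u(\nu)\cup\des_v(\nu)$.

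The main technical obstacle is the face bookkeeping at $m$: one must verify carefully that the local pieces of the children's designated faces fuse exactly as claimed into two outer faces of $G[\nu]$, and that each of these fused faces is indeed reached by one of the boundary leaves $u$ or $v$. This relies on the circular ordering supplied by the embedding-respecting property and on the fact that the point clusters $\alpha,\beta$ are entirely contained within the corresponding local pieces of $\des_m(\mu_1)$. Degenerate cases (missing point clusters, or $\des_u(\nu)=\des_v(\nu)$) need to be handled separately but follow by the same argument.
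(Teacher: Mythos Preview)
Your proof of the first claim (that each boundary vertex has degree one in $G[\nu]$ and is therefore incident to a unique face) is essentially identical to the paper's argument, just spelled out in more detail.

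For the second claim, however, the paper takes a much more direct route than your structural induction. The paper simply observes that $\T\setminus G[\nu]$ consists of at most two subtrees, one rooted at $u$ and one rooted at $v$; since $u$ and $v$ are leaves of $G[\nu]$, each subtree must lie entirely inside the unique face of $G[\nu]$ at its root. Every edge of $G\setminus G[\nu]$ has at least one endpoint in one of these two subtrees (using the edge-subdivision convention to rule out non-tree edges at the boundary vertices), so every such edge is contained in $\des_u(\nu)$ or $\des_v(\nu)$. This two-line argument completely bypasses the induction, the four-way merge case analysis, and the delicate ``face-fusing at $m$'' bookkeeping that you flag as the main obstacle. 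Your induction is not wrong, and the obstacle you identify can be overcome (the key point being that all new edges added when passing from $G[\mu_1]$ to $G[\nu]$ lie in $\des_m(\mu_1)$, so $\des_u(\mu_1)$ survives unchanged as $\des_u(\nu)$), but it is considerably more work than necessary. The paper's spanning-tree argument is what you want here.
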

\begin{proof}
By definition of our subgraph induced by a path cluster, each boundary vertex $u$ is incident to at most one edge in $G[\nu]$ and therefore incident to a unique face in $G[\nu]$. The graph $T \setminus G[\nu]$ consists of at most two subtrees, rooted at $u$ and $v$. 
Since $u$ and $v$ are incident to exactly one face each, these subtrees must be contained within $\des_u(\nu)$ and $\des_v(\nu)$ respectively.
\end{proof}

\begin{restatable}{lemma}{uniquePointFace}
\label{lemma:unique_point_face}
Let $\nu \in \TT$ be a point cluster with boundary vertex $u$.
The subgraph $G[\nu]$ has a unique face $\des_u(\nu)$ such that all edges in $G$ that are not in $G[\nu]$ are contained in $\des_u(\nu)$.
\end{restatable}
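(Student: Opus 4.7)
The plan is to mimic the proof of Lemma~\ref{lemma:unique_path_face}, adapted to the fact that a point cluster has only one boundary vertex. The first step is to invoke Corollary~\ref{corollary:ordering}: the edges of $G[\nu]$ incident to $u$ form a contiguous interval in the clockwise cyclic order of edges at $u$. In the sub-embedding of $G[\nu]$ inherited from the combinatorial embedding of $G$, the complement of this interval is a single ``gap'', corresponding to one corner of $u$ in $G[\nu]$ that is incident to exactly one face. I take $\des_u(\nu)$ to be this face; this settles the existence part of the statement, and also its uniqueness as soon as I show below that at least one external edge is drawn in $\des_u(\nu)$ while no external edge is drawn in any other face of $G[\nu]$.

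The second step is a tree-connectivity argument. Because $u$ is the only boundary vertex of $\nu$, the subgraph $\T \setminus G[\nu]$ of the spanning tree is a single subtree attached to $G[\nu]$ solely through the tree-edges at $u$ that lie in the gap interval. These tree-edges leave $u$ into the gap corner, so the entire external subtree, together with all its vertices and tree-edges, is drawn inside $\des_u(\nu)$. This is exactly the argument given for Lemma~\ref{lemma:unique_path_face}, specialised from two boundary vertices down to one.

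The final step is to handle non-tree edges of $G \setminus G[\nu]$. Any non-tree edge at $u$ that was removed by the definition of $G[\nu]$ (i.e., not among $e_1,\ldots,e_l$) emerges from $u$ into the gap corner, hence into $\des_u(\nu)$. Any non-tree edge with an endpoint $y \in V \setminus \nu$ has $y$ already placed inside $\des_u(\nu)$ by the previous step, and by planarity together with the consistency of the combinatorial embedding, the curve representing such an edge must lie in the same face of $G[\nu]$ as $y$, namely $\des_u(\nu)$. The main obstacle I anticipate is not the combinatorial content of these steps but a careful formalisation of what it means for an edge or vertex of $G$ to be ``contained'' in a face of the sub-embedding $G[\nu]$; this is exactly what the footnote attached to Lemma~\ref{lemma:unique_path_face} pins down in terms of drawings consistent with the embedding, so I would simply reuse that convention.
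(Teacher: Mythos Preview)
Your proposal is correct and follows the same approach as the paper: both rest on Corollary~\ref{corollary:ordering}. The paper's proof is a single sentence (``This follows immediately from Corollary~\ref{corollary:ordering}''), whereas you spell out the intermediate steps---identifying the gap corner, placing the external subtree of $\T$ inside it, and then handling the remaining non-tree edges---that the paper leaves implicit.
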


\begin{proof}
This follows immediately from Corollary~\ref{corollary:ordering}.
\end{proof}

\begin{restatable}{corollary}{containment}
\label{cor:contaiment}
Let $\nu \in \TT$ have a boundary vertex $u$, let $\mu$ be a descendent of $\nu$ and $x$ be the boundary vertex of $\mu$ closest to $u$ in $T$. Then $\des_u(\nu) \subseteq \des_x(\mu)$.
\end{restatable}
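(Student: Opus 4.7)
I plan to case-split on whether $u$ lies in the vertex set of $\mu$.

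If $u \in \mu$: since $u$ is a boundary vertex of $\nu$, it is incident to some tree edge $e \in T \setminus \nu \subseteq T \setminus \mu$, so $u$ is also a boundary vertex of $\mu$, giving $x = u$. The claim reduces to $\des_u(\nu) \subseteq \des_u(\mu)$. I will first note that $G[\mu] \subseteq G[\nu]$ (the edge-exclusion rules at boundary vertices only tighten as we pass from $\nu$ to the smaller cluster $\mu$), so every face of $G[\nu]$ is contained in a unique face of $G[\mu]$. I then exhibit the external tree edge $e$ as a common witness: by the reasoning used in Lemma~\ref{lemma:unique_path_face} (respectively Lemma~\ref{lemma:unique_point_face}), $e$ belongs to the subtree of $T$ hanging off the cluster at $u$, and this subtree lies inside the designated face at $u$; thus $e$ is contained in both $\des_u(\nu)$ and $\des_u(\mu)$. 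Since faces of $G[\mu]$ are pairwise disjoint, the face of $G[\mu]$ containing $\des_u(\nu)$ must coincide with $\des_u(\mu) = \des_x(\mu)$.

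If $u \notin \mu$: by the choice of $x$ as the boundary vertex of $\mu$ closest to $u$ in $T$, the tree path from $u$ to any vertex of $\mu$ enters $\mu$ through $x$, so $u$ itself lies in the subtree of $T \setminus \mu$ hanging off $x$. By the reasoning in the proof of Lemma~\ref{lemma:unique_path_face}, this whole subtree is contained in the open region $\des_x(\mu)$, so the point $u$ lies in the interior of $\des_x(\mu)$. A sufficiently small neighborhood of $u$ in the plane is therefore disjoint from $G[\mu]$ and contained in $\des_x(\mu)$. Since $\des_u(\nu)$ is a face of $G[\nu]$ incident to $u$, it contains points in this neighborhood, so $\des_u(\nu)$ meets $\des_x(\mu)$. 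Being connected and disjoint from $G[\mu]$, $\des_u(\nu)$ must lie entirely inside the single face of $G[\mu]$ it meets, namely $\des_x(\mu)$.

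The main obstacle I anticipate is the first case, where a point cluster $\mu$ may have several faces of $G[\mu]$ incident to the boundary vertex $u$, so one cannot simply appeal to ``the face at $u$''. The maneuver that resolves this is producing a concrete common witness (the external tree edge $e$ in Case~1, the point $u$ itself in Case~2) that simultaneously lies in $\des_u(\nu)$ and $\des_x(\mu)$; disjointness of faces of $G[\mu]$ then forces the unique face containing $\des_u(\nu)$ to be $\des_x(\mu)$.
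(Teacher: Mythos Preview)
The paper states this as a corollary of Lemmas~\ref{lemma:unique_path_face} and~\ref{lemma:unique_point_face} without giving a proof; your argument is a correct and careful unpacking of exactly that implication, using the same two lemmas as the only substantive input. Your case split on whether $u\in\mu$ and the ``common witness'' device (the external tree edge, respectively the point $u$ itself) are the natural way to make the corollary explicit, so there is nothing to compare.
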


\noindent
Lemmas~\ref{lemma:unique_path_face} and~\ref{lemma:unique_point_face} inspire the following definition:
for all $\nu \in \TT$, for each boundary vertex $u$ (or $v$) of $\nu$, there exists a unique face $f$ which we call its \emph{designated face} $\des_u(\nu)$ (or $\des_v(\nu)$). 
Intuitively, we are only interested in biconnected components that are edge-incident to $\des_u(\nu)$ or $\des_v(\nu)$.
Let for a node $\nu$ with boundary vertex $u$ a biconnected component $B$ be edge-incident to $\des_u(\nu)$. Let $\mu$ be a descendant of $\nu$ and $B \subseteq G[\mu]$ then, by Corollary~\ref{cor:contaiment}, $B$ must be edge-incident to $\des_x(\mu)$ where $x$ is the boundary vertex of $\mu$ closest to $u$. Formally, in this scenario, we define the \emph{projected face} of $u$ in $\mu$ as $\proj_u(\mu) = \des_x(\mu)$.

\textbf{Relevant and alive biconnected components.}
Consider for a cluster $\nu$, a biconnected component $B$ of the induced subgraph $G[\nu]$.  We say that $B$ is \emph{relevant} with respect to $\nu$ if $B$ is vertex-incident to the spine $\pi(\nu)$.
We say that $B$ is \emph{alive} with respect to a face $f$ in $G[\nu]$ if $B$ is edge-incident to $f$. 
We denote by $BC(\nu, f)$ the set of biconnected components in the induced subgraph $G[\nu]$ that are relevant with respect to $\nu$ and alive with respect to $f$.
Intuitively, we want to keep track of the relevant and alive components (with respect $\des_u(\nu)$ or $\des_v(\nu)$). To save space, we store only the relevant biconnected components of $\nu$ that are not in its children. Formally (Figure~\ref{fig:invariant_one_large}), we define an invariant:

\begin{restatable}{invariant}{relevantBicomp}
\label{inv:relevant_bicomp}
For each cluster $\nu \in \TT$ (apart from the root) with children $\mu_1,  \mu_2, \ldots 
\mu_s$ where $u$ is a boundary vertex of $\nu$, we store a unique object for each element in: \vspace{-0.3cm} \[
BC^*_u(\nu) := BC(\nu, \des_u(\nu)) \setminus {\textstyle\bigcup_{i=1}^s} BC(\mu_i,\proj_u(\mu_i)).\]
\end{restatable}

\begin{figure}[t]
  \centering
  \includegraphics{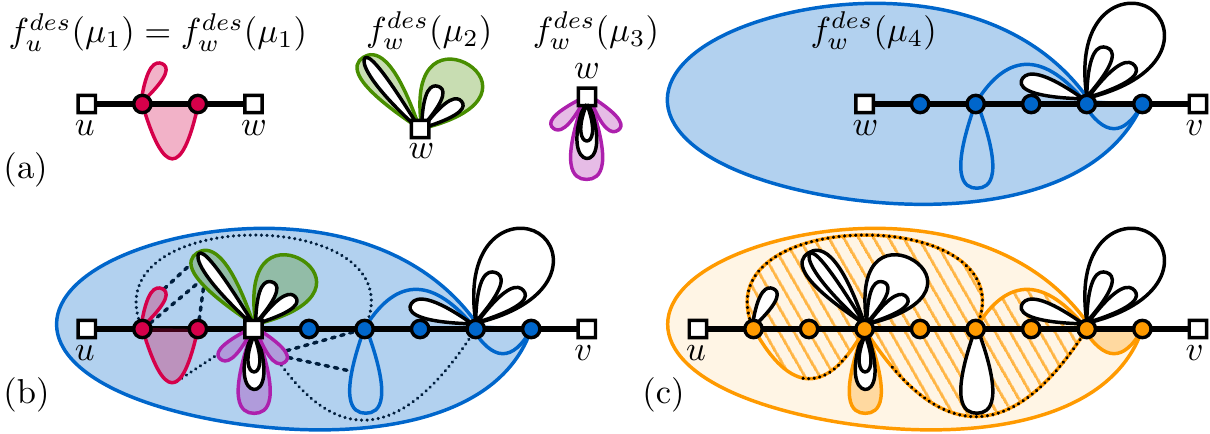}
  \caption{
    (a) Nodes $\mu_i$ with faces $\des_u(\mu_1)$ and $\des_w(\mu_i)$.
    We color $BC_u(\mu_1, \des_u(\mu_1))$ and $BC_w(\mu_i, \des_w(\mu_i))$.
    (b) The four-way merge introduces a new node $\nu$. Edges in $G[\nu]$ that are not the induced subgraph of $G[\mu_j]$ for $j \in \{1,2,3,4\}$ are dashed/dotted. 
    (c) Every dotted edge is part of a new biconnected component $B \in BC^*_u(\nu)$ which we show as tiled.
  }
  \label{fig:invariant_one_large}
\end{figure}

\noindent
Storing biconnected components in this way does not make us lose information:

\begin{restatable}{lemma}{uniquedescendent}
\label{lemma:unique_descendent}
Let $\nu \in \TT$ with boundary vertex $u$ and $B \in BC(\nu, \des_u(\nu))$. There exists a unique node $\mu$ in $\TT(\nu)$ where: 
$B \in BC^*_x(\mu)$ and $x$ is the closest boundary vertex of $\mu$ to $u$.
\end{restatable}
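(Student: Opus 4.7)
The plan is to establish both existence and uniqueness simultaneously by structural induction on $\TT(\nu)$, identifying each biconnected component with its edge set so that membership in $BC(\cdot,\cdot)$ at different clusters refers unambiguously to the same object. The base case is immediate: if $\nu$ is a leaf of $\TT$, it has no children, so Invariant~\ref{inv:relevant_bicomp} collapses to $BC^*_u(\nu)=BC(\nu,\des_u(\nu))\ni B$, and we take $\mu=\nu$, $x=u$.

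For the inductive step, with children $\mu_1,\ldots,\mu_s$, the defining set difference gives two cases. Either $B \in BC^*_u(\nu)$ and we take $\mu = \nu$, or else $B \in BC(\mu_i,\proj_u(\mu_i))$ for some $\mu_i$. In the latter case, writing $y_i$ for the boundary of $\mu_i$ closest to $u$ in $T$, the definition of $\proj$ gives $\proj_u(\mu_i) = \des_{y_i}(\mu_i)$, so I would apply the inductive hypothesis to $\mu_i$ viewed as a root with boundary $y_i$, obtaining a descendant $\mu \in \TT(\mu_i)$ with $B \in BC^*_x(\mu)$ for $x$ the boundary of $\mu$ closest to $y_i$. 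To finish the existence half, I would note that any simple $T$-path from a vertex of $\mu$ to $u$ must exit $\mu_i$ through $y_i$, so the boundary of $\mu$ closest to $y_i$ is also the one closest to $u$.

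For uniqueness, suppose two candidates $\mu \neq \mu'$ both satisfy the condition. If one is a strict descendant of the other, say $\mu' \in \TT(\mu_j)$ with $\mu_j$ a child of $\mu$, then the edges of $B$ lie in $G[\mu'] \subseteq G[\mu_j]$. Maximality of $B$ in $G[\mu]$ transfers down to $G[\mu_j]$, and Corollary~\ref{cor:contaiment} pushes the face incidence from $\des_x(\mu)$ into $\des_{y_j}(\mu_j) = \proj_x(\mu_j)$; together this places $B$ in $BC(\mu_j,\proj_x(\mu_j))$, contradicting $B \in BC^*_x(\mu)$. If instead $\mu$ and $\mu'$ are incomparable in $\TT$, then each edge of $G$ is attached to a unique leaf of $\TT$, so $E(G[\mu])$ and $E(G[\mu'])$ are disjoint and $B$ cannot sit inside both.

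The main obstacle I anticipate is the bookkeeping of the ``alive'' and ``relevant'' conditions across levels. Concretely, most of the care will go into checking, via Corollary~\ref{cor:contaiment} and the explicit description of $G[\nu]$ for point and path clusters on page~\pageref{inducedII}, that whenever $B$'s edges all lie in a single child $\mu_i$, incidence of $B$ with $\des_u(\nu)$ transfers to incidence with $\des_{y_i}(\mu_i)$, and that $B$'s contact with $\pi(\nu)$ propagates to contact with $\pi(\mu_i)$ through the shared vertex of the particular merge. This is where a short case split across point, end, and four-way merges will have to live in the final write-up.
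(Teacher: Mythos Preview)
Your proposal is correct and follows essentially the same approach as the paper. Both arguments obtain existence directly from the set-difference defining $BC^*_u(\nu)$ and recurse into a child, and both reduce uniqueness first to a single root-to-leaf chain (via edge-disjointness of sibling induced subgraphs) and then derive a contradiction from the definition of $BC^*$ in the ancestor--descendant case; the paper runs that last contradiction bottom-up while you run it top-down, and you are more explicit about Corollary~\ref{cor:contaiment} and the relevance transfer that the paper leaves implicit.
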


\begin{proof}
Each $B' \in BC(\nu, \des_u(\nu))$ is a maximal set of edges that are part of a biconnected component in $G[\nu]$. 
We denote for every descendant $\mu_i$ of $\nu$ by $x_i$ the boundary vertex of $\mu$ closest to $u$. 
Per definition, either $B \in BC^*_u(\nu)$, or $B \in BC_{x_i}(\mu_i, \des_{x_i}(\mu_i))$ for some child $\mu_i$ of $\nu$. 
Thus, there exists a descendant $\mu_j$ of $\nu$ with $B \in BC^*_{x_j}(\mu_j)$. We show it is unique: 

Per definition, for every pair of children $\mu_i, \mu_j$, the graphs $G[\mu_i]$ and $G[\mu_j]$ share no edges. Thus, there exists at most one root-to-leaf path in $\TT$ of nodes $\mu_\ell$ where $B \in G[\mu_\ell]$.

Suppose for the sake of contradiction that there are two nodes $\mu_a, \mu_b \in \TT$ where $B \in BC^*_{x_a}(\mu_a)$ and $B \in BC^*_{x_b}(\mu_b)$. Without loss if generality we assume that $\mu_a$ is an ancestor of $\mu_b$. 
However, $B \in BC^*_{x_b}(\mu_b)$ implies that for the parent $\mu_c$ of $\mu_b$, either $B \in BC(\mu_c, \des_{x_c}(\mu_c))$, or that $B$ is contained in some larger biconnected component $B' \in BC(\mu_c, \des_{x_c}(\mu_c))$. Recursive application of this argument creates a contradiction with the assumption that $B \in BC^*_{x_a}(\mu_a)$, or that  $B$ is a maximal biconnected set in $G[\nu]$.
\end{proof}

Throughout this paper we show that for each cluster $\nu \in \TT$ with boundary vertex $v$, the set $BC^*_v(\nu)$ contains constantly many elements. 
Moreover, the root vertices $(u, v)$ are biconnected at the root $\mu$ if and only if they share a biconnected components in $BC^*_u(\mu)$. 
In this section, we define two additional invariants so that we can maintain Invariant~\ref{inv:relevant_bicomp} in $O(\log n)$ additional time per split and merge. This will imply Theorem~\ref{thm:main_full}.
\subsection{The data structure}
Before we specify our data structure, we first define some additional concepts and notation.
The core of our data structure is a slim-path top tree $\TT$ on $G$, that supports the expose operation in $O(\log n)$ splits and merges and the meet operation in both the spanning tree $\T$ and its cotree $\D$ in $O(\log n)$ time. We add the following invariant for $\TT$ where each:

\begin{restatable}{invariant}{upwardpointers}
\label{inv:upwardpointers}
\begin{enumerate}[(a), ref={\ref{inv:upwardpointers}(\alph*)}, noitemsep, nolistsep]
    \item\label{inv:upwardpointers-boundary-parent} node $\nu$ has pointers to its boundary vertices and parent node.
    \item\label{inv:upwardpointers-path}  path cluster $\nu$ stores: the length of  $\pi(\nu)$ and its outermost spine edges.
    \item\label{inv:upwardpointers-point}  point cluster $\nu$ stores: the number of tree edges in $\nu$ incident to the boundary vertex.
    \item\label{inv:upwardpointers-lca} $x \in V$ points to the lowest common ancestor in $\TT$ where $x$ is a boundary vertex.  
\end{enumerate}
\end{restatable}
\noindent
Finally we add one final invariant which uses three additional concepts: slices of biconnected components, index orderings on the spine and an orientation on edges incident to a spine.

\textbf{Biconnected component slices.}
For any node $\nu \in \TT$ and any biconnected component $B\subseteq G[\nu]$, its \emph{slice} is the interval $B \cap \pi(\nu)$ (which may be empty, or one vertex):

\begin{restatable}{lemma}{endpoints}
\label{lemma:endpoints}
Let $\nu \in \TT$.
For all (maximal) biconnected components $B$ in $G[\nu]$, if $B \cap \pi(\nu)$ is not empty, it is a path (possibly consisting of a single vertex).
\end{restatable}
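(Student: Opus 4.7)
The plan is to argue by \emph{convexity of blocks along tree paths}: whenever two vertices $a,c$ of the spine lie in the same maximal biconnected component $B$ of $G[\nu]$, the entire subpath of $\pi(\nu)$ between them also lies in $B$. Once this is established, $B\cap\pi(\nu)$ is closed under taking spine-subpaths, and since $\pi(\nu)$ is itself a path in $\T$, the intersection must be a single (possibly trivial) subpath.

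First I would dispense with the trivial case: if $\nu$ is a point cluster, then $\pi(\nu)$ is a single vertex $u$, so $B\cap\pi(\nu)$ is either empty or $\{u\}$ and there is nothing to prove. If $\nu$ is a path cluster and $|B\cap\pi(\nu)|\le 1$ we are done; otherwise pick any two distinct vertices $a,c\in B\cap\pi(\nu)$ and consider the unique subpath $\pi_{ac}\subseteq\pi(\nu)\subseteq\T$ connecting them in the spanning tree. Since $\pi(\nu)$ is a path of tree edges and all tree edges with both endpoints in $\nu$ belong to $G[\nu]$, the path $\pi_{ac}$ is a simple path in $G[\nu]$ from $a$ to $c$.

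The core observation is the well-known structural property of the block-cut tree: if $a$ and $c$ lie in the same block $B$ of $G[\nu]$, then every simple $a$-to-$c$ path in $G[\nu]$ uses only edges of $B$. The reason is that any such simple path must traverse the unique sequence of blocks on the bc-tree path between $a$ and $c$; when that sequence collapses to the single block $B$, the path cannot leave $B$ without revisiting the cut-vertex it used to exit. Applying this to the tree subpath $\pi_{ac}$ immediately yields that every edge of $\pi_{ac}$ is an edge of $B$, hence every internal vertex of $\pi_{ac}$ lies in $B$. Since this holds for an arbitrary pair $a,c\in B\cap\pi(\nu)$, the intersection is convex along $\pi(\nu)$ and therefore a subpath.

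The only subtle point I expect to double-check is the edge case where $B$ is a single bridge: here $B$ has exactly two vertices, both necessarily endpoints of a tree edge; either both are on the spine (in which case $B\cap\pi(\nu)$ is that single edge) or only one is (in which case $B\cap\pi(\nu)$ is a single vertex). Both cases satisfy the statement, so no special handling is required. The main obstacle, if any, is just to cite or briefly justify the bc-tree fact that simple paths between two vertices of a common block stay inside that block; this is standard, so the proof should be short.
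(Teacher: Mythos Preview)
Your proposal is correct and follows the same high-level strategy as the paper: both arguments establish that $B\cap\pi(\nu)$ is \emph{convex} along the spine by showing that whenever two spine vertices lie in $B$, so does every spine vertex between them.

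The only difference is in how that convexity step is justified. You invoke the standard block--cut tree fact that any simple path between two vertices of a common block stays inside that block, and then apply it to the tree subpath $\pi_{ac}$. The paper instead argues directly from Menger's theorem: it picks the first and last spine vertices $a,b$ in $B$, takes two internally vertex-disjoint $a$--$b$ paths, and checks by hand that every internal spine vertex $c$ between them cannot separate $a$ from $b$ and is therefore in $B$. Your route is slightly more black-box but cleaner; the paper's is more self-contained. Either way the proof is short and the edge cases you flag (point clusters, bridges) are handled correctly.
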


\begin{proof}
Consider a traversal of $\pi(\nu)$, the first and last vertex $a$ and $b$ of $\pi(\nu)$ that are in $B$ and their subpath $\pi'$.
Since $a$ and $b$ are biconnected, there exists two internally vertex-disjoint paths from $a$ to $b$. Any internal vertex $c\in \pi'$ can be on at most one of these paths, and so can not separate $a$ from $b$. If $c$ is on one of the paths it is clearly biconnected to both $a$ and $b$. Otherwise let $a'$,$b'$ be the first vertex on the path from $c$ to $a$ (resp. $b$) that is on the cycle. Deleting any one vertex leaves  $c$ connected to at least one of $a'$ and $b'$, and through that vertex $c$ is connected to both $a$ and $b$.
Thus, any $c$ on $\pi'$ must be in $B$. 
\end{proof}

\textbf{Index orderings.}
For any node $\nu \in \TT$ and spine vertex $w \in \pi(\nu)$, let $w$ be the $i$'th vertex on $\pi(\nu)$.  We say that $i$ is the \emph{index} of $w$ in $\pi(\nu)$ and show:

\begin{restatable}{lemma}{indexing}
\label{lemma:indexing}
Given Invariant~\ref{inv:upwardpointers}, a path cluster $\nu$ and a vertex $w\in\pi(\nu)$, we can compute for every path cluster $\beta$ with $\pi(\beta)\subseteq\pi(\nu)$ the index of $w$ in $\pi(\beta)$ in $O(\log n)$ total time.
\end{restatable}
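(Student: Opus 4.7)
The plan is to exploit the $O(\log n)$ depth of $\TT$: every path cluster $\beta$ with $w \in \pi(\beta) \subseteq \pi(\nu)$ lies on the ancestor chain in $\TT$ of any leaf whose tree-edge lies on $\pi(\nu)$ and is incident to $w$, so there are at most $O(\log n)$ such path clusters, and we aim to process each in amortized $O(1)$ time.

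First I would locate a starting leaf $\lambda$. If $w$ is a boundary vertex of $\nu$, the outermost spine edge of $\nu$ at $w$ is recorded at $\nu$ by Invariant~\ref{inv:upwardpointers-path}, and the corresponding leaf of $\TT$ is reached in $O(1)$ through an edge-to-leaf pointer. Otherwise, Invariant~\ref{inv:upwardpointers-lca} supplies the LCA cluster $\alpha_w$; since every tree-edge at $w$ lies in $\alpha_w$, we have $\alpha_w \subseteq \nu$, and a descent from $\alpha_w$ guided by the outermost spine edges stored at each visited path cluster reaches a suitable $\lambda$ in $O(\log n)$ time.

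I would then walk from $\lambda$ up through parent pointers (Invariant~\ref{inv:upwardpointers-boundary-parent}) to $\nu$, recording the chain of encountered ancestors. Next, I would walk back down the same chain, maintaining at every path-cluster $\beta_c$ on the chain two values: the boundary vertex $o_c$ of $\beta_c$ closer to $u_\nu$ in $\T$, and the index $i_c$ of $w$ along $\pi(\beta_c)$ measured from $o_c$. At the top, $o_\nu = u_\nu$ and $i_\nu$ is computed as the descent progresses. When descending from $\beta_p$ into the next path-cluster child $\beta_c$ on the chain via a four-way merge at central vertex $m$, comparing the stored boundary vertices of $\beta_p$ and $\beta_c$ reveals in $O(1)$ whether $\beta_c$ is the $o_p$-side child (so $o_c = o_p$ and $i_c = i_p$) or the far-side child (so $o_c = m$ and $i_c = i_p - |\pi(\beta_{\text{other}})|$), where $\beta_{\text{other}}$ is the sibling path-cluster of $\beta_c$ and its spine length is recorded by Invariant~\ref{inv:upwardpointers-path}. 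Point-cluster ancestors appearing in the chain from end or root merges are traversed without update. The base of the down-walk is $\lambda$, at which $i_\lambda \in \{0,1\}$ is known directly from $\lambda$'s boundary vertices, giving a sanity check on the computation.

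The main obstacle I foresee is the careful bookkeeping of orientations, since each path cluster stores its spine endpoints without an inherent direction, and the canonical orientation (from the boundary closer to $u_\nu$) must be propagated consistently down the chain in order for the stored lengths to combine correctly. This is purely mechanical given the boundary-vertex and length information supplied by Invariants~\ref{inv:upwardpointers-boundary-parent} and~\ref{inv:upwardpointers-path}, so every step is $O(1)$; since the overall walk visits $O(\log n)$ ancestors, the total running time is $O(\log n)$, as claimed.
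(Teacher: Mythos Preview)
Your overall strategy matches the paper's: locate a spine edge $e_w$ incident to $w$, jump to its leaf in $\TT$, and traverse the $O(\log n)$ path-cluster ancestors up to $\nu$, using the stored spine lengths (Invariant~\ref{inv:upwardpointers-path}) to update the index across each four-way merge. Two points deserve comment.

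First, your leaf-finding is more elaborate than necessary. When $w$ is internal to $\pi(\nu)$, the cluster supplied by Invariant~\ref{inv:upwardpointers-lca} is exactly the (result of the) unique four-way merge whose central vertex is $w$; one of its path-cluster children has $w$ as a boundary vertex, and Invariant~\ref{inv:upwardpointers-path} on that child hands you $e_w$ in $O(1)$ time. No $O(\log n)$ descent is needed (though yours is still within budget).

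Second, and this is a genuine gap, the top-down direction of your index computation cannot start. Your recurrence $i_c = i_p - |\pi(\beta_{\text{other}})|$ requires $i_p$ before $i_c$, but at the top of the chain $i_\nu$ is precisely the unknown you are after; the sentence ``$i_\nu$ is computed as the descent progresses'' has no base case. The paper walks \emph{bottom-up} from the leaf, where $i_\lambda\in\{0,1\}$ is known, and uses the inverted rule: at each four-way merge of path clusters $\alpha_1,\alpha_2$ into $\beta$, the index of $w$ in $\pi(\beta)$ equals the index in the child containing $w$, plus $|\pi(\alpha_1)|$ if that child is $\alpha_2$. This yields the index in every ancestor directly, one addition per level. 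Your ``sanity check'' at $\lambda$ should be the anchor, not a check.

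Finally, your remark about skipping point-cluster ancestors is moot: since $e_w$ lies on $\pi(\nu)$, every cluster on the leaf-to-$\nu$ chain has $e_w$ on its spine and is therefore a path cluster produced by a four-way merge; there are no end or point merges to skip.
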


\begin{proof}
By Invariant~\ref{inv:upwardpointers-boundary-parent}, we have a pointer to the boundary vertices $u$ and $v$ of $\nu$. 
We claim we can obtain an edge $e_w$ of $\pi(\nu)$ incident to $w$ in constant time.
Indeed, in the special case where $w$ is either $u$ or $v$ then by Invariant~\ref{inv:upwardpointers-path} we have a pointer to the spine edge $e_w$ incident to $w$. 
If $w$ is an internal vertex of $\pi(\nu)$, there is a unique four-way merge where $w$ is the central vertex.
By Invariant~\ref{inv:upwardpointers-lca}, we obtain a pointer to this merge in $O(1)$ time. $w$ must be a boundary vertex of a path cluster in this merge and so we obtain $e_w$.

The edge $e_w$ corresponds to a leaf vertex $\mu$ in $\TT$ and the  index of $w$ in $\mu$ is either $0$ or $1$. Since $e_w$ is part of the spine, the path from $\mu$ to $\nu$ in $\TT$ consists of only four-way merges (indeed, any point merge or end merge would result in a spine without $e_w$).
So, consider each four-way merge with path clusters $\alpha_1, \alpha_2$ into a path cluster $\beta$, on this path in $\TT$. 
If $w \in \alpha_1$ then the index of $w$ in $\pi(\beta)$ is equal to the index of $w$ in $\pi(\alpha_1)$. Otherwise, this index increases by the length of $\pi(\alpha_1)$ (which we stored according to Invariant~\ref{inv:upwardpointers}). 
Hence, since $\TT$ has height $O(\log n)$, we obtain the index of $w$ in each $\pi(\beta)$ with $O(\log n)$ additions.
\end{proof}

\noindent
Similarly in a point cluster $\nu$, for each tree edge in $\nu$  incident to the boundary vertex $u$ we  define its \emph{clockwise index} in $\nu$ as its index in the clockwise ordering of the edges around $u$. 

\begin{restatable}{lemma}{clockwise}
\label{lem:cwindex}
  Given Invariant~\ref{inv:upwardpointers}, and an edge $e=(u, x)\in\T$, we can compute the clockwise index of $e$ in every point cluster $\nu\in\TT$ that contains $e$ and has $u$ as boundary vertex, simultaneously, in worst case $O(\log n)$ total time.
\end{restatable}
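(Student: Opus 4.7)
The approach is to walk from the $\TT$-leaf representing $e=(u,x)$ up toward the root of $\TT$, maintaining the clockwise index of $e$ incrementally. First I would locate the leaf cluster $\{e\}$ in $\TT$ in constant time via a pointer stored on $e$, and then ascend through its $O(\log n)$ ancestors. At each step, the work will be $O(1)$, so the total cost is $O(\log n)$.

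The first thing to establish is a structural observation describing exactly which ancestors are point clusters with $u$ as boundary vertex. As we ascend, $u$ remains a boundary vertex of the current cluster until it is first absorbed (i.e., the cluster first contains every tree edge at $u$); above that level $u$ is never again a boundary. Hence the ancestors with $u$ as a boundary form a contiguous prefix of the ancestor-path. Within that prefix, the point-cluster-with-boundary-$u$ ancestors form a contiguous \emph{suffix} of that prefix: the only way a point cluster with boundary $u$ can appear as the parent of a path-cluster ancestor is via an end merge that fuses a path cluster with $u$ on its spine to a point cluster at its other endpoint; once we are in a point cluster at $u$, point merges keep us in a point cluster at $u$, while an end merge at $u$ or a four-way merge (which would make $u$ internal to a path cluster) terminates the chain.

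With the chain identified, I compute the clockwise index inductively. At the first point cluster at $u$ in the chain, either the cluster equals the leaf $\{e\}$ (when $x$ is a leaf of $\T$) or it is the result of the end merge described above; in either case the only tree edge at $u$ inside the cluster is $e$ itself (since by the slim-path invariant the consumed path cluster contributed only its single spine edge at $u$, and the point-cluster child at the other endpoint does not contain $u$ at all by the boundary argument). Thus the initial clockwise index of $e$ is $1$. At each subsequent point merge of sibling point clusters $\mu,\mu'$ (both with boundary $u$) into $\rho$, Corollary~\ref{corollary:ordering} tells us each sibling's tree edges at $u$ occupy a contiguous clockwise interval, and the embedding-respecting rule glues these two intervals together. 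Determining which sibling is the clockwise-earlier one is $O(1)$ from the top-tree merge record; if $e$ lies in the later child, the new index is the old index plus the number of tree edges at $u$ in the earlier child, which is stored by Invariant~\ref{inv:upwardpointers-point}, and otherwise the index is unchanged. The simultaneous computation for all point clusters in the chain is therefore a single bottom-up pass with $O(1)$ arithmetic per level.

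The main obstacle is the clockwise-ordering detection at each point merge: we must know in $O(1)$ which of the two children lies clockwise-before the other around the shared boundary $u$. This information is implicit in any embedding-respecting slim-path top tree, but for the proof I would make it explicit by observing that the top-tree representation can label each merge with a constant-size orientation flag (or derive the flag from the outermost spine edges already recorded by Invariant~\ref{inv:upwardpointers-path} for the neighbouring path clusters of four-way and end merges higher in the chain). Everything else reduces to the counting argument via Invariant~\ref{inv:upwardpointers-point}, and the $O(\log n)$ bound follows from the $O(\log n)$ height of $\TT$.
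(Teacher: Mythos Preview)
Your proposal is correct and follows essentially the same approach as the paper's own proof: both walk up from the leaf $\nu_e$, observe that the point clusters with boundary $u$ containing $e$ form a contiguous subsequence of the ancestor chain, set the index at the lowest such cluster to a fixed base value (you use $1$, the paper uses $0$), and then at each subsequent point merge add the stored count from Invariant~\ref{inv:upwardpointers-point} when $e$ sits in the clockwise-later child. Your write-up is somewhat more explicit than the paper's about why the first point cluster in the chain contains only $e$ at $u$ (via the slim-path invariant and the end-merge structure) and about the need to read off the clockwise order of the two point-merge children in $O(1)$, which the paper simply assumes is available.
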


\begin{proof}
  Starting from the leaf $\nu_e$ in $\TT$ that contains $e$ and traversing up the tree, all such clusters form a contiguous subsequence of the clusters visited. If any such cluster exists, there is a first ancestor $\nu_1$ to $\nu_e$ that is a point cluster, and this must have $u$ as boundary node. The clockwise index of $e$ in $\nu_1$ is $0$. For each proper ancestor $\nu_i$ of $\nu_1$ that is a point cluster with $u$ as boundary node, both children are point clusters, and one of the children ($\nu_{i-1}$) contains $e$. If that is the first child in clockwise order, the clockwise index of $e$ in $\nu_i$ is the same as in $\nu_{i-1}$. Otherwise it is the index in $\nu_{i-1}$ plus the number of tree edges incident to $u$ in the other point cluster child. In each case we compute the clockwise index of $e$ in $\nu_i$ using at most one addition. Since the height of $\TT$ is $O(\log n)$, computing all the clockwise indices of $e$ around $u$ takes $O(\log n)$ time. 
\end{proof}

\textbf{Euler tour paths and endpoint orientations.}
Consider the Euler tour of $\T$ and an embedding of that Euler tour such that the Euler tour is arbitrarily close to the edges in $\T$. 
Each edge $e$ in $G$ that is not in $\T$ must intersect the Euler tour twice. 
We classify each endpoint of $e$ based on where it intersects this Euler tour. Formally, we define (Figure~\ref{fig:edgeclassification_large}):

\begin{figure}[t]
  \centering
  \includegraphics{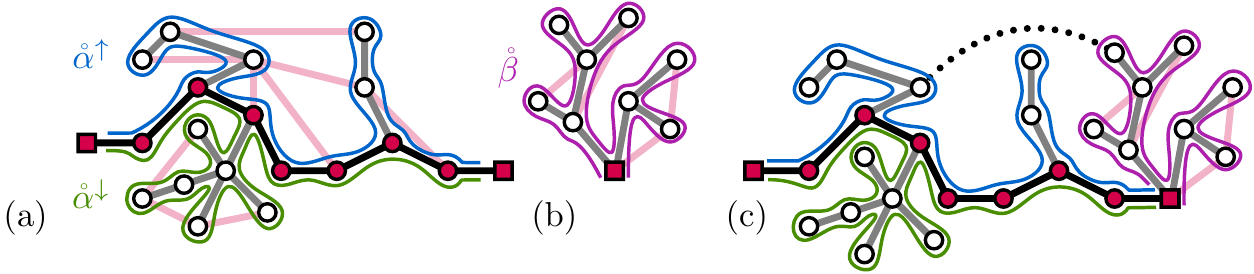}
  \caption{
    (a) A path cluster $\alpha$ with $\pi(\alpha)$ in black and edges in $\T$ in black or grey. We show the tourpaths $\tour{\alpha}^\uparrow$ and $\tour{\alpha}^\downarrow$ in blue and green. 
    (b) A point cluster $\beta$ with the path $\tour{\beta}$. 
    (c) Any edge in $G[\alpha \cup \beta] \setminus (G[\alpha]\cup G[\beta])$ must intersect 
    one of $\{ \tour{\alpha}^\uparrow, \tour{\alpha}^\downarrow \}$ and $\tour{\beta}$.
  }
  \label{fig:edgeclassification_large}
\end{figure}

\begin{definition}
For a point cluster $\nu$ with boundary vertex $u$, we denote by $\tour{\nu}$ its \textbf{tourpath} (the 
segment of the 
Euler tour in $\T$ from $u$ to $u$ that is incident to edges in $G[\nu]$).
\end{definition}

\begin{definition}
For a path cluster $\nu$ with boundary vertices $u$ and $v$. We denote by  $\tour{\nu}^\uparrow$ and $\tour{\nu}^\downarrow$ its two \textbf{tourpaths} (the two paths in the Euler tour in $\T$ from $u$ to $v$).
\end{definition}

\begin{definition}
For any tourpath $\tour{\alpha}$ let $e_1$ be the first edge of $\T$ incident to $\tour{\alpha}$ and $e_2$ be the last edge incident to $\tour{\alpha}$. 
We denote by $\first{\tour{\alpha}}$ the unique face in $G$ (incident to $e_1$) whose interior contains the start of $\tour{\alpha}$. The face $\last{\tour{\alpha}}$  is defined analogously using $e_2$.
\end{definition}

\begin{figure}[h]
  \centering
  \includegraphics{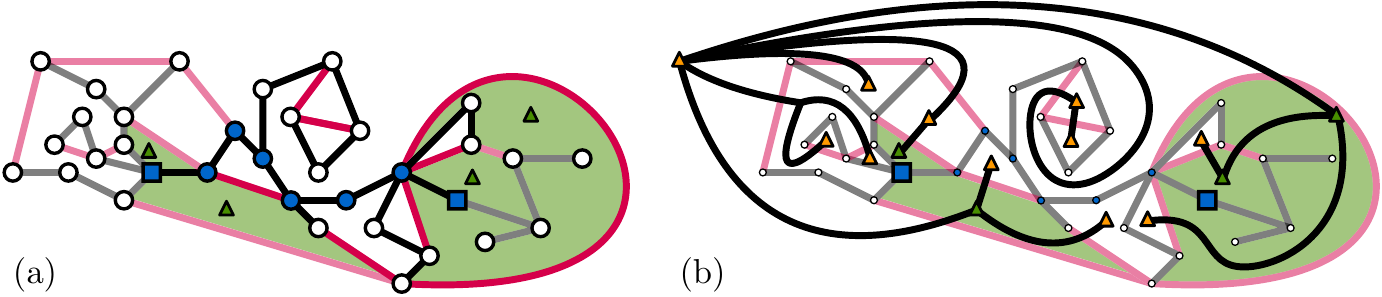}
  \caption{
(a) A graph with the spine of a node $\nu$ shown in blue. We show the faces $(\first{\tour{\alpha}^\uparrow}, \last{\tour{\alpha}^\uparrow}, \first{\tour{\alpha}^\downarrow}, \last{\tour{\alpha}^\downarrow}$ in green.
(b) The cotree $\D$. Vertices are triangles.
  \vspace{-0.5cm}}
  \label{fig:designateddetection_large}
\end{figure}

\begin{observation}
Let $\nu$ be a path cluster with the slim-path property. Any edge in $G[\nu]$ is either an edge in $\T$ or it must intersect one of $\{ \tour{\nu}^\uparrow, \tour{\nu}^\downarrow \}$.
\end{observation}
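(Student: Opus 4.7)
The plan is to identify the union $\tour{\nu}^\uparrow \cup \tour{\nu}^\downarrow$ with the Euler tour of the tree edges of $\nu$, and then to apply the already-noted fact that every non-tree edge of $G$ intersects the Euler tour of $\T$ exactly twice.

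First I would observe that since $\nu$ is a connected subgraph of $\T$ with boundary vertices $u$ and $v$, the tree edges in $\nu$ form a subtree of $\T$ whose unique $u$-to-$v$ path is $\pi(\nu)$. In an infinitesimal drawing of the Euler tour of $\T$ hugging the tree edges, the two visits to $u$ that ``enter'' and ``leave'' the subtree of $\nu$ are precisely the endpoints of two sub-walks in the Euler tour from $u$ to $v$, namely $\tour{\nu}^\uparrow$ and $\tour{\nu}^\downarrow$, which together form a closed curve that traces each tree edge of $\nu$ once on each side and sweeps through every corner between consecutive tree edges at each vertex of $\nu$. The slim-path property is used only to ensure that at the boundary vertices $u$ and $v$ exactly one tree edge of $\nu$ is incident, so that the two tourpaths share endpoints at $u$ and $v$ and together account for all corners of $\nu$'s vertices that are incident to edges of $G[\nu]$.

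Second I would invoke the statement made in the paper just above the Observation: an edge $e\in G$ that is not in $\T$ intersects the Euler tour of $\T$ twice, with the two intersection points lying at (arbitrarily small neighborhoods of) the two endpoints of $e$. Now let $e$ be an edge in $G[\nu]$ that is not in $\T$. Both endpoints of $e$ are vertices of $\nu$, so both of the two guaranteed intersection points of $e$ with the Euler tour of $\T$ lie on the portion of the Euler tour corresponding to $\nu$, which by the first step is exactly $\tour{\nu}^\uparrow\cup\tour{\nu}^\downarrow$. Hence $e$ intersects at least one of $\tour{\nu}^\uparrow$ or $\tour{\nu}^\downarrow$, as claimed.

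The main obstacle is purely an expositional one: being careful that the two tourpaths, which are defined as subpaths of an infinitesimal drawing of the Euler tour of the whole $\T$, cover every corner at every vertex of $\nu$ into which a non-tree edge of $G[\nu]$ can emerge. For internal vertices of $\nu$ this is immediate because the Euler tour sweeps every corner. For the boundary vertices $u$ and $v$ this requires the slim-path property together with the fact (from the characterization of $G[\nu]$ on page~\pageref{inducedII}) that non-tree edges of $G[\nu]$ incident to $u$ or $v$ are excluded from $G[\nu]$ for a path cluster, so we only need the tourpaths to account for the unique spine edge at $u$ and at $v$, which they do by construction.
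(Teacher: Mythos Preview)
The paper states this as an observation without proof, so there is no approach to compare against; your argument is correct and is precisely the natural justification one would supply. Your handling of the boundary vertices is the only delicate point, and you treat it properly by invoking both the slim-path property (so that $u$ and $v$ each have a single tree edge in $\nu$, making $\tour{\nu}^\uparrow\cup\tour{\nu}^\downarrow$ cover the entire $\nu$-portion of the Euler tour) and the definition of $G[\nu]$ for path clusters (so that any non-tree edge of $G[\nu]$ has both endpoints at internal vertices of $\nu$, where every corner is swept by one of the two tourpaths).
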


\noindent
We introduce one last concept. Let $e = (x, y)$ be an edge of $G$ not in $\T$ where $e$ is an edge with one endpoint in $G[\alpha]$ in $G[\beta]$ (for two clusters $\alpha, \beta \in \TT$). We intuitively refer for each endpoint $x$ of $e$, to the tourpath intersected by $e$ 'near' $x$.  
Let $\alpha$ be a path cluster. We say that the endpoint $x \in \alpha$ of $e$ is a \emph{northern} endpoint if $e$ intersects $\tour{\alpha}^\uparrow$ near $x$, and a \emph{southern} endpoint if $e$ intersects $\tour{\alpha}^\downarrow$ near $x$. 
We define biconnected component borders (Figure~\ref{fig:border_large}):

\begin{definition}[Biconnected component \intervals]
Let $B$ be a subset of the edges in $G[\nu]$ that induces a biconnected subgraph such that $B \cap \pi(\nu)$ is a path (or singleton vertex). 
 \begin{itemize}[noitemsep, nolistsep]
     \item Let $\nu$ be a point cluster. Consider the clockwise ordering of edges in $B$ incident to its boundary vertex $u$, starting from $\des_u(\nu)$. The \textbf{\interval} of $B$ is:
     \begin{itemize}[noitemsep, nolistsep]
         \item the vertex $u$ together with its \textbf{eastern border}: the first edge of $B$ in this ordering, and its \textbf{western border}: the last edge of $B$ in this ordering.
     \end{itemize}
     \item Let $\nu$ be a path cluster. Denote by $a$ the `eastmost' vertex of $B \cap \pi(\nu)$ and by $b$ its `westmost' vertex. 
     If $a = b$ then the \interval is empty. Otherwise:
     \begin{itemize}[noitemsep, nolistsep]
         \item the \textbf{eastern border} of $B$ is $a$, together with the first northern and last southern edge of $B$ that is incident to $a$.
         \item the \textbf{western border} of of $B$ is $b$, together with the last northern and first southern edge of $B$ that is incident to $b$.
     \end{itemize}
 \end{itemize}
 \end{definition}
 
 \begin{figure}[h]
  \centering
  \includegraphics{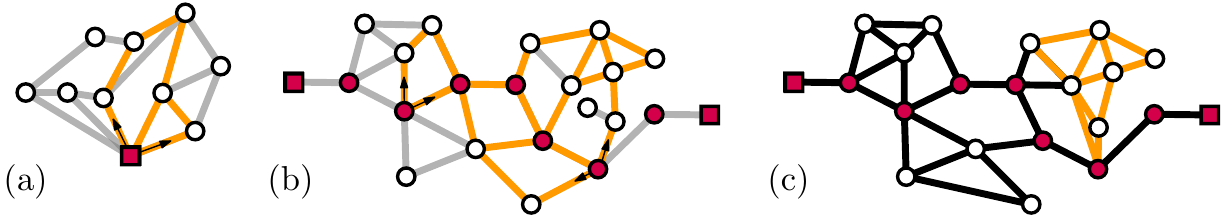}
  \caption{
  Three times a cluster $\nu$ where $\pi(\nu)$ are red vertices. 
  We show in yellow  a set of biconnected edges in $G[\nu]$ and show: (a) a border in a point cluster (b) a border in a path cluster and (c) a set of biconnected edges in $G[\nu]$ that has an empty border. 
  }
   \label{fig:border_large}
\end{figure}

\begin{restatable}{invariant}{storage}
\label{inv:component_storage}
For any $\nu \in \TT$ with boundary vertex $u$, for all $B \in BC_u^*(\nu)$ we store the \textbf{\intervals} of $B$ in $\nu$ and their indices and clockwise indices in $\nu$.
\end{restatable}
\paragraph{Using invariants for biconnectivity.}
Invariant~ \ref{inv:upwardpointers} allows us to not only obtain the meet between vertices, but also the edges of their path to this meet (incident to the meet):

\begin{restatable}{theorem}{meet}
\label{thm:meet}
Given Invariant~\ref{inv:upwardpointers}, let $\nu$ be a path cluster with boundary vertices $u$ and $v$ and $w\not\in\pi(\nu)$ be a vertex in $G[\nu]$. 
We can obtain the meet $m = meet(u, v, w)$ and the last  edge $e^*$ in the path from $w$ to $m$ (in $\T$) in $O(\log n)$ time.  
\end{restatable}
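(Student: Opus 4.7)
The plan is to compute $m$ via the given meet operation, then extract $e^*$ by a localized traversal in $\TT$ around $m$---avoiding \emph{expose}, which would cost $\Theta(\log^2 n)$ under our invariants. First I would apply $\text{meet}(u,v,w)$, taking $O(\log n)$. Because $u$ and $v$ are the endpoints of $\pi(\nu)$ and $w$ is a vertex of $G[\nu]$ not on $\pi(\nu)$, the $\T$-path from $w$ enters $\pi(\nu)$ at a unique internal vertex and, by the defining property of meet, this vertex is $m$; in particular, $m$ is in the interior of $\pi(\nu)$.

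Since $m$ is internal to $\pi(\nu)$, there is a unique four-way merge $\mu_m$ in $\TT(\nu)$ with $m$ as its central vertex, and---exactly as in the proof of Lemma~\ref{lemma:indexing}---the pointer at $m$ from Invariant~\ref{inv:upwardpointers-lca} yields $\mu_m$ in $O(1)$. In $\mu_m$, the two path-cluster children have $m$ as a boundary, so by the slim-path property each contains exactly one tree edge at $m$, namely a spine edge of $\pi(\nu)$. Consequently every off-spine tree edge at $m$, and in particular $e^*$, lies in one of the at most two point-cluster children $\beta_1,\beta_2$ of $\mu_m$. Moreover, $w$ is a non-boundary vertex of $\mu_m$ (its boundaries lie on $\pi(\nu)$, whereas $w$ does not), so every edge incident to $w$ is contained in $G[\mu_m]$, and in fact all of them lie inside the $\beta_i$ that contains $w$'s branch at $m$; call this $\beta$.

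To identify $\beta$ and then $e^*$, I would start at the cluster $\lambda_w$ pointed to by $w$ via Invariant~\ref{inv:upwardpointers-lca} and walk upward using parent pointers (Invariant~\ref{inv:upwardpointers-boundary-parent}) until reaching $\mu_m$, recording the clusters visited. By the previous paragraph, $\lambda_w$ is a descendant of $\beta$, so the cluster visited just before $\mu_m$ is precisely $\beta$. I then descend along the same recorded sequence from $\beta$ toward the leaf, inspecting each merge: (i) at a leaf, its single tree edge is $e^*$; (ii) at a point merge of two point clusters both with $m$ as boundary, continue on the recorded side; (iii) at an end merge that introduces a path cluster $\alpha$ with $m$ as boundary, output the outermost spine edge of $\alpha$ at $m$, available in $O(1)$ from Invariant~\ref{inv:upwardpointers-path}.

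The delicate step is case (iii): once the descent enters such a path cluster $\alpha$, the slim-path invariant guarantees that $G[\alpha]$ has exactly one tree edge incident to $m$, so the $\T$-path from $w$ to $m$ must exit $m$ through this unique edge---regardless of whether $w$ lies on $\pi(\alpha)$ or on a subtree branching off it. Each step of the upward and downward walks is $O(1)$, and since $\TT$ has depth $O(\log n)$, both walks---and therefore the whole procedure, including the single meet call---run in $O(\log n)$ time.
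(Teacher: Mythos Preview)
Your approach is correct, but it takes a genuinely different route from the paper's. The paper does not call the meet operation at all: it performs a \emph{single} upward walk from $w$'s pointer (Invariant~\ref{inv:upwardpointers-lca}) to $\nu$, and every time the walk passes through a cluster that is the result of an end merge it records the boundary vertex of that (point) cluster together with the incident spine edge of its path-cluster child (available from Invariant~\ref{inv:upwardpointers-path}); the last such pair recorded upon reaching $\nu$ is exactly $(m,e^*)$. Thus the paper derives $m$ as a byproduct of the same traversal that finds $e^*$.

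Your version instead first obtains $m$ via the external meet primitive, jumps to the four-way merge $\mu_m$ via $m$'s pointer, and then does an up-then-down walk along the recorded $\lambda_w$-to-$\mu_m$ path with a three-case analysis. Both achieve $O(\log n)$; the paper's single-pass argument is shorter and self-contained, while yours makes the structural role of $\mu_m$ and its point-cluster children more explicit and reuses the meet primitive that is already available. One small completeness point in your case~(iii): your justification treats only the situation $w\in\alpha$, but $w$ may equally lie in the sibling point cluster $\gamma$ with boundary $b\neq m$. Since $m\notin\gamma$ (else $m$ would be a second boundary vertex of $\gamma$), the only tree edge at $m$ in the current cluster is still $\alpha$'s spine edge, so your conclusion holds in that case as well.
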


\begin{proof}
By Invariant~\ref{inv:upwardpointers-lca} we have a pointer from $w$ to the lowest common ancestor $\mu$ of the clusters that contain $w$. Starting from $\mu$ traverse $\TT$ upwards.  Each time the current cluster is an end merge of a point cluster $\alpha$ and a path cluster $\beta$, make a note of the boundary node $m$ and the unique incident edge $e^*$, which is stored for $\beta$ by Invariant~\ref{inv:upwardpointers-path}. When the traversal reaches $\nu$, the last $m$ and $e^*$ noted are the correct values.  Since this traverses a single root path in $\TT$ this takes $O(\log n)$ time.
\end{proof}

 In our later analysis, we show that for each merge, the only edges $e^\circ$ which can be part of new relevant and alive biconnected components are the edges incident to some convenient meets in the dual graph.
 Given such an edge $e^\circ$, we identify a convenient edge $e^*$ of the newly formed biconnected component $B$. We identify the already stored biconnected components $B^* \in BC_u(\nu)$ which contain $e^*$ (these components $B^*$ get `absorbed' into $B$). We use Invariant~\ref{inv:component_storage} to identify all such $B^*$ that contain $e^*$:

\begin{restatable}{theorem}{boundarycontainment}
\label{thm:boundarycontainment}
Let $e^* \in \T$ be an edge incident to a vertex $u$. 
Let $k$ be the maximum over all $u$ and $\nu$ of the number of elements in $BC^*_u(\nu)$.
In $O(k \log n)$ total time we can, for each of the $O(\log n)$ nodes $\nu \in \TT$ that contain $e^*$, for each $B^* \in BC_u^*(\nu)$,  determine if $e^*$ is in between the \interval of $B^*$ in $\nu$.
\end{restatable}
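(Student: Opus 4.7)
The plan is to walk the $O(\log n)$ ancestors of $e^*$'s leaf in $\TT$ that have $u$ as a boundary vertex, batch-compute a single suitably chosen index of $e^*$ in each of them, and then decide the containment predicate against the stored border in $O(1)$ per component $B^*$. The two index lemmas already in place, Lemma~\ref{lem:cwindex} and Lemma~\ref{lemma:indexing}, are designed for exactly this: they deliver the clockwise index around $u$ in all point-cluster ancestors, and the spine index in all path-cluster ancestors, in $O(\log n)$ total time rather than $O(\log n)$ per ancestor.

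First I would enumerate the relevant ancestors. As in the analysis of Lemma~\ref{lem:cwindex}, the ancestors of $e^*$'s leaf that contain $e^*$ and have $u$ as a boundary vertex form a contiguous subsequence of the root path in $\TT$, so using the parent and boundary pointers from Invariant~\ref{inv:upwardpointers-boundary-parent} I visit them in $O(\log n)$ total time.

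For each point-cluster ancestor $\nu$ I use Lemma~\ref{lem:cwindex} once to obtain the clockwise index of $e^*$ around $u$ in every such ancestor, in $O(\log n)$ total. For each $B^* \in BC^*_u(\nu)$, Invariant~\ref{inv:component_storage} supplies the clockwise indices of the eastern and western border edges of $B^*$, so deciding whether $e^*$ lies in that cyclic interval around $u$ is a single comparison. For each path-cluster ancestor $\nu$, the slim-path property forces $e^*$ to be the unique spine edge of $\pi(\nu)$ incident to $u$; consequently $e^*$ sits inside the border of $B^*$ exactly when the spine slice $B^* \cap \pi(\nu)$ extends from $u$ onto its spine neighbour. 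Since $u$ is a boundary vertex its spine index is $0$ or $|\pi(\nu)|-1$, readable from Invariant~\ref{inv:upwardpointers-path}, and the indices of the eastmost/westmost spine vertices $a, b$ of $B^*$ are stored by Invariant~\ref{inv:component_storage}, so the decision again reduces to one comparison.

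Summing, the index computations cost $O(\log n)$ globally, each comparison is $O(1)$, and across the $O(\log n)$ ancestors there are at most $O(k\log n)$ comparisons, yielding the claimed $O(k \log n)$. I expect the main care to lie in the path-cluster case: justifying that a single test against the stored $a$ and $b$ indices really captures "in between the border", without opening up the stored northern/southern border edges, and cleanly handling the degenerate cases where the border is empty ($a=b$) or where $u$ coincides with $a$ or $b$ but the slice does not extend beyond $u$.
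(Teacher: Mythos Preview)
Your proposal is correct and follows essentially the same approach as the paper: invoke Lemma~\ref{lem:cwindex} and Lemma~\ref{lemma:indexing} once to batch-compute the relevant indices of $e^*$ across all $O(\log n)$ ancestors, then spend $O(1)$ comparisons per stored $B^*$ against the indices recorded by Invariant~\ref{inv:component_storage}, for $O(k\log n)$ total. The paper's own proof is terser and does not spell out the point/path case split or the slim-path observation you make for path clusters, but the underlying argument is the same.
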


\begin{proof}
Here, $k$ is the maximal number of elements in $BC^*_u(\nu)$ for each $u$ and $\nu$.
    By Invariant~\ref{inv:component_storage} we know the relevant indices and clockwise indices clockwise of $B^*$, and by Lemma~\ref{lemma:indexing} and Lemma~\ref{lem:cwindex} we can compute the index and clockwise index of $e^*$ in all clusters containing them in $O(\log n)$ time, simultaneously.  For each of the $k$ elements $B\in BC^*_u(\nu)$ it now takes only a constant number of comparisons to determine if $e^*$ is in between the \interval of $B$ in $\nu$, so we can do this for all $O(k\log n)$ such $B$s in worst case $O(k\log n)$ total time.
\end{proof}

 \section{Summary of the remainder of this paper}
 We dynamically maintain a (combinatorial) embedding of some edge-divided graph $G$. 
 We maintain the top tree $\TT$ by Holm and Rotenberg from~\cite{holm2017dynamic} augmented with three invariants.
 All update operations on  the combinatorial embedding in Theorem~\ref{thm:main_full} can be realized by $O(\log n)$ split and merge operations on the top tree (and co-tree). We show how to maintain all three invariants with $O(\log n)$ additional time per merge as follows:

We define $k$ as the maximum over all vertices $u$ and clusters $\nu 
$, of the size of $BC^*_u(\nu)$. 
During each split, a cluster $\nu$ with boundary vertex $u$ is destroyed and we simply delete $BC^*_u(\nu)$ in $O(k)$ time.
Suppose we merge clusters $\alpha$ and $\beta$ to create a cluster $\nu$ with boundary vertex $u$. Any $B \in BC^*_u(\nu)$ contains at least one edge $e^\circ$ in $G[\alpha \cup \beta] \backslash (G[\alpha] \cup G[\beta])$. 

Section~\ref{sec:datastructure_proofs} specifies a set of invariants and theorems that allows us to navigate $\TT$. In addition, Theorem~\ref{thm:boundarycontainment} allows us to test for any edge $e^* \in T$ identify all pre-stored biconnected components $B^*$ that contain $e^*$ in $O(k \log n)$ \emph{total time}. This serves as our toolbox.

The edge $e^\circ$ must be part of some new biconnected component $B$ in $G[\nu]$.
Indeed: $e^\circ$ has one endpoint $x$ in $G[\alpha]$ and $y$ in $G[\beta]$. 
The edge $e^\circ$ together with the path $\pi^\circ$ in the spanning tree connecting $x$ and $y$ must form a cycle. 
We can test in $O(\log n)$ time whether $\pi^\circ$ is incident to the spine $\pi(\nu)$ and thus whether $B$ is relevant.

In Section~\ref{sub:intersection}, we show in Theorem~\ref{thm:componentcontainment}
that we can test whether $B$ is alive (i.e. incident to the face $\des_u(\nu)$). The core idea of this proof is (Figure~\ref{fig:overview} (a)) that
if $B$ is alive then: (1) $e^\circ$ or $\pi^\circ$ incident to $\des_u(\nu)$, or (2) $B$ contains some pre-stored biconnected $B^*$ incident to $\des_u(\nu)$.
We test case (1) with conventional methods in $O(\log n)$ time.
For case (2) we identify an edge $e^*$ on $\pi^\circ$ where if such a $B^*$ exists then $e^* \in B^*$.
We then apply Theorem~\ref{thm:boundarycontainment}.

In Section~\ref{sub:forming}, we show in Theorem~\ref{thm:merging} that for any such $e^\circ$, we can compute $B$ and its border in $G[\nu]$ in $O(k \log n)$ time. 
The core idea (Figure~\ref{fig:overview} (b)) is that when merging $\alpha$ and $\beta$ we can `project' $e^\circ$ onto $G[\alpha]$ and $G[\beta]$ to find the border of $B$ in the respective graphs. 
However, two complications arise: 
firstly, there may be some other edge $e^* \in G[\alpha \cup \beta] \backslash (G[\alpha] \cup G[\beta])$ which is also part of $B$. 
When we project $e^*$ onto $G[\alpha]$ and $G[\beta]$ we may reach `further' than $e^\circ$ (thus, expanding the border). Secondly, a merge can contain up to four clusters, not only two. We perform a case analysis where we show that we can construct the border of $B$ in $G[\nu]$ by pairwise joining projected borders. 

Finally in Section~\ref{sub:finalargument} we prove Theorem~\ref{thm:main_full}. 
We show that we do not need to consider `any' edge $e^\circ \in G[\alpha \cup \beta] \backslash (G[\alpha] \cup G[\beta])$.
Instead, we observe that any $e^\circ$ must intersect a tourpath of $\alpha$ and a tourpath of $\beta$.
For any fixed pair of tourpaths $(\tour{\alpha}, \tour{\beta})$, the edges in $G$ intersecting both tourpaths must lie on the meet between the Euler tours bounding this tourpath (Figure~\ref{fig:eulertours_huge}). 
This concept is similar to the \emph{edge bundles} by Laporte et al. in \cite{ItalianoPR93}).  We can restrict our attention to the first edge $e^\star$ of this bundle: the cycle between $\T$ and $e^\star$ encloses all other edges of the bundle in a face (thus, they cannot be part of some different alive biconnected component). 
By Theorem~\ref{thm:meet}, we can obtain $e^\star$ in $O(\log n)$ time. 
For each of these $\Theta(k)$ `maximal' edges $e^\star$, we apply the previous theorems to identify their relevant and alive biconnected components in $G[\nu]$ in $O(\log n)$ time to add them to $BC^*_u(\nu)$.

What remains is to upper bound the number of maximal edges and thus the integer $k$. 
A merge involves at most six different tourpaths. Thus, there are at most six choose two such interesting edges to form `new' biconnected components. This upper bounds the integer $k$ by $15$. 
We show that this allows us to maintain all invariants in $O(\log n)$ time per split and merge. 
To answer biconnectivity queries between $u$ and $v$, we expose $u$ and $v$ in $O(\log^2 n)$ time and use Invariant~\ref{inv:relevant_bicomp} to check for biconnectivity in $O(\log n)$ additional time. 

There exists one additional complication:
during a four-way merge, whenever $\alpha$ and $\beta$ are path clusters around a central vertex $m$, there exists no such `maximal' edge $e^\star$ (Figure~\ref{fig:overview} (c)). Thus, we cannot identify the biconnected components created by the edge bundle between $G[\alpha]$ and $G[\beta]$.
We observe that any such component is only useful if it connects the edges $e_1$ and $e_2$ of $\pi(\nu)$ incident to $m$. 
We test if removing $m$, separates $e_1$ and $e_2$ in $G$. 
This would be possible in $O(\log^2 n)$ using~\cite{holm2017dynamic} by splitting $m$ along the right corners and testing for connectivity. However, we want $O(\log n)$ update time per merge.
In Section~\ref{sec:specialcase} we open their black box slightly to test this in $O(\log n)$ time instead.

These proofs together show that we can dynamically maintain a combinatorial embedding subject to a broad set of operations and biconnectivity queries in $O(\log^2 n)$ time. 

\begin{figure}[h]
  \centering
  \includegraphics{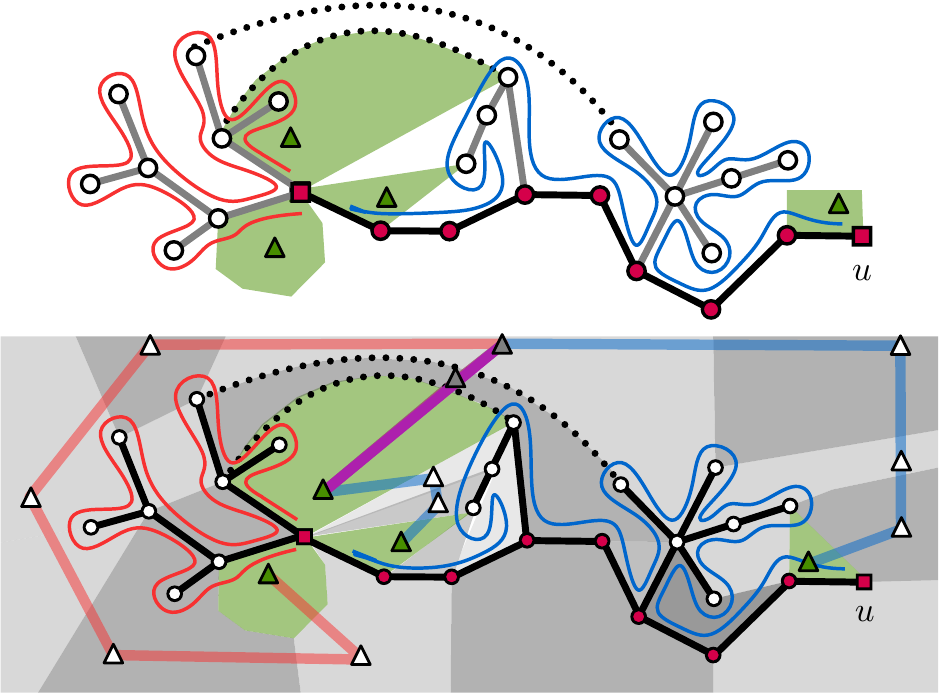}
  \caption{
An End merge between a point cluster $\alpha$ and a path cluster $\beta$ to create a new path cluster $\nu$.
We show two Euler tours $\tour{\alpha}$ and $\tour{\beta}^\uparrow$ in blue and red. 
The tour $\tour{\alpha}$ corresponds to the red path in the dual between two faces.
The tour $\tour{\beta}^\uparrow$ to the blue path. 
The purple path is their meet. 
Any edge $e^\circ \in G[\alpha \cup \beta] \backslash (G[\alpha] \cup G[\beta])$ intersects both $\tour{\alpha}$ and $\tour{\beta}^\uparrow$ (or $\tour{\alpha}$ and $\tour{\beta}^\downarrow$) and must thus lie on the purple path (or an alternative meet in the dual). 
The first edge on this path is $e^\star$, as any further edge cannot be incident to the face $\des_u(\nu)$.
  }
  \label{fig:eulertours_huge}
\end{figure}

\begin{figure}[hp]
  \centering
  \includegraphics{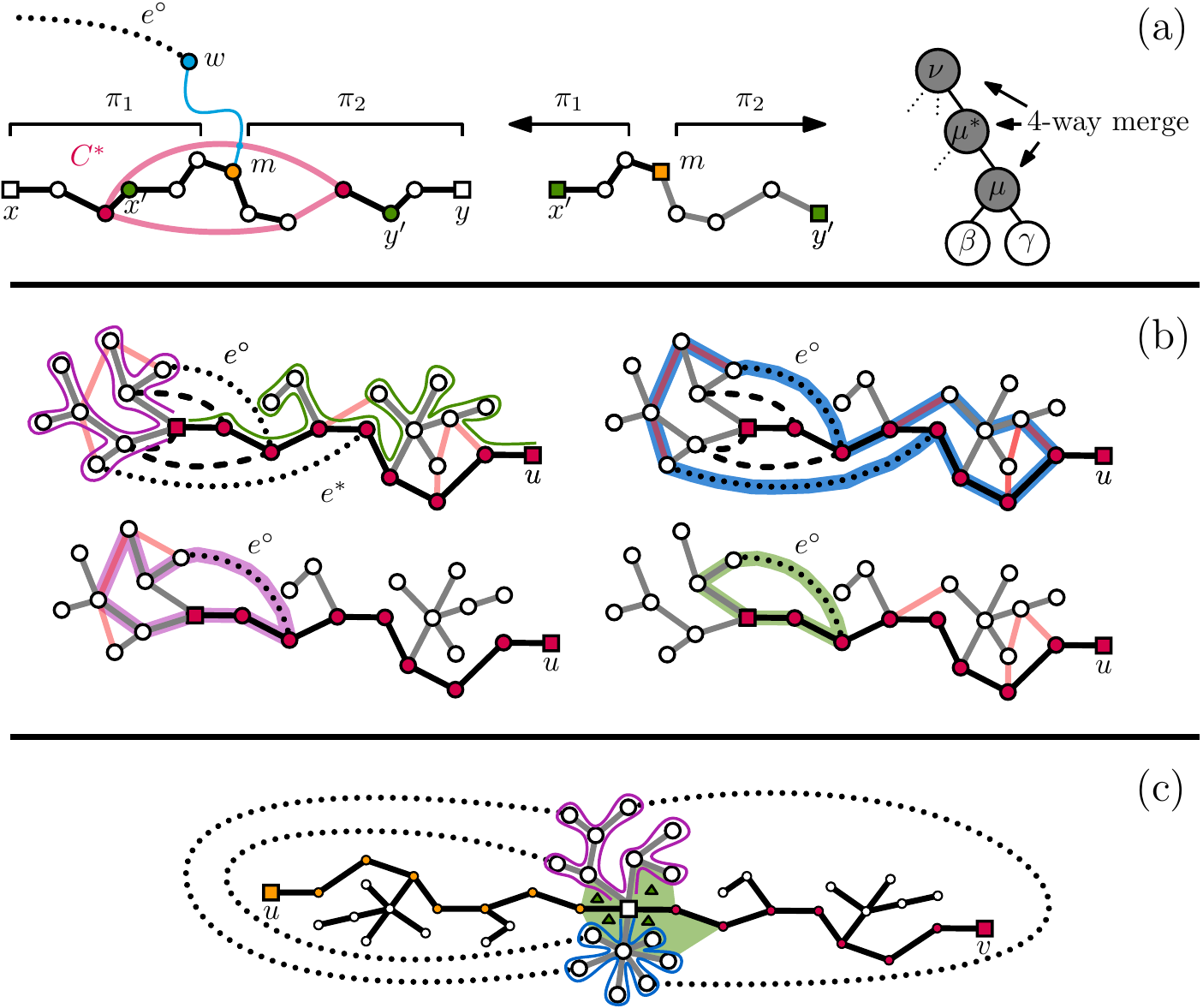}
  \caption{ Three challenges described in our overview.
  \newline(a) Let $e^\circ$ have some endpoint $w$ and consider the path in $\T$ to some vertex $m$. Let $B$ be the biconnected component formed by $e^\circ$. 
  There exists some child $\mu$ of $\nu$ where $m$ is the central vertex of the merge. If $B$ contains some pre-stored biconnected component $B^*$ (red) then $B$ includes either the edge $e_1$ or $e_2$ in $G[\mu]$ incident to $m$. \newline
  (b) Consider an End Merge and an edge $e^\circ$ intersecting the purple and green Euler tours. 
  The edge $e^\circ$ is part of a biconnected component with the blue cycle as its outer cycle. 
  We find for $e^\circ$ however, only the purple and green cycles in $G[\alpha]$ and $G[\beta]$. We smartly join these cycles together with the cycles for $e^*$ to get $B^*$. \newline
  (c) In a four-way merge, the edges incident to the outer face of the embedding may be arbitrary edges in the edge bundle between the two path clusters. Neither edges incident to the outer face are incident to $\des_u(\nu)$ or $\des_v(\nu)$. Since we have no techniques for finding these edges, we instead test if the central vertex separates $(u, v)$.
  } 
  \label{fig:overview}
\end{figure}

\section{Dynamic maintenance of our augmented top tree}

We aim to prove the following theorem:

\theoremmain*

Let $\nu \in \TT$ be a (path) cluster with children $\mu_1, \mu_2, \ldots \mu_s$, and denote by $u$ (and $v$) its boundary vertex (vertices). 
To prove Theorem~\ref{thm:main_full}, we want to identify all biconnected components $B$ in $G[\nu]$ that are not a biconnected component in $G[\mu_i]$ for some $i$ (where $B$ is relevant in $\nu$ and alive with respect to $\des_u(\nu)$ or $\des_v(\nu)$).  We observe the following:

\begin{observation}
\label{obs:pairwise}
Let $\nu$ be a node with children $\mu_1, \mu_2, \ldots \mu_s$. 
Then $G[\nu] = \bigcup_{i, j} G[\mu_i \cup \mu_j]$.
\end{observation}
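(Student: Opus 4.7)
The plan is to prove the equality by straightforward double inclusion, exploiting only the definition of vertex-induced subgraphs (a set $G[V']$ contains exactly those edges whose two endpoints both lie in $V'$, as stated in the Preliminaries) together with the fact, recorded when top trees are introduced, that a node's vertex set is the union of its children's vertex sets, so $\nu = \bigcup_{i=1}^s \mu_i$.

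For the inclusion $G[\nu] \supseteq \bigcup_{i,j} G[\mu_i \cup \mu_j]$, I would note that any edge in $G[\mu_i \cup \mu_j]$ has both endpoints inside $\mu_i \cup \mu_j \subseteq \bigcup_k \mu_k = \nu$, so it lies in $G[\nu]$. For the opposite, nontrivial direction, I would take an arbitrary edge $e = (x,y) \in G[\nu]$. Because $x, y \in \nu = \bigcup_k \mu_k$, there exist (not necessarily distinct) indices $i, j$ with $x \in \mu_i$ and $y \in \mu_j$. Both endpoints then belong to $\mu_i \cup \mu_j$, so $e \in G[\mu_i \cup \mu_j]$, which sits inside the pairwise union on the right-hand side.

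The one thing to be careful about is the caveat already flagged in Figure~\ref{fig:edgeinequality}: in general, $G[V_1 \cup V_2] \neq G[V_1] \cup G[V_2]$, because edges that cross between $V_1$ and $V_2$ may be missed by the right-hand side. This observation sidesteps that pitfall precisely because the statement takes the union over pairs $G[\mu_i \cup \mu_j]$ rather than over the singletons $G[\mu_i]$; any edge whose endpoints fall into two different children $\mu_i \neq \mu_j$ is nevertheless captured by the cross-term $G[\mu_i \cup \mu_j]$. So there is no real obstacle here — the proof is essentially a one-line pigeonholing of the two endpoints of an edge into (at most two) children — and the observation's value is not its difficulty but the fact that, when combined with the top-tree merge structure (in which $s \leq 4$), it reduces the analysis of newly formed biconnected components in $G[\nu]$ to the pairwise interactions $G[\mu_i \cup \mu_j]$ handled in Sections~\ref{sub:intersection} and~\ref{sub:forming}.
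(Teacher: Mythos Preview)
Your argument is correct and is precisely the natural double-inclusion that the paper has in mind; indeed, the paper states this as an observation without proof, treating it as immediate from the definition of induced subgraphs and the fact that $\nu=\bigcup_i\mu_i$. Your added remark about why pairwise unions (rather than singletons) are needed, and about the downstream use in Sections~\ref{sub:intersection} and~\ref{sub:forming}, accurately captures the observation's role.
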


\noindent
Any biconnected component $B$ that exists in $G[\nu]$ but not in $\bigcup_i G[\mu_i]$ either is the union of biconnected components in $G[\mu_i]$, or contains an edge $e^\circ$ in $G[\nu] \setminus \bigcup_i G[\mu_i]$.
By Observation~\ref{obs:pairwise}, it suffices to check for every pair $(\mu_i, \mu_j)$  the edges $e$ in $G[\mu_i \cup \mu_j] \setminus (G[\mu_i] \cup G[\mu_j])$ and the biconnected components that these edges $e$ create. 

\subsection{
\texorpdfstring{
An edge $e^\circ$ in $G[\nu] \setminus \bigcup_i G[\mu_i]$ and whether it is part of an alive BC}{An edge e in G[v] \ U G[m(i)] and whether it is part of an alive BC}
}
\label{sub:intersection}
The above observations inspire us to consider the following setting: let $\nu$ be a cluster with at least two children $\alpha$ and $\beta$, and $e^\circ$ be an edge where one endpoint is in $\alpha$ and one endpoint is in $\beta$. 
We show in the proof of our main theorem that the edge $e^\circ$ is part of some \emph{relevant} biconnected component $B_{\alpha \beta}$ of $G[\alpha \cup \beta]$. What then remains, is to identify whether $B_{\alpha \beta}$ is edge-incident to 
$\des_u(\nu)$ (i.e. is $B_{\alpha \beta}$ also an \emph{alive} biconnected component of $G[\nu]$). Suppose that $B_{\alpha \beta}$ is indeed edge-incident to $\des_u(\nu)$ then either: 
\begin{enumerate}[noitemsep, nolistsep]
    \item the edge $e^\circ$ is edge-incident to $\des_u(\nu)$, or
    \item $e^\circ$ is biconnected to the edges in a biconnected component $B^*$ in $G[\alpha]$ (or $G[\beta]$) that is edge-incident to $\des_u(\nu)$.
\end{enumerate}

\begin{restatable}{theorem}{edgeincident}
\label{thm:edgeincident}
Given the cotree $\D$ we can, for any cluster $\nu$ with boundary vertex $u$ and any edge $e^\circ \in G[\nu] \setminus \T$, decide if $e^\circ$ is edge-incident to $\des_u(\nu)$ in $O(\log n)$ time. 
\end{restatable}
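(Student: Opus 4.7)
The plan is to reduce the test to a single meet query in the cotree. Since $e^\circ$ is a non-tree edge of $G$, it is an edge of $\D$ whose two endpoints are the $G$-faces $f_1, f_2$ adjacent to $e^\circ$; we obtain these in constant time. For the cluster $\nu$, I would extract a reference face $f^* \in \des_u(\nu)$ from information already kept for $\nu$: for a point cluster take $f^* := \first{\tour{\nu}}$ (or $\last{\tour{\nu}}$), and for a path cluster take the analogous outer face at the $u$-end of the relevant tourpath. That $f^* \in \des_u(\nu)$ follows from Lemmas~\ref{lemma:unique_path_face} and~\ref{lemma:unique_point_face}, since these tourpath-bounding faces lie entirely outside $G[\nu]$.

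The structural claim I would then prove is that $e^\circ$ is edge-incident to $\des_u(\nu)$ iff at least one of $f_1, f_2$ is cotree-connected to $f^*$ in $\D$ without crossing any $\D$-edge of $G[\nu]$ other than $e^\circ$ itself. Intuitively, the $G$-faces that merge into $\des_u(\nu)$ form the cotree subtree cut out by the $\D$-edges of $G[\nu]$ that border $\des_u(\nu)$, and $e^\circ$ is on that border iff exactly one of its two $\D$-endpoints lies in the $f^*$-side subtree. This can be verified in $O(\log n)$ using a single $\meet_\D(f^*, f_1, f_2)$ query together with a constant-time case analysis, since the $\meet$ operation on $\D$ is supported in $O(\log n)$ time by the Holm--Rotenberg top-tree on the cotree, and by Corollary~\ref{cor:contaiment} all the nested designated faces that are relevant lie on the root-to-$\nu$ path in $\TT$.

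The main obstacle is to show that this cotree-only test is not spoiled by tree-primal edges in $\T \setminus G[\nu]$, which also merge dual faces (via co-$\D$-edges) when passing from $G$ to $G[\nu]$, even though they are not edges of $\D$. The plan is to argue that every such tree edge lies inside the face $\des_u(\nu)$ itself: by Lemmas~\ref{lemma:unique_path_face} and~\ref{lemma:unique_point_face} every edge outside $G[\nu]$ sits inside $\des_u(\nu)$ (respectively $\des_v(\nu)$), so the merges it induces take place among $G$-faces already belonging to $\des_u(\nu)$ and do not produce new equivalences across its boundary. Granting this compatibility, the meet-based test above faithfully identifies edge-incidence of $e^\circ$ to $\des_u(\nu)$, yielding a worst-case $O(\log n)$ procedure as claimed.
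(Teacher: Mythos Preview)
Your structural characterization is reasonable, and the paragraph about primal tree edges not spoiling the cotree-only test is on the right track. The gap is algorithmic: the single query $\meet_\D(f^*,f_1,f_2)$ cannot do the job you assign it. Since $e^\circ$ is a cotree edge, $f_1$ and $f_2$ are \emph{adjacent} in $\D$, so the meet of $\{f^*,f_1,f_2\}$ is always one of $f_1,f_2$ --- it merely tells you which side of $e^\circ$ the reference face $f^*$ lies on. It carries no information about whether any \emph{other} cotree edge of $G[\nu]$ lies on the path from that face to $f^*$, which is exactly what your structural criterion requires. Concretely, if $e^\circ$ sits on an inner cycle of $G[\nu]$ nested strictly inside an outer cycle bounding $\des_u(\nu)$, both $f_1$ and $f_2$ lie outside $\des_u(\nu)$, yet the meet still returns one of them; your ``constant-time case analysis'' has nothing to distinguish this from the genuine boundary case.

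The paper's proof avoids this by using \emph{two} reference faces rather than one: for a point cluster it takes $\first{\tour{\nu}}$ and $\last{\tour{\nu}}$, and for a path cluster the analogous pair at the $u$-end, and then tests whether $e^\circ$ is vertex-incident to the cotree path between them. The point is that the cotree edges of $G[\nu]$ bordering $\des_u(\nu)$ are precisely those on this path, so a single ``is this edge/vertex on the path between two given cotree vertices'' query (supported in $O(\log n)$ by the Holm--Rotenberg cotree structure) suffices. Your approach would be salvaged by bringing in $\last{\tour{\nu}}$ as a second anchor and testing path membership, at which point it becomes the paper's argument.
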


\begin{proof}
Per definition, $e^\circ$ is an edge in $\D$. The cluster $\nu$ is either a path or point cluster.

Let $\nu$ be a point cluster and consider the tourpath $\tour{\nu}$ and the  faces $\first{\tour{\nu}}$ and $\last{\tour{\nu}}$ that intersect this tour path. The edge $e^\circ$ is edge-incident to $\des_u(\nu)$ if and only if it is vertex-incident to the path $\pi$ from  $\first{\tour{\nu}}$ to $\last{\tour{\nu}}$ in the cotree $\D$. Using Holm and Rotenberg~\cite{holm2017dynamic} we can, given a pointer to $e^\circ$, detect this in $O(\log n)$ time. 

Let $\nu$ be a path cluster and consider the tourpaths $\tour{\nu}^\uparrow$ and $\tour{\nu}^\downarrow$. Let without loss of generality $\des_u(\nu)$ be incident to the start of $\tour{\nu}^\uparrow$ and the end of $\tour{\nu}^\downarrow$. 
The edge $e^\circ$ is edge-incident to $\des_u(\nu)$ if and only if it is vertex-incident to the path $\pi$ from  $\first{\tour{\nu}^\uparrow}$ to $\last{\tour{\nu}^\downarrow}$ in the cotree $\D$. Again, we can detect this in $O(\log n)$ time using~\cite{holm2017dynamic}.
\end{proof}

\noindent 
Theorem~\ref{thm:edgeincident} checks the first condition.
Theorem~\ref{thm:componentcontainment} (which we prove later in this subsection) checks the second:

\begin{restatable}{theorem}{componentcontainment}
\label{thm:componentcontainment}
Let $\nu \in \TT$, $\alpha$ and $\beta$ be two children of $\nu$, and $u$ be a boundary vertex of $\nu$. Let $e^\circ$ be an edge with one endpoint in $\alpha$ and one endpoint in $\beta$. Given Invariants~\ref{inv:relevant_bicomp}, \ref{inv:upwardpointers} and \ref{inv:component_storage}, we can identify  the biconnected component $B^* \in BC(\alpha, \proj_u(\alpha))$ where $e^\circ$ is biconnected to the edges of $B^*$ in the graph $G[\alpha] \cup \T$ in $O(k \log n)$ time (or conclude  no such $B^*$ exists).
Here, $k$ is the maximum over all $u$ and $\nu$ of the number of elements in $BC^*_u(\nu)$.
\end{restatable}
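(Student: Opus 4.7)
My plan is to locate a single ``witness'' tree edge $e^*$ whose presence in the stored border of an alive biconnected component (Invariant~\ref{inv:component_storage}) certifies the existence of the desired $B^*$, and then to read $B^*$ off via Theorem~\ref{thm:boundarycontainment}. Let $x$ be the endpoint of $e^\circ$ in $\alpha$ and $y$ its endpoint in $\beta$, and let $x^*$ denote the boundary vertex of $\alpha$ closest to $u$ in $\T$, so that $\proj_u(\alpha)=\des_{x^*}(\alpha)$. Writing $\pi^\circ$ for the unique $\T$-path from $x$ to $y$, every cycle through $e^\circ$ in $G[\alpha]\cup\T\cup\{e^\circ\}$ must contain $\pi^\circ$, and therefore any $B^*\in BC(\alpha,\proj_u(\alpha))$ biconnected to $e^\circ$ must share at least one edge with $\pi^\circ\cap G[\alpha]$.

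I take $e^*$ to be the tree edge of $\pi^\circ\cap G[\alpha]$ incident to $x^*$. When $\alpha$ is a path cluster, the slim-path invariant gives a single such edge incident to $x^*$ in $G[\alpha]$; when $\alpha$ is a point cluster, Corollary~\ref{corollary:ordering} places the edges of $G[\alpha]$ incident to $x^*$ in a clockwise interval and $e^*$ is the edge on $\pi^\circ$ that is extremal on the side of $\des_{x^*}(\alpha)$. I compute $e^*$ in $O(\log n)$ by applying Theorem~\ref{thm:meet}, which returns the last incident tree edge at $x^*$ on the path from $x$. With $e^*$ in hand I invoke Theorem~\ref{thm:boundarycontainment}, which in $O(k\log n)$ total time lists, for every ancestor $\mu$ of $e^*$ in $\TT$ with $x^*$ as boundary vertex, every stored $B\in BC^*_{x^*}(\mu)$ whose recorded border contains $e^*$. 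By Lemma~\ref{lemma:unique_descendent}, each candidate $B^*\in BC(\alpha,\proj_u(\alpha))$ is stored in exactly one such $\mu$, so the procedure returns a unique component or concludes no such $B^*$ exists.

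The main obstacle is establishing the key equivalence underlying the algorithm: $B^*$ is biconnected to $e^\circ$ in $G[\alpha]\cup\T$ \emph{if and only if} $e^*$ lies between the recorded border of $B^*$. The reverse direction is immediate, since if $e^*\in B^*$ then $e^*$ and $e^\circ$ both lie on the cycle $\pi^\circ\cup\{e^\circ\}$ and hence are biconnected. The forward direction uses Lemma~\ref{lemma:endpoints}, which guarantees that $B^*\cap\pi(\mu)$ is a path so that ``between the border'' is well defined, combined with the aliveness of $B^*$ forcing it to be edge-incident to $\des_{x^*}(\alpha)$; this in turn forces the unique tree edge on $\pi^\circ$ adjacent to $\des_{x^*}(\alpha)$ at $x^*$ to lie in $B^*$. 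The subtle sub-cases---$x=x^*$, or $\pi^\circ$ exiting $\alpha$ through the boundary opposite $x^*$---require separate but analogous treatment using the slim-path property and Corollary~\ref{corollary:ordering}. Once this is pinned down, the runtime is dominated by the single call to Theorem~\ref{thm:boundarycontainment}, giving the claimed $O(k\log n)$ bound.
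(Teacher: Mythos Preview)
Your single-witness-edge idea matches the paper's treatment when $\alpha$ is a \emph{point} cluster (this is exactly Lemma~\ref{lemma:ptcltrcontainment}), but it breaks down when $\alpha$ is a \emph{path} cluster, and this is where the real work lies.

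Concretely: if $\alpha$ is a path cluster with boundary vertices $x^*$ (closest to $u$) and $z$ (shared with $\beta$), then the $\T$-path $\pi^\circ$ from $x\in\alpha$ to $y\in\beta$ exits $\alpha$ through $z$, not through $x^*$. So in general $\pi^\circ\cap G[\alpha]$ contains \emph{no} edge incident to $x^*$, and your definition of $e^*$ is vacuous. Taking instead the unique spine edge at $x^*$ (which the slim-path invariant does give you) does not help either: a component $B^*\in BC(\alpha,\des_{x^*}(\alpha))$ may sit anywhere along $\pi(\alpha)$, and $e^\circ$ is biconnected to $B^*$ precisely when $B^*$ straddles the meet $m=\meet(x^*,z,x)$, which can be arbitrarily far from $x^*$. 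Thus your forward implication (``aliveness forces the tree edge at $x^*$ into $B^*$'') is false.

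The paper handles the path-cluster case differently. It computes $m$ via Theorem~\ref{thm:meet} and then invokes Lemma~\ref{lemma:pathdetection}, which searches the $O(\log n)$ path-cluster ancestors of the unique four-way merge with centre $m$, testing at each whether some stored component's border interval on the spine contains $m$ (so that both spine edges at $m$ lie in it). When this first test fails, a further case split is needed: either $B^*$ contains just one of the two spine edges at $m$ (checked by another pass with Theorem~\ref{thm:boundarycontainment}), or $B^*$ lives in the point-cluster child of that four-way merge and is found via Lemma~\ref{lemma:ptcltrcontainment}. Your proposal collapses all of this to one call of Theorem~\ref{thm:boundarycontainment} on a single edge, which cannot distinguish these situations.
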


\noindent
Our approach towards proving this theorem is as follows: an edge $e^\circ \in G[\alpha \cup \beta]$ is attached to some vertex $w \in G[\alpha]$. Either:
\begin{enumerate} [noitemsep, nolistsep]
    \item $\alpha$ is a path cluster. Then this vertex $w$ is connected (in the spanning tree $\T$) to some internal vertex $m \in \pi(\nu)$.  Either the removal of $m$ separates $\pi(\alpha)$ in $G[\alpha]$, or, $m$ is encapsulated by some biconnected component $B^* \in BC(\alpha, \proj_u(\alpha))$. Else,
    \item $\alpha$ is a point cluster. Let $w$ be connected to the boundary vertex by some edge $e^* \in \T$. Either $e^*$ is in some biconnected component $B^* \in BC(\alpha, \proj_u(\alpha))$, or $e^\circ$ is not part of a biconnected component in $BC(\alpha, \proj_u(\alpha))$.
\end{enumerate}

\begin{lemma}[Case 1: $\alpha$ is a path cluster]
\label{lemma:pathdetection}
Let $\alpha \in \TT$ be a path cluster with boundary vertices $x$ and $y$.
Let $m$ be an internal vertex on the path $\pi(\alpha)$ such that:
$m$ is path-connected to a vertex $w$ that is incident to $\des_x(\alpha)$ in the graph $G[\alpha] \setminus \pi(\alpha)$ (Figure~\ref{fig:articulationdetection}). 
Either: 
\begin{itemize}[noitemsep, nolistsep]
\item there exists a $B^* \in BC_x(\nu, \des_x(\alpha))$ that contains both edges of $\pi(\alpha)$ incident to $m$,
    \item or removing the vertex $m$ separates the spine $\pi(\alpha)$ in $G[\alpha]$.
\end{itemize}
 Given Invariants \ref{inv:relevant_bicomp}, \ref{inv:upwardpointers} and \ref{inv:component_storage} we can identify $B^*$ (if it exists) in $O(k \log n)$ time. 
\end{lemma}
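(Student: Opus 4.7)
The plan is to first establish the dichotomy and then locate $B^*$ via a short traversal of $\TT$. Let $e_1,e_2$ denote the two edges of $\pi(\alpha)$ incident to $m$, with respective other endpoints $a$ and $b$ on $\pi(\alpha)$. Suppose removing $m$ does not separate $\pi(\alpha)$ in $G[\alpha]$; then there is a path $Q$ in $G[\alpha]-\{m\}$ from $a$ to $b$, and concatenating $e_1,Q,e_2$ yields a cycle $C$ through $m$ containing both spine edges. Hence $e_1$ and $e_2$ lie in a common biconnected component $B^*$ of $G[\alpha]$, which is relevant since it contains spine edges. The complementary case, where $m$ separates $\pi(\alpha)$ in $G[\alpha]$, is the second alternative of the dichotomy and requires nothing further.

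To argue that $B^*$ is alive with respect to $\des_x(\alpha)$, we combine the cycle $C$ with the hypothesized path $P\subseteq G[\alpha]\setminus\pi(\alpha)$ from $m$ to $w$, exploiting planarity together with Lemma~\ref{lemma:unique_path_face} (which characterises $\des_x(\alpha)$ as the unique face of $G[\alpha]$ capturing all of $G\setminus G[\alpha]$ on the $x$-side of the spine). The idea is: if the first edge of $P$ at $m$ already belongs to $B^*$, then tracing $P$ from $m$ until it either reaches $w$ or first leaves $B^*$ exhibits an edge of $B^*$ sharing a face-corner with $w$, forcing an edge of $B^*$ onto $\des_x(\alpha)$; otherwise $m$ would be a cut vertex of $G[\alpha]$ isolating a pendant block containing $w$, and a planar argument using $P$'s avoidance of the spine together with the fact that $w$ lies on $\des_x(\alpha)$ forces an edge of the cycle $C$ itself to border $\des_x(\alpha)$.

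To identify $B^*$ algorithmically, we use Invariant~\ref{inv:upwardpointers-lca} to retrieve in $O(1)$ time the unique four-way merge $\mu_0\in\TT(\alpha)$ whose central vertex is $m$; this is the deepest cluster whose induced subgraph contains both $e_1$ and $e_2$. By Lemma~\ref{lemma:unique_descendent}, $B^*$ is stored as an element of $BC^*_{\tilde x}(\nu^*)$ for exactly one node $\nu^*$ on the root path in $\TT$ from $\mu_0$ up to (and including) $\alpha$, where $\tilde x$ is the boundary vertex of $\nu^*$ closest to $x$. At any such node, both $e_1$ and $e_2$ must lie inside the spine-interval delimited by the eastern and western borders of $B^*$ (Lemma~\ref{lemma:endpoints} together with Invariant~\ref{inv:component_storage}). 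We therefore invoke Theorem~\ref{thm:boundarycontainment} with $e^*=e_1$: simultaneously for all $O(\log n)$ clusters $\nu$ on this root path, and for each of the at most $k$ components $B\in BC^*_{\tilde x}(\nu)$ per cluster, we test in $O(k\log n)$ total time whether $e_1$ lies between the borders of $B$. The unique positive match is $B^*$; if none exists, we are in the second case of the dichotomy.

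The main obstacle I expect is the aliveness argument sketched in the second paragraph. The cycle construction delivers the biconnectivity of $e_1,e_2$ cleanly, but transferring the hypothesis that $m$ is spine-avoiding-connected to some $w\in\des_x(\alpha)$ into the conclusion that $B^*$ itself shares an edge with $\des_x(\alpha)$ requires a careful case analysis of the block-cut decomposition of $G[\alpha]$ at $m$ under the planar embedding and the unique-face property of Lemma~\ref{lemma:unique_path_face}.
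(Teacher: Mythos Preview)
Your overall architecture matches the paper exactly: establish the dichotomy via a cycle through $e_1,e_2$, argue aliveness of the resulting block, then locate $B^*$ by walking the path in $\TT$ from the unique four-way merge with central vertex $m$ up to $\alpha$ and testing border containment. The algorithmic portion is essentially what the paper does.

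The genuine gap is the aliveness argument, and you correctly flag it. Your case split on whether the first edge of $P$ at $m$ lies in $B^*$ does not close: in the first case, if $P$ exits $B^*$ at some cut vertex $c\neq w$, there is no reason the edges of $B^*$ at $c$ must lie on $\des_x(\alpha)$; and in the second case ``a planar argument'' is doing all the work. The paper's argument is the clean version of what you are reaching for: take any witness cycle $C$ (through both spine edges at $m$); if $C$ is not edge-incident to $\des_x(\alpha)$ then, by planarity of $G[\alpha]$, $C$ is enclosed by some cycle $C'$ that \emph{is} edge-incident to $\des_x(\alpha)$. Since $m$ is inside or on $C'$ while $w$ lies on $\des_x(\alpha)$, the path $\pi^*$ from $m$ to $w$ must cross $C'$; splicing a subpath of $\pi^*$ with the outer arc of $C'$ yields a witness cycle that is edge-incident to $\des_x(\alpha)$. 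Taking the maximal such witness cycle lands you in a component in $BC(\alpha,\des_x(\alpha))$.

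Two smaller issues in your algorithmic step. First, ``the unique positive match is $B^*$'' is not quite right: at nodes strictly below $\mu^*$ on the path, the (smaller) block of that level containing both spine edges may itself be alive there and hence sit in $BC^*$, giving additional matches. The paper handles this by searching \emph{top-down} from $\alpha$; the first hit is then $B^*$. Second, testing only $e_1$ via Theorem~\ref{thm:boundarycontainment} admits a false positive when a stored component's spine interval ends exactly at $m$ on the $e_2$-side; the paper instead tests whether $m$ lies strictly between the stored border indices (via Lemma~\ref{lemma:indexing}), which is equivalent to testing both $e_1$ and $e_2$.
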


\begin{figure}[b]
  \centering
  \includegraphics{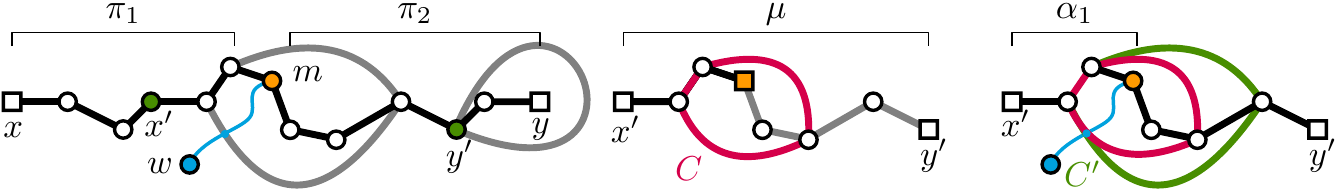}
  \caption{
 The spine of a path cluster $\alpha$ and some vertex $m$.
  The vertex $w$ and the path in $G[\alpha]\setminus \pi(\alpha)$ is shown in blue. There exists a unique 4-way merge between path clusters (black and grey) where $m$ is the central vertex. 
  If $m$ does not separate $(x', y')$ in $G[\alpha]$, it must be contained in some cycle $C$. Thus, there exists some cycle $C'$ bordering $\des_x(\alpha)$.
  }
  \label{fig:articulationdetection}
\end{figure}

\begin{proof}
The proof is illustrated by Figure~\ref{fig:articulationdetection}.
Denote by $\pi^*$ the path from $m$ to $w$ in $G[\alpha] \setminus \pi(\alpha)$.
Denote by say $\pi_1$ is the subpath of $\pi(\alpha)$ from $u$ to $m$  (excluding $m$). Denote by  $\pi_2$ is the subpath from $\pi(\alpha)$ from $m$ to $v$ (excluding $m$).
Suppose that removing $m$ does not separate $\pi_1$ from $\pi_2$ in $G[\alpha]$. 
First, we show that it must be that there exists a  $B^* \in BC_u(\alpha, \des_x\alpha))$ that contains both edges of $\pi(\alpha)$ incident to $m$.
Indeed, if removing $m$ does not separate $\pi_1$ and $\pi_2$ (in $G[\alpha]$) then there exists a cycle $C$ in $G[\alpha]$ intersects both a vertex of $\pi_1$ and a vertex of $\pi_2$.  Let us call such a cycle a \emph{witness}.
We claim that, since $m$ is path-connected to a vertex $w$ that is incident to $\des_x(\alpha)$ in the graph $G[\alpha] \setminus \pi(\alpha)$, there must exist some witness cycle $C' \in G[\alpha]$ that is edge-incident to $\des_x(\alpha)$. Indeed, suppose that the witness $C$ is not edge-incident to $\des_x(\alpha)$. Then the witness it must be incapsulated in some cycle $C'$ that is edge-incident to $\des_x(\alpha)$.
This cycle $C'$ must intersect the path $\pi^*$. However, this implies that there is at least one witness cycle $C''$ that is edge-incident to $\des_x(\alpha)$ (this cycle is obtained by combining a subpath of $\pi^*$ with a subpath of $C'$ that is edge-incident to $\des_x(\alpha)$). 
Denote by $C^*$ the largest witness cycle that is edge-incident to $\des_x(\alpha)$.
The cycle $C^*$ must be contained in  some biconnected component $B^* \in BC_x(\alpha, \des_x(\alpha))$.
This proves our claim that removing $m$ does not separate $\pi(\alpha)$ in $G[\alpha]$  if and only if a biconnencted component $B^*$ that contains both edges of $\pi(\alpha)$ incident to $m$ exists.
What remains, is to show that we can identify $B^*$ in $O(k \log n)$ time.

By Lemma~\ref{lemma:unique_descendent}, there exists a unique descendant $\mu^*$ of $\alpha$ where $B^* \in BC^*_{x^*}(\mu^*)$ (and $x^*$ is the boundary vertex of $\mu^*$ that is closest to $x$). 
We show that $\mu^*$ can be only one out of $O(\log n)$ descendants of $\nu$ (Figure~\ref{fig:articulationtree}). 
Per definition, $B^* \cap \pi(\alpha)$ has one endpoint in $\pi_1$ and one endpoint in $\pi_2$.
Consider the unique four-way merge where $m$ is the central vertex, which merged two path clusters $\beta$ and $\gamma$ to create some cluster $\mu \in \TT(\alpha)$ with boundary vertices $x'$ and $y'$.
The cluster $\mu$ is a descendant of $\alpha$. The path in $\TT$ from $\mu$ to $\alpha$ must consist of only four-way merges (any other merge results in a node that is not a path cluster, and all ancestors of such a node cannot have the vertex $m$ on their spine).
Any descendant $\mu'$ of $\alpha$ where $B^* \in BC_x(\mu', \proj_x(\mu'))$ must be on this path in $\TT$ (since any further descendants contain only a part of $\pi_1$ \emph{or} $\pi_2$. 
We check all the $O(\log n)$ four-way merges on this path, starting from the four-way merge that created $\alpha$. 

\begin{figure}[h]
  \centering
  \includegraphics{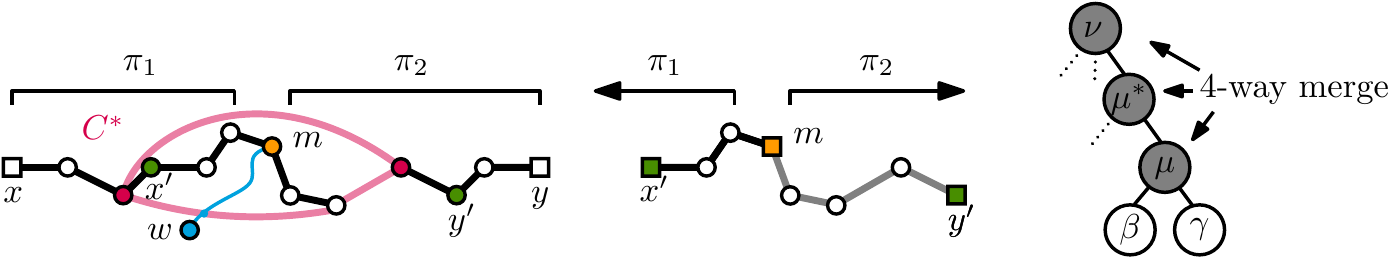}
  \caption{
 The cluster $\alpha$ and the maximal cycle $C^*$ that is part of the biconnected component $B^*$. We show the cluster $\mu$ that is the result of the four-way merge where $m$ is the central vertex.
 The child clusters $\beta$ and $\gamma$ of $\mu$ coincide with $\pi_1$ and $\pi_2$ respectively.
 The node $\mu^*$ must be contained in the path in $\TT$ between $\mu$ and $\alpha$.
  }
  \label{fig:articulationtree}
\end{figure}

Let $\mu^*$ be any path cluster on the aforementioned path in $\TT$.
By Lemma~\ref{lemma:indexing}, we can obtain the index of $m$ in each $\mu^*$ in $O(\log n)$ total time. 
We can obtain the boundary vertex $x^*$ of $\mu^*$ that is closest to $u$ in $O(1)$ time.
Each biconnected component $B \in BC^*_{x^*}(\mu^*)$ that may contain both spine edges incident to $m$ must be a \interval of type $2$. 
Thus, by Invariant~\ref{inv:component_storage}, we have a pointer to the endpoints $a_1$ and $a_2$ of $B \cap \pi(\mu)$ and their indices in $\mu$.
We can check if $m$ is in between $a_1$ and $a_2$ in $O(\log n)$ time.
The edges incident to $m$ are in $B$ if and only if this is the case.  Thus, by iterating over the at most $k$ elements in $BC^*_{x^*}(\mu^*)$ we can test if $B^* \in BC^*_{x^*}(\mu^*)$ in $O(\log n)$ time.
The lemma follows. 
\end{proof}

\begin{lemma}[Case 2: $\alpha$ is a point cluster]
\label{lemma:ptcltrcontainment}
Let $\alpha \in \TT$ be a point cluster with boundary vertex $x$. Let $e^*$ be an edge with the following properties (Figure~\ref{fig:pathclusterdetection}):
\begin{itemize}[noitemsep, nolistsep]
    \item $e^*$ is in $\T$ and in $G[\alpha]$, and
    \item $e^*$ be path-connected in $G[\alpha \setminus \{ x \} ]$ to a vertex $w \neq u$ which is incident to $\des_x(\alpha)$.
\end{itemize}  
Given Invariants~\ref{inv:relevant_bicomp}, \ref{inv:upwardpointers} and \ref{inv:component_storage}, we can compute the biconnected component $B^* \in BC_u(\nu, \des_u(\nu))$ that contains $e^*$ in $O(k \log n)$ time or conclude that no such $B^*$ exists.
\end{lemma}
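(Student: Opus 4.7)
The plan is to mirror the strategy used in Lemma~\ref{lemma:pathdetection}, applied to the point-cluster setting: locate the leaf cluster $\nu_{e^*}\in\TT$ corresponding to $e^*$, walk its $O(\log n)$ ancestors up to $\alpha$, and at each ancestor scan the at most $k$ components stored by Invariant~\ref{inv:relevant_bicomp}. By Lemma~\ref{lemma:unique_descendent}, if $B^*$ exists it is stored at a unique descendant $\mu^*\in\TT(\alpha)$ with $B^*\in BC^*_{x^*}(\mu^*)$, where $x^*$ is the boundary of $\mu^*$ closest to $x$. Since $\mu^*$ is necessarily a common ancestor of $\nu_{e^*}$ and of some tree edge incident to $x$, the traversal will visit $\mu^*$ along the way.

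For each ancestor $\mu$ visited, I determine the boundary vertex $x^*$ of $\mu$ closest to $x$ in $\T$ via Invariant~\ref{inv:upwardpointers-boundary-parent}, and for each of the (at most $k$) components $B\in BC^*_{x^*}(\mu)$ I test whether $e^*\in B$. The test reduces to comparing stored border indices against the index (respectively clockwise index) of the edge by which $e^*$ attaches to $\pi(\mu)$ or to $x^*$: if $\mu$ is a path cluster I use Theorem~\ref{thm:meet} to get the meet $m$ of $e^*$'s endpoints with the two spine endpoints, then by Lemma~\ref{lemma:indexing} I can compare the index of $m$ in $\pi(\mu)$ to the border indices of $B$; if $\mu$ is a point cluster I use Lemma~\ref{lem:cwindex} to obtain the clockwise index of the first tree edge on the path from $e^*$ to $x^*$, and compare it to the clockwise border of $B$. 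Since the two index Lemmas deliver all relevant indices across the $O(\log n)$ ancestors simultaneously in $O(\log n)$ total time, the whole procedure runs in $O(\log n)\cdot O(k) + O(\log n) = O(k\log n)$ time. If no candidate matches at any ancestor, I report that no such $B^*$ exists, which is justified by the uniqueness clause of Lemma~\ref{lemma:unique_descendent}.

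The main obstacle is justifying that this index-only test is both sound and complete, i.e.\ that the attachment of $e^*$ to the spine/boundary of $\mu$ lies strictly inside $B$'s stored border iff $e^* \in B$. The forward direction rests on a planarity argument: the edges at a cutvertex incident to a single biconnected component form a contiguous cyclic interval (an observation that follows from Corollary~\ref{corollary:ordering} together with the slim-path property of $\TT$), so any tree edge whose clockwise / spine index falls between the eastern and western borders of $B$ must itself lie in $B$, and then the rooted subtree of $\T$ hanging off that edge is likewise in $B$. For the reverse direction, I adapt the witness-cycle construction of Lemma~\ref{lemma:pathdetection}: the hypothesis that $e^*$ is path-connected in $G[\alpha\setminus\{x\}]$ to a vertex $w$ incident to $\des_x(\alpha)$ supplies a cycle through $e^*$ that is edge-incident to $\des_x(\alpha)$, so $e^*$ indeed lies in a biconnected component alive with respect to $\des_x(\alpha)$, and the maximality of $B^*$ together with Lemma~\ref{lemma:unique_descendent} forces that component to coincide with the unique $B^*$ located by the traversal. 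Establishing this correspondence cleanly, and in particular handling the interaction between the point-cluster structure of $\alpha$ and the possibly path-cluster structure of the intermediate ancestors $\mu$, is where the bulk of the technical effort will go.
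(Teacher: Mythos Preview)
Your approach is essentially the paper's: start at the leaf cluster containing $e^*$, walk up the $O(\log n)$ ancestors inside $\TT(\alpha)$, and at each ancestor scan the at most $k$ stored components using the border indices from Invariant~\ref{inv:component_storage}, invoking Lemma~\ref{lemma:unique_descendent} for uniqueness.

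Where the paper is simpler is that it uses an extra hypothesis that is implicit in every application (see the proof of Theorem~\ref{thm:componentcontainment} and Cases~1.1, 3.1 of Theorem~\ref{thm:merging}): the edge $e^*$ is actually \emph{incident to $x$}. This buys two shortcuts you do not take. First, every ancestor $\mu_j$ of the leaf that lies in $\TT(\alpha)$ automatically has $x$ as a boundary vertex, so there is no need to determine ``the boundary vertex $x^*$ closest to $x$'' at each level. Second, whenever such a $\mu_j$ is a path cluster, the slim-path property forces $e^*$ to be the \emph{only} edge of $G[\mu_j]$ incident to $x$; hence $e^*$ cannot lie in any nontrivial biconnected component of $G[\mu_j]$ and the cluster is skipped in $O(1)$ time. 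The meet computation you propose for the path-cluster case is therefore unnecessary (and in fact Theorem~\ref{thm:meet} would not even apply, since both endpoints of $e^*$ lie on the spine). For point-cluster ancestors, the paper just applies Theorem~\ref{thm:boundarycontainment} to $e^*$ itself rather than to ``the first tree edge on the path from $e^*$ to $x^*$''. Your more general treatment is not wrong in spirit, but once you assume $e^*$ is incident to $x$ the argument collapses to a much shorter case analysis.
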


\begin{proof}
The proof is illustrated by Figure~\ref{fig:pathclusterdetection}.
By Lemma~\ref{lemma:unique_descendent}, there exists a unique descendant $\mu$ of $\alpha$ where $B^* \in BC^*_{x^*}(\mu)$ (and $x^*$ is the closest vertex to $x$ in $\T$).
Per definition, $e^*$ is in $T$. So denote by $\mu_1$ the leaf in $\TT$ containing only $e^*$ and by $\mu_1, \mu_2, \ldots \alpha$ the root-to-leaf subpath of clusters where $G[\mu_i]$ contains $e^*$.
There are at most $O(\log n)$ such nodes. Each $\mu_i$ (that is a descendant of $\alpha$) must have $x$ as one of its boundary vertices (since $e^*$ is incident to $x$).
For any index $i$, we denote by $B_i \in BC_{x}^*(\mu_i)$ the biconnected component that contains $e^*$ (if it exists).
Observe that since biconnectivity is an equivalence relation over the edges in the set, it must be that $B_i \subset B_j$ for all $i$ and $j$  with $i < j$ (should $B_i$ and $B_j$ both exist). It follows that $B^* = B_j$ where $j$ is the largest index $j$ for which $B_j$ exists. 
Thus, we identify $B^*$ by checking in a top-down fashion for each $\mu_j$ whether there exists a biconnected component $B_j \in BC_{x}^*(\mu_j)$ that contains $e^*$. We investigate two cases:

\textbf{Case 1: $\mu_j$ is a point cluster.}
Let $\mu_j$ be a point cluster with boundary vertex $x$. By Invariant~\ref{inv:component_storage}, we have for each $B' \in BC_{x}^*(\mu_j)$ a pointer to the \interval of $B'$. In $O(k \log n)$ time, we check if $e^*$ is contained between the eastern edge and western edge (Theorem~\ref{thm:boundarycontainment}). Let $e^*$ be contained in the \interval formed by $e_1$ and $e_2$.
Since $e^*$ is path connected in $G[\alpha] \setminus \pi(\alpha)$ to a vertex $w$ that is incident to $\des_x(\alpha)$, it follows that $e^* \in B'$.
Suppose otherwise that $e^*$ is not in between $e_1$ and $e_2$, then per definition $e^*$ cannot be in $B'$. 
We check for every $j$, every $B' \in BC_u^*(\mu_j)$ in $O(k \log n)$ time, which takes $O(k \log n)$ total time.

\textbf{Case 2: $\mu_j$ is a path cluster.}
Let $\mu_j$ be a path cluster where $x$ is a boundary vertex. Since $e^*$ is incident to $x$ and we are maintaining a slim-path top tree, it must be that $e^*$ is the only edge incident to $x$ in $G[\mu_j]$ and thus, $e^*$ cannot be part of a biconnected component in $G[\mu_j]$ and we elect to skip over $\mu_j$ in $O(1)$ time.
\end{proof}

\begin{figure}[t]
  \centering
  \includegraphics{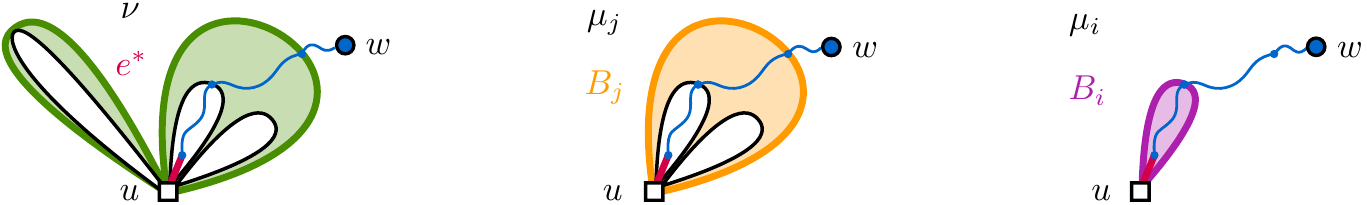}
  \caption{
  A path cluster $\alpha$ with boundary vertex $x$. We consider an edge $e^* \in \T$ that is incident to $x$ and path connected (in $G[\alpha \setminus \{x \} ]$ to a vertex $w$ that is incident to the face $\des_x(\alpha)$.
  \vspace{-0.5cm}}
  \label{fig:pathclusterdetection}
\end{figure}

\componentcontainment*

\begin{proof}
Denote by $x$ the boundary vertex of $\alpha$ that is closest to $u$ (in the spanning tree $\T$).

\textbf{ Let $\alpha$ be a point cluster.} 
If $e^\circ$ is incident to $x$ then $e^\circ$ cannot be biconnected to the edges of any biconnected component of $G[\alpha]$. We can test if $e^\circ$ is incident to $x$ in $O(1)$ time.
Suppose that $e^\circ$ is not incident to $x$. 
Denote by $w$ the endpoint of $e^\circ$ in $G[\alpha]$ and consider the path $\pi^*$ in $\T$ from $w$ to $x$. We denote by $e^*$ its last edge. 
The edge $e^\circ$  is biconnected to the edges in a set $B^* \in  BC(\alpha, \proj_u(\alpha))$ if and only if $e^* \in B^*$ (indeed, for any such $B^*$ the path $\pi^*$ connects $e^*$ to two vertices of $B^*$). 
We immediately apply Lemma~\ref{lemma:ptcltrcontainment} and identify if $B^*$ exists in $O(k \log n)$ time.

\textbf{Let $\alpha$ be a path cluster.}
First, we consider the special case where $w$ is a boundary vertex of $\pi(\alpha)$.
Denote by $e^*$ the edge of $\pi(\alpha)$ at distance $1$ of $w$ (i.e. not the edge $e_1$ incident to $w$, but the edge incident to $e^*$). 
Observe that if the edge $e^\circ$ is biconnected in $G[\alpha] \cup \T$ to the edges of some $B^* \in BC(\alpha, \proj_u(\alpha))$ then $B^*$ must contain $e^*$. Analogue to the proof of Lemma~\ref{lemma:pathdetection}, we can test if there exists a $B^* \in BC(\alpha, \proj_u(\alpha))$ that contains $e^*$ in $O(k \log n)$ time. 

Now consider the canonical case where $w$ is not a boundary vertex of $\pi(\alpha)$. denote by $m = meet(x, y, w)$. By our definition of slim-path top trees, $m$ must be an internal vertex on $\pi(\alpha)$. We note that either:

\begin{enumerate}[noitemsep, nolistsep]
\item there exists a $B^\circ \in BC_x(\nu, \des_x(\alpha))$ that contains both edges of $\pi(\alpha)$ incident to $m$
    \item or removing the vertex $m$ separates the spine $\pi(\alpha)$ in $G[\alpha]$.
\end{enumerate}
By Lemma~\ref{lemma:pathdetection}, we can identify $B^\circ$ (if it exists) in $O(k \log n)$ time. In the case where $B^\circ$ exists (Case 1), $e^\circ$ is biconnected to the edges in $B^\circ$ in the graph $G[\nu]$ (this follows from the observation that $e^\circ$ is path-connected to $\pi(\alpha)$ in $G[\alpha]$ and to $\pi(\beta)$ in $G[\beta]$). 

Suppose (Case 2) that $B^\circ$ does not exist and suppose that there \emph{does} exist a biconnected component $B^* \in BC(\alpha, \proj(\alpha))$ whose edges are biconnected to $e^\circ$ in $G[\nu]$.
We show that we can identify this component $B^*$ with a further case distinction. Denote by $\mu$ the four-way merge where $m$ was the central vertex. We have a pointer to $\mu$ through Invariant~\ref{inv:upwardpointers}.

\noindent
In this scenario, it must be that either:
\begin{enumerate}[noitemsep, nolistsep]
    \item the biconnected component $B^*$ contains an edge of $\pi(\alpha)$ that is incident to $m$, or
    \item the biconnected component $B^*$ is in $BC(\mu, \proj_u(\mu))$.
\end{enumerate}

Case 2.1: There are at most $O(\log n)$ nodes $\mu'$ in $\TT$ that contain a spine edge incident to $m$. We can check for each of them if there exists a $B' \in BC_m^*(\mu)$ with $e^* \in B^*$ in $O(k \log n)$ total time (check for each $B'$ if $e^*$ is contained in the \interval of $B'$ using Theorem~\ref{thm:boundarycontainment}). 

Case 2.2: We handle this case analogously to the case where $\alpha$ is a path cluster. 
\end{proof}

\subsection{
\texorpdfstring{An edge $e^\circ$ in $G[\nu] \setminus \bigcup_i G[\mu_i]$ and the BC it forms in $G[\nu]$}{An edge e in G[v] \ U G[m(i)] and the BC it forms in G[v]}
}
\label{sub:forming}

The previous subsection and its running times relied on some integer $k$: the maximum over all $u$ and $\nu$ of the number of elements in $BC^*_u(\nu)$.
Before we show that the integer $k$ is upper bounded by a constant, we first establish the following technical result. 
Suppose that $\nu$ is a path cluster with children $\alpha$ and $\beta$, and $e^\circ$ is an edge in $G[\alpha \cup \beta]$ that is not in either $G[\alpha]$ or $G[\beta]$. Then we \emph{know} that $e^\circ$ is part of some biconnected component $B_{\alpha \beta}$ in the graph $G[\alpha \cup \beta]$. 
Theorem~\ref{thm:merging} aims to identify for $B_{\alpha \beta}$ the `important' information of $B_{\alpha \beta}$. Slightly more formally, we show how we can compute what we later call its \emph{projected component}: the biconnected component formed by $e^\circ$, all edges in $G[\beta]$ and all spanning tree edges in $G[\alpha]$ (see Figure~\ref{fig:projectedcomponent}).

\begin{restatable}{theorem}{merging}
\label{thm:merging}
Let $\nu \in \TT$, $\alpha$ and $\beta$ be two children of $\nu$.
Let $e^\circ \in G[\alpha \cup \beta] \setminus ( G[\alpha] \cup G[\beta])$. Then $e^\circ$ is part of a biconnected component $B$ in the graph $\{ e^\circ \} \cup G[\beta] \cup \left ( \T \cap G[\alpha] \right)$.
Moreover, we can identify in $O(k \log n)$ time:
\begin{enumerate}[(a), noitemsep, nolistsep]
    \item the path $B \cap \pi(\beta)$, 
    \item whether $B$ is \emph{relevant} in $G[\nu]$, and
    \item the \intervals of $B$ in $G[\nu]$.
\end{enumerate}
Here, $k$ is the maximum over all $u$ and $\nu$ of the number of elements in $BC^*_u(\nu)$.
\end{restatable}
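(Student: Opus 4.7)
The plan is to first establish existence of $B$ and then describe an efficient procedure for (a)--(c) that uses the stored component information at descendants of $\beta$ in $\TT$.

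Write $e^\circ = (x,y)$ with $x\in\alpha$ and $y\in\beta$, and let $m$ denote the shared boundary of $\alpha$ and $\beta$. The unique $x$-to-$m$ path in $\T$ lies in $\T\cap G[\alpha]$, the unique $m$-to-$y$ path in $\T$ lies in $\T\cap G[\beta]$, and concatenating them with $e^\circ$ gives a fundamental cycle witnessing that $e^\circ$ belongs to a biconnected component $B$ in the restricted ambient graph.

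For (a), since $\T\cap G[\alpha]$ is a forest, no cycle through $e^\circ$ can add biconnectivity on the $\alpha$ side beyond the fundamental cycle; hence $B\cap\pi(\beta)$ is obtained from an initial slice $[m,m_y]\subseteq\pi(\beta)$ by repeatedly absorbing pre-stored biconnected components of $G[\beta]$ whose slices on $\pi(\beta)$ overlap the current one. I obtain $m_y=\meet(m,v_\beta,y)$, where $v_\beta$ is the other boundary of $\beta$, together with the tree edge incident to $y$ on the $y$-to-$m_y$ path, in $O(\log n)$ time via Theorem~\ref{thm:meet}. By Lemma~\ref{lemma:endpoints} each absorbed component contributes a subpath of $\pi(\beta)$, so the working slice grows monotonically and is tracked by its two endpoints. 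By Lemma~\ref{lemma:unique_descendent} each relevant component is stored at exactly one descendant in $\TT(\beta)$, and a component can only touch the final slice if its storing cluster $\mu$ has $\pi(\mu)$ intersecting the final slice. I traverse $\TT(\beta)$ top-down, descending only into children whose spine intersects the current slice; this visits $O(\log n)$ clusters, and at each I inspect the $O(k)$ stored components using the stored indices of Invariant~\ref{inv:component_storage} together with Lemma~\ref{lemma:indexing}, in constant time per component. Monotonicity and Lemma~\ref{lemma:endpoints} bound the total work by $O(k\log n)$.

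For (b), relevance asks whether $B$ is vertex-incident to $\pi(\nu)$. If $\nu$ is a point cluster with boundary $u$ then $\pi(\nu)=\{u\}$, and one checks in $O(1)$ time whether $u$ lies in $B$, i.e.\ in the computed slice or on the fundamental cycle. If $\nu$ is a path cluster, the sub-interval of $\pi(\beta)$ overlapping $\pi(\nu)$ is determined by the merge type producing $\nu$, which Invariant~\ref{inv:upwardpointers} exposes, and relevance reduces to intersecting two intervals on $\pi(\beta)$ via Lemma~\ref{lemma:indexing}.

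For (c), the extremal vertices of $B\cap\pi(\nu)$ are either the endpoints of the initial fundamental-cycle slice or extremal vertices of an absorbed component whose \intervals are already maintained under Invariant~\ref{inv:component_storage}. At each extremal vertex the candidate incident edges are the spine-incident tree edge of the fundamental cycle together with the stored \interval edges of the $O(k)$ absorbed components; selecting the first/last northern and southern edges (path-cluster case) or the eastern/western edges (point-cluster case) takes constant time per candidate once the required indices are computed in $O(\log n)$ total time via Lemma~\ref{lemma:indexing} and Lemma~\ref{lem:cwindex}, producing the \intervals within the $O(k\log n)$ budget. The main obstacle will be the case analysis driven by whether $\alpha$ and $\beta$ are point or path clusters, by the positions of $x$, $y$, and $m$ under the slim-path invariant, and by the merge type producing $\nu$; handling the northern/southern orientation bookkeeping and the boundary-vertex degeneracies consistently across all sub-cases is where most of the careful work will go.
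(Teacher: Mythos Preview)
Your framework---form the fundamental cycle through $e^\circ$, take the initial slice $[m,m_y]$ on $\pi(\beta)$, then extend it by absorbing stored components---is the same skeleton the paper uses. The gap is in the running-time argument for the absorption step.

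You claim that traversing $\TT(\beta)$ top-down, ``descending only into children whose spine intersects the current slice'', visits $O(\log n)$ clusters. This does not follow. If the slice is a long prefix of $\pi(\beta)$ (which it is, since $m$ is a boundary vertex of $\beta$), then at a four-way merge \emph{both} path-cluster children can have spines intersecting the slice, and recursing into both gives a fan-out that can touch $\Theta(|\pi(\beta)|)$ clusters. Nothing in your description prevents this.

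The repair is to notice that ``repeated absorption'' is illusory: the slice can only grow at $m_y$, and at most one biconnected component $B^*$ of $G[\beta]$ contains both spine edges incident to $m_y$. Once $B^*$ is absorbed, its far endpoint on $\pi(\beta)$ is a cutvertex of $G[\beta]$, so no further extension is possible. Hence (a) reduces to a \emph{single} point query---find the $B^*\in BC(\beta,\des_x(\beta))$ covering $m_y$---and the paper carries this out via Lemma~\ref{lemma:pathdetection} (when $\beta$ is a path cluster) or Lemma~\ref{lemma:ptcltrcontainment} (when $\beta$ is a point cluster). Both walk a single root-to-leaf path in $\TT$ anchored at $m_y$ (respectively at the relevant tree edge), which is where the $O(\log n)$ bound actually comes from.

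Two smaller points. First, the components you can read off during any traversal are those in $BC^*_x(\mu)$, i.e.\ components that are \emph{alive} with respect to $\des_x(\mu)$; you never say which designated face you are tracking as you descend, and Lemma~\ref{lemma:unique_descendent} only applies to alive components. The paper's Lemma~\ref{lemma:pathdetection} handles this by arguing that the witness cycle can always be pushed out to one incident to the designated face. Second, you are right that the case analysis on the types of $\alpha$, $\beta$, and $\nu$ is where the work is; the paper's proof is precisely that seven-case analysis, and the border computation in (c) differs materially across cases (e.g.\ Case~3.1 versus Case~3.2), so your uniform recipe for (c) will need to be unpacked along the same lines.
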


\begin{proof}[Proof of Theorem~\ref{thm:merging}]
First, we note that any edge $e^\circ \in G[\alpha \cup \beta] \setminus ( G[\alpha] \cup G[\beta])$ is indeed part of some biconnected component $B$ in the graph $\{ e^\circ \} \cup G[\beta] \cup \left ( \T \cap G[\alpha] \right)$. Indeed, one endpoint of $e^\circ$ is path-connected to $\pi(\beta)$ in $G[\beta] \cap \T$ and the other endpoint of $e$ is path-connected to $\pi(\alpha)$ in $G[\alpha] \cap \T$. These two disjoint paths create, together with $e^\circ$, a cycle in $\{ e^\circ \} \cup G[\beta] \cup \left ( \T \cap G[\alpha] \right)$. 
What remains is to compute the properties (a), (b) and (c) of this biconnected component $B$.  
The proof is an elaborate case distinction based on whether $\alpha, \beta$ and $\nu$ are path or point clusters. \newline

\noindent
\noindent
\textbf{Case 1: a Point merge.} 

\textbf{Case 1.1 $\nu$ and $\beta$ are point clusters that share a boundary vertex $u$.}
\begin{enumerate}[(a), noitemsep, nolistsep]
\item The biconnected component $B$ must intersect $\pi(\beta)$.
Since $\pi(\beta)$ is a single vertex, it follows that $B \cap \pi(\beta)$ is the spine of $\beta$.
\item Per definition, $\pi(\beta) = \pi(\nu)$ and so $B$ must be relevant in $G[\nu]$.
\item Let without loss of generality, $G[\alpha]$ precede $G[\beta]$ in the clockwise ordering around $u$. Denote by $z$ the endpoint of $e^\circ$ in $\alpha$ and by $e_1$ the last edge in the path from $z$ to $\pi(\alpha)$ in the spanning tree $e_1$. By Property~\ref{prop:ordering}, $e_1$ is the eastern \interval of $B$.  

Denote by $w$ the endpoint of $e^\circ$ in $\beta$ and by $e^*$ the last edge in the path from $w$ to $\pi(\beta)$. 
By Lemma~\ref{lemma:ptcltrcontainment}, we can determine if there exists a $B^* \in BC(\beta, \des_u(\beta))$ where $e^* \in B^*$. If such a $B^*$ exists, then the western \interval of $B$ is the western \interval of $B^*$. Otherwise, the western \interval of $B$ must be $e^*$ itself. 
\end{enumerate} \leavevmode 

\noindent
\textbf{Case 2: End merge.}

\textbf{Case 2.1: $\nu$ and $\beta$ are point clusters with $\pi(\nu) = u$, $\alpha$ is a path cluster.}
\begin{enumerate}[(a), noitemsep, nolistsep]
    \item The biconnected component $B$ must intersect $\pi(\beta)$.
Since $\pi(\beta)$ is a single vertex, it follows that $B \cap \pi(\beta)$ is the spine of $\beta$.
\item Per definition, $B$ is relevant in $G[\nu]$ if and only if it contains two edges $e_1, e_2$ that are incident to the boundary vertex $u$ of $\nu$. Moreover, $u$ has to be a boundary vertex of $\alpha$ and not $\beta$.
Via the slim-path property of $\alpha$, these two edges $e_1$ and $e_2$ cannot both be edges in the spanning tree $\T$. Thus, it follows that $B$ is relevant in $G[\nu]$ if and only if $e^\circ$ is incident to $u$. We can test this in $O(1)$ time. 
\item Let $B \cap \pi(\nu) \neq \emptyset$. Then the eastern \interval is the edge of $\pi(\alpha)$ that is incident to $u$ and the west \interval is $e^\circ$.
\end{enumerate}

\textbf{Case 2.2: $\nu$ and $\alpha$ are point clusters with $\pi(\nu) = u$, $\beta$ is a path cluster. }
\begin{enumerate}[(a), noitemsep, nolistsep]
    \item Denote by $w$ the endpoint of $e^\circ$ in $\beta$.
    Denote by $m$ the meet between $w$ and the boundary vertices of $\beta$ and by $e^*$ the edge incident to $m$ that is farthest away from $u$. 
    We obtain $m$ and $e^*$ in $O(\log n)$ time (Theorem~\ref{thm:meet}).
    The edge $e^*$ must be in $B$. 
    
    By Lemma~\ref{lemma:pathdetection}, we can detect if there exists a biconnected component $B^* \in BC(\beta, \des_u(\beta))$ in $O(k \log n)$ time. If such a biconnected component $B^*$ exists it must be unique. Moreover, since $B^*$ is a biconnected component of $G[\beta]$, $B^*$ must be a subset of $B$. 
    If $B^*$ exists, then the path $B \cap \pi(\beta)$ goes from the boundary vertex $v$ of $\beta$ to the endpoint of $B^* \cap \pi(\beta)$.
    If no such $B^*$ exists then the path $B \cap \pi(\beta)$ goes from $v$ to $m$. 
  \item We check if $u$ is in $B \cap \pi(\beta)$ in $O(1)$ additional time. 
  \item Let $B \cap \pi(\nu) \neq \emptyset$ and reconsider the case distinction in our argument for (a). If $B^*$ exists, then the \interval of $B$ in $G[\nu]$ is equal to the western \interval of $B^*$. Otherwise, the \interval of $B$ in $G[\nu]$ has as its eastern and western \interval $e^\circ$ and the edge $e$ of $\pi(\beta)$ that is incident to the boundary vertex $u$. We decide which edge is eastern and which is western in $O(1)$ additional time. 
\end{enumerate} \leavevmode

 \noindent
\textbf{Case 3: Four-way merge}

\textbf{Case 3.1: $\nu$ and $\alpha$ are path clusters and $\beta$ is a point cluster.}
\begin{enumerate}[(a), noitemsep, nolistsep]
    \item Since $\beta$ is a point cluster, the path $B \cap \pi(\beta)$ is the boundary vertex of $\beta$.
    \item Denote by $w$ the endpoint of $e^\circ$ in $G[\alpha]$ and by $m$ the meet between the boundary vertices of $\alpha$ and $m$. We obtain $m$ in $O(\log n)$ time (Theorem~\ref{thm:meet}). The path $B \cap \pi(\alpha)$ can, per definition, only use edges in $\T \cap G[\alpha]$. Thus, the path $B \cap \pi(\nu)$ is equal to the path from $m$ to the boundary vertex of $\beta$. 
\item The eastern \interval of $B \cap \pi(\nu)$ is the aforementioned vertex $m$ together with the two edges of $B$ that are incident to $m$. 

Computing the western \interval is slightly more involved. Let $x$ be the boundary vertex of $\beta$. Denote by $w$ the endpoint of $e^\circ$ in $\beta$ and by $e^*$ the last edge on the path in $\T$ from $w$ to $x$.
Using Lemma~\ref{lemma:ptcltrcontainment}, we test if $e^*$ is contained in a biconnected component $B^* \in BC_x(\beta, \des_x(\beta))$ in $O(k \log n)$ time. 
If such a $B^*$ exists, then the western \interval of $B$ in $\pi(\nu)$ is the vertex $x$; together with the edge of $\pi(\alpha)$ that is incident to $x$ and the western \interval of $B^*$. 
Otherwise, the western \interval of $B$ in $\pi(\nu)$ is the vertex $x$; together with the edge of $\pi(\alpha)$ that is incident to $x$ and the edge $e^*$. 
\end{enumerate}

\textbf{Case 3.2: $\nu$ and $\beta$ are path clusters and $\alpha$ is a point cluster.}
\begin{enumerate}[(a), noitemsep, nolistsep]
    \item Denote by $w$ the endpoint of $e^\circ$ in $\beta$.
    Denote by $m$ the meet between $w$ and the boundary vertices of $\beta$ and by $e^*$ the edge incident to $m$ that is farthest away from $u$. 
    We obtain $m$ and $e^*$ in $O(\log n)$ time (Theorem~\ref{thm:meet}).
    The edge $e^*$ must be in $B$. 
    
    By Lemma~\ref{lemma:pathdetection}, we can detect if there exists a biconnected component $B^* \in BC(\beta, \des_u(\beta))$ in $O(k \log n)$ time. If such a biconnected component $B^*$ exists it must be unique. Moreover, since $B^*$ is a biconnected component of $G[\beta]$, $B^*$ must be a subset of $B$. 
    If $B^*$ exists, then the path $B \cap \pi(\beta)$ goes from the boundary vertex $v$ of $\beta$ to the endpoint of $B^* \cap \pi(\beta)$.
    If no such $B^*$ exists then the path $B \cap \pi(\beta)$ goes from $v$ to $m$. 
    \item Observe that $\pi(\beta) \cup \pi(\alpha) \subset \pi(\nu)$ and that $B$ contains only $e^\circ$, and edges in $G[\beta]$ and in $G[\alpha] \cap \T$. It immediately follows that $B \cap \pi(\beta) = B \cap \pi(\nu)$. 
    \item 
    First, we show how to compute the eastern border.
    Denote by $v$ the boundary vertex of $\alpha$. The vertex $v$ must be the eastern \interval of $B$ in $\pi(\nu)$, together with two edges. 
    Let $e_1$ be the edge in $B \cap G[\alpha]$ that is incident to $v$ and $e_2$ be the spine edge of $\beta$ that is incident to $v$. 
    By Lemma~\ref{lemma:pathdetection}, we test if $e_2$ is contained in a biconnected component $B^* \in BC(\beta, \des_v(\beta))$ in $O(k \log n)$ time. 
    If such a $B$ exists, then the eastern \interval is $v$ with two edges: $e_1$ and an edge of the eastern \interval of $B^*$. 
    If no such $B$ exists then the eastern \interval is $v$ with $e_1$ and $e_2$. 
    
    Next, we compute the western border. Denote by $w$ the vertex in $\beta$ that is incident to $e^\circ$. Denote by $m$ the meet between $w$ and the boundary vertices of $\beta$ and by $e^*$ the spine edge incident to $m$ that is closest to $\alpha$. We obtain these objects in $O(\log n)$ time (Theorem~\ref{thm:meet}). The edge $e^*$ must be in $B$. 
    By Lemma~\ref{lemma:pathdetection}, we test if $e^*$ is contained in a $B^* \in BC(\beta, \des_v(\beta))$ in $O(k \log n)$ time. 
    Since $B^*$ is a biconnected component in $G[\beta]$, it must be that $B^* \subset B$. If such a $B^*$ exists, it must be unique and the western \interval of $B$ is the western \interval of $B^*$. 
    If no such $B^*$ exists then the western \interval of $B^*$ is $m$ with $e^*$ and the last edge on the path from $w$ to $m$ (in $\T$).
\end{enumerate}

\textbf{Case 3.3: $\nu, \alpha$ and $\beta$ are path clusters where $\nu$ and $\alpha$ share a boundary vertex $u$ and $\nu$ and $\beta$ share a boundary vertex $v$. }
\begin{enumerate}[(a), noitemsep, nolistsep]
    \item Denote by $w$ the endpoint of $e^\circ$ in $\beta$.
    Denote by $m$ the meet between $w$ and the boundary vertices of $\beta$ and by $e^*$ the edge incident to $m$ that is closest to $\alpha$.
    We obtain $m$ and $e^*$ in $O(\log n)$ time (Theorem~\ref{thm:meet}).
    The edge $e^*$ must be in $B$. 
    
    Denote by $x = \pi(\alpha) \cap \pi(\beta)$. Using  Lemma~\ref{lemma:pathdetection}, we can detect if there exists a biconnected component $B^* \in BC(\beta, \des_x(\beta))$ in $O(k \log n)$ time. 
    If such $B^*$ exists it must be unique. Moreover, since $B^*$ is a biconnected component of $G[\beta]$, $B^*$ must be a subset of $B$. 
    If $B^*$ exists, then the path $B \cap \pi(\beta)$ goes from the boundary vertex $v$ of $\beta$ to the endpoint of $B^* \cap \pi(\beta)$.
    If no such $B^*$ exists then the path $B \cap \pi(\beta)$ goes from $v$ to $m$. 
    \item The path $B \cap \pi(\nu)$ is equal to the path $B \cap \pi(\beta)$ concatenated with $B \cap \pi(\alpha)$. The first path was computed in (a). The second path can be computed as follows: denote by $z$ the endpoint of $e^\circ$ in $\alpha$ and by $m'$ the meet between $z$ and the boundary vertices of $\alpha$. We compute $m'$ in $O(\log n)$ time (Theorem:meet). Since $B$ can only contain edges in $G[\alpha]$ that are also in $\T$, the path $B \cap \pi(\alpha)$ goes from $m'$ to $x$.
    \item Let $\alpha$ be east of $\beta$. The other case is symmetrical. 
    The eastern \interval is $m'$ with the two edges of $B$ that are incident to $m$.
    We compute the western \interval analogue to Case 3.2 (c).
    \end{enumerate}

\textbf{Case 3.4: $\nu$ is a path cluster and $\alpha$ and $\beta$ are point clusters (Four-way merge).}
\begin{enumerate}[(a), noitemsep, nolistsep]
    \item The path $B \cap \pi(\beta)$ must be equal to the boundary vertex of $\beta$.
    \item Similarly, the path $B \cap \pi(\nu)$ must be equal to the boundary vertex of $\beta$. 
    \item Per definition, the \interval of $B$ in $\nu$ is empty.
\end{enumerate} \end{proof}

\begin{figure}[h]
  \centering
  \includegraphics{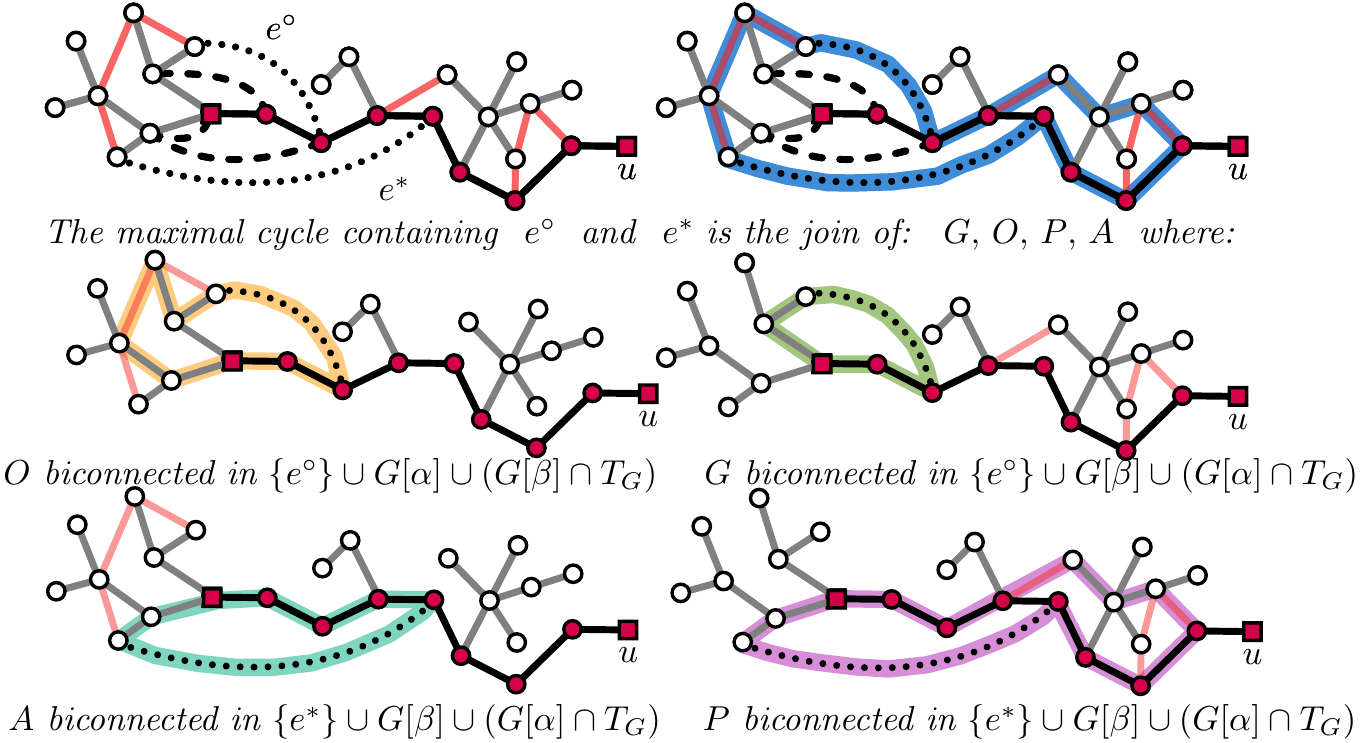}
  \caption{
A cluster $\nu$ with as children a point cluster $\alpha$ and a path cluster $\beta$. There may be many edges in $G[\alpha \cup \beta]$.
These edges are all contained in some maximal cycle which we show in blue. 
For the edge $e^\circ$, we show the biconnected component $G$ in $\{ e^\circ \} \cup G[\beta] \cup \left ( \T \cap G[\alpha] \right)$  and $O$ in $\{ e^\circ \} \cup G[\alpha] \cup \left ( \T \cap G[\beta] \right)$. 
Similarly, for the edge $e^*$ we show the biconnected components $P$ and $A$. On an intuitive level, the maximal blue cycle is their `join'.   
}
  \label{fig:projectedcomponent}
\end{figure}

\subsection{Proving our main theorem}
\label{sub:finalargument}
Finally, given our prerequisite theorems, we show our main result. 
We show that we maintain our invariants in $O(\log ^2 n)$ time per update. The theorem then almost immediately follows. Indeed, to answer biconnectivity queries between vertices $u$ and $v$ we expose them in $O(\log^2 n)$ time.
Denote by $\mu$ the new root and by $\nu$ the path cluster child of $\mu$ and by $\alpha$ and $\beta$ the (at most two) point clusters.
Per definition, $u$ and $v$ cannot be biconnected in $G[\nu]$ (as  the only edges incident to $u$ and $v$ in $G[\nu]$ are in $\pi(\nu)$). 
Similarly, $u$ and $v$ cannot be biconnected in $G[\alpha]$ and $G[\beta]$ as these two graphs do not contain $v$ and $u$ respectively.
Thus, $u$ and $v$ can only be biconnected via a relevant biconnected component $B \in G[\alpha \cup \nu \cup \beta]$. We show that there can be only constantly many interesting such biconnected components and that we can identify then in $O(\log^2 n)$ time, using the same technique we use to maintain our invariants.
This argument upper bounds the aforementioned integer $k$ by a constant.

\theoremmain*

\begin{proof}
We show that we maintain our invariants in $O(\log^2 n)$ time per update. The theorem then almost immediately follows: indeed, to answer biconnectivity queries between vertices $u$ and $v$ we expose them in $O(\log^2 n)$ time.
Denote by $\mu$ the new root and by $\nu$ the path cluster child of $\mu$ and by $\alpha$ and $\beta$ the (at most two) point clusters.
Per definition, $u$ and $v$ cannot be biconnected in $G[\nu]$ (as in $G[\nu]$ the only edges incident to $u$ and $v$ are in $\pi(\nu)$). 
Similarly, $u$ and $v$ cannot be biconnected in $G[\alpha]$ and $G[\beta]$ as these two graphs do not contain $v$ and $u$ respectively.
Thus, $u$ and $v$ can only be biconnected via a relevant biconnected component $B \in G[\alpha \cup \nu \cup \beta]$. We show that there can be only constantly many interesting such biconnected components and that we can identify then in $O(\log^2 n)$ time, using the same technique we use to maintain our invariants. 

Holm and Rotenberg show in~\cite{holm2017dynamic} that any of the update operations can be realized by $O(\log n)$ splits and merges in the top tree. 
Thus, all we have to show is that we can maintain our invariants during splits and merges in the top tree. 

We note that maintaining Invariant~\ref{inv:upwardpointers} can be done in $O(1)$ additional time through standard pointer management during the splits and merges. 
Our argument therefore focuses on maintaining Invariants~\ref{inv:relevant_bicomp} and \ref{inv:component_storage}. Suppose that for each cluster $\nu$ and each vertex $u$, the set $BC^*_u(\nu)$ has at most $k$ elements. 
Then when splitting a cluster $\nu$ with boundary vertex $u$.  all objects representing components in $BC^*_u(\nu)$ and their \interval information can be removed in $O(k)$ time. 

What remains to show, is that $k$ is a constant and that for each merge that creates a cluster $\nu$, we can identify the existence of each $B \in BC^*_u(\nu)$ and compute the \interval of $B$ in $\nu$ in $O(\log n)$ time. 

\textbf{Defining gap closers.}
Any biconnected component $B \in BC^*_u(\nu)$ must contain at least one edge $e \in G[\alpha \cup \beta] \setminus ( G[\alpha] \cup G[\beta])$ for two children $\alpha$ and $\beta$ of $\nu$.
We call such edges \emph{gap closers}. 
We show that for each merge type, there are at most a constant number of gap closers that are interesting (i.e. can be part of a unique biconnected component $B \in BC^*_u(\nu)$ for a boundary vertex $u$ of $\nu$). 
The proof is a case distinction between three cases, that mirror the cases of the proof of Theorem~\ref{thm:merging}. 
The cases are illustrated by Figures~\ref{fig:pointclustermerge}, \ref{fig:pathpointmerge}, \ref{fig:pathpathmerge}, \ref{fig:pathingmerges} and \ref{fig:specialmerge}.

\begin{figure}[h]
  \centering
  \includegraphics{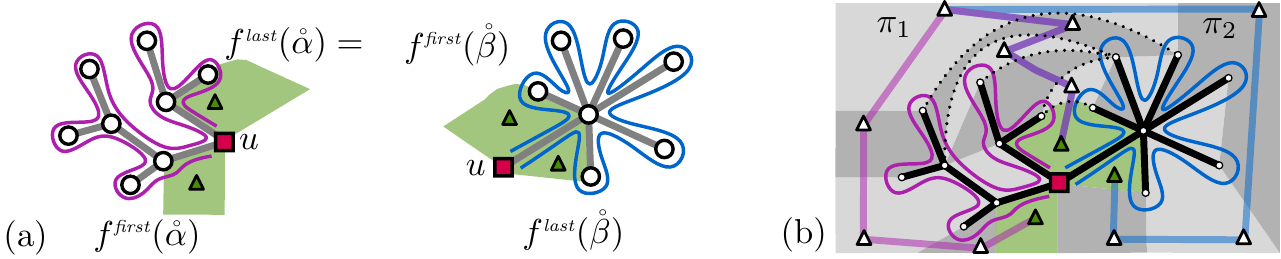}
  \caption{ Case 1.
    (a) Point clusters $\{ \alpha, \beta \}$. We show the faces $(\first{\tour{\alpha}}, \last{\tour{\alpha}}, \first{\tour{\beta}}, \last{\tour{\beta}} )$ in the graph $G$ in green. 
    (b) A schematic representation of cotree vertices (triangles). Any edge from $G[\alpha]$ to $G[\beta]$ must be on $\pi_1 \cap \pi_2$. Only one edge in $\pi_1 \cap \pi_2$ can be incident to $\des_u(\nu)$.
  }
  \label{fig:pointclustermerge}
\end{figure}

\textbf{Case 1: A point merge.}
If $\alpha$ and $\beta$ are point clusters, then any gap closer $e$ must intersect both $\tour{\alpha}$ and $\tour{\beta}$.
Denote by $\pi_1$ the path in $\D$ that coincides with $\tour{\alpha}$ and by $\pi_2$ the path in $\D$ that coincides with $\tour{\beta}$.
It follows, that all gap closers $e$ must lie on the intersection between these two paths: $\pi_1 \cap  \pi_2$ (refer to Figure~\ref{fig:pointclustermerge}. This concept is similar to the \emph{edge bundles} by Laporte et al. in \cite{ItalianoPR93}). 
We can identify $\pi_1 \cap \pi_2$ and a pointer to its first edge $e^\circ$ in $\D$ in $O(\log n)$ time, using the meets between the faces incident to the start and end of $\tour{\alpha}$ and $\tour{\beta}$ (Theorem~\ref{thm:meet}).

For any edge $e$ on $\pi_1 \cap \pi_2$, it is part if a biconnected component $B \in BC^*_u(\nu)$ only if $e^\circ \in B$ (because $e^\circ$ intersect the largest interval of the concatenated tourpaths $\tour{\alpha}$ and $\tour{\beta}$). 
Given $e^\circ$, we immediately apply Theorem~\ref{thm:merging} to identify if there exists a relevant and biconnected component $B$ in $G[\alpha \cup \beta]$ that contains $e^\circ$.
Moreover, using property $(b)$, we can immediately check in $O(1)$ time if $B$ is relevant.
Finally, we test if either $e^\circ$ is edge-incident to $\des_u(\nu)$ in $O(\log n)$ time (Theorem~\ref{thm:edgeincident}). The component $B$ is alive with respect to $\des_u(\nu)$ if and only if it is.

\begin{figure}[h]
  \centering
  \includegraphics{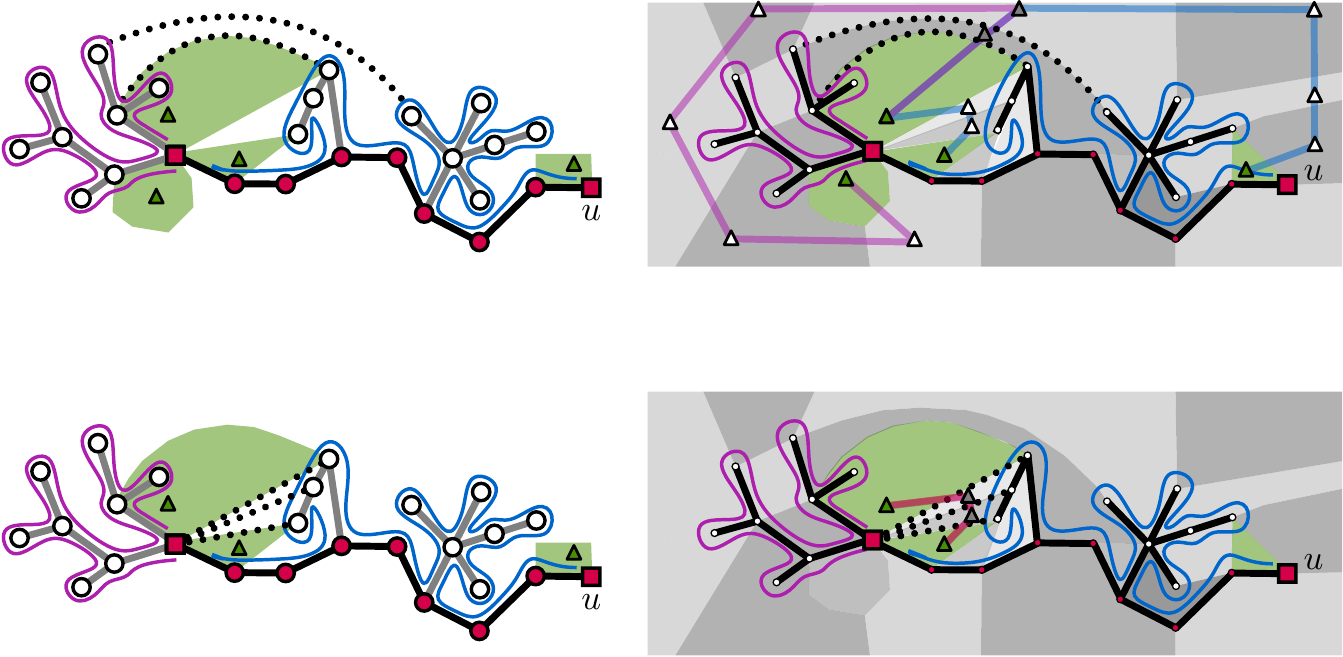}
  \caption{
  Case 2. A path and point cluster $\alpha$ and $\beta$ that form a point cluster (end merge)  \newline
  (top): we illustrate case i for the new edges in $G[\alpha \cup \beta]$. All non-tree edges that intersect $\tour{\alpha}$ (and that not already lie in $G[\alpha])$ must lie on the purple path in the cotree. \newline
  (bottom): we illustrate case ii for the new edges in $G[\alpha \cup \beta]$: edges that have both endpoints in $\beta$ where one endpoint is $\pi(\alpha) \cap \pi(\beta)$. All these edges must lie on the cotree from the `last' face incident to $G[\alpha]$ to the 'first' face incident to $G[\beta]$.
  }
  \label{fig:pathpointmerge}
\end{figure}

\textbf{Case 2: An end merge.}
We refer to Figure~\ref{fig:pathpointmerge}.
Let $\alpha$ be a point cluster and $\beta$ be a path cluster and denote $x = \pi(\alpha) \cap \pi(\beta)$. 
Denote by $u$ the other boundary vertex of $\beta$.
Recall that according to our definition of a graph induced by a path cluster, the graph $G[\beta]$ consists of all edges in $G$ that have both endpoints in $\beta \setminus \{ x, u \}$. 
The graph $G[\nu] = G[\alpha \cup \beta]$ consists of all edges that have both endpoints in $\alpha \cup \beta$. 
We show that this implies that any gap closer $e$ (any edge $e \in G[\alpha \cup \beta] \setminus ( G[\alpha]\cup G[\beta])$) is one of three types. Moreover, we show how to classify these types by  which paths in the Euler tour are intersected by $e$. The endpoints of $e$ lie on either:

\begin{enumerate}[i, noitemsep, nolistsep]
    \item  $\alpha \setminus x$ and $\beta \setminus x$
    (thus $e$ intersects $\tour{\alpha}$ and one of  $\{ \tour{\beta}^\uparrow, \tour{\beta}^\downarrow \}$),
    \item  $\{ \beta \setminus x \}$ and $\{ x \} $  
    (thus $e$ intersects one of $\{ \tour{\beta}^\uparrow, \tour{\beta}^\downarrow \}$ and either: the tourpath $\tour{z}$ connecting $\tour{\alpha}$ to $\tour{\beta}^\uparrow$, or, the tourpath $\tour{z'}$ connecting $\tour{\alpha}$ to $\tour{\beta}^\downarrow$),
      \item $\{ x \}$ and $ \{ z \}$ (there can only be one such edge which we can obtain in $O(\log n)$ time.
\end{enumerate}
  
  \noindent
It follows that any gap closer in $G[\alpha \cup \beta]$ must lie on the intersection $\pi_1 \cap \pi_2$ where $\pi_1$ and $\pi_2$ are paths in $\D$ that coincide with either: $\tour{\alpha}$, $\tour{\beta}^\uparrow$, $\tour{\beta}^\downarrow$, $\tour{z}$ or $\tour{z'}$. 

There are at most a constant number of such pairs $(\pi_1, \pi_2)$. Denote for each pair $(\pi_1, \pi_2)$ by $e^\circ$ the first edge on $\pi_1 \cap \pi_2$. Just as in Case 1, any edge $e \in \pi_1 \cap \pi_2$ is in a biconnected component $B \in BC^*_u$ only if $e^\circ \in B$. 
It follows that the set $BC^*_u(\nu)$ contains at most a constant number of elements and that for each element $B \in BC^*_u(\nu)$ we have access to such an edge $e^\circ \in B$ on a path  $\pi_1 \cap \pi_2$ in $G[\alpha \cup \beta]$.
We now immediately apply Theorem~\ref{thm:merging} to obtain for each such edge $e^\circ$ its `projected biconnected component' $B^\circ$ in the graph $\{ e^\circ \} \cup G[\beta] \cup \left ( \T \cap G[\alpha] \right)$ (and a projected component in $\{ e^\circ \} \cup G[\beta] \cup \left ( \T \cap G[\alpha] \right)$) in $O(k \log n)$ time.
For any pair of such `interesting' gap closers $(e^\circ_1,  e^\circ_2)$, the two gap closers $A$ and $B$ are biconnected in $G[\alpha \cup \beta]$ if either: (i) their projected biconnected components share an edge on $\pi(\nu)$ or (ii) their projected biconnected components share a border on $\pi(\nu)$. Moreover, the radial interval between their border edges must overlap. Or, (iii) they are biconnected through some projected compent $C$ of some gap closer $e^\circ_3$. 
By Theorem~\ref{thm:merging} we have access to all the information to, for each pair of projected components, detect case (i) and case (ii). Thus, we greedily pairwise combine the projected components to detect all biconnected components $B^*$ in $G[\nu]$ (and their borders in $G[\nu]$) and whether they are relevant in $\nu$. We refer to such an operation as a \emph{join}. 

Finally, all that remains is to show that we can test for each relevant maximal biconnected component $B^*$ in $G[\nu]$ if they are edge-incident to the face $\des_u(\nu)$. There are two ways $B^*$ can be edge-incident to $\des_u(\nu)$. Firstly, $B^*$ could contain a gap closer $e^\circ$ that is edge-incident to $\des_u(\nu)$. We verify this for every gap closer in $O(\log n)$ total time (Theorem~\ref{thm:edgeincident}). 
If no gap closer is edge-incident to $\des_u(\nu)$ then it must be that $\des_u(\nu)$ is already enclosed by some cycle in $G[\beta]$.
In this case, $B^*$ is edge-incident to $\des_u(\nu)$ if and only if it contains a biconnected component $B \in BC(\beta, \proj_u(\beta))$.
Such a biconnected component $B$ must contain the spine edge incident to the border of $B$ and so we identify if such a component exists in $O(k \log n)$ time through Lemma~\ref{lemma:pathdetection}.

\begin{figure}[h]
  \centering
  \includegraphics{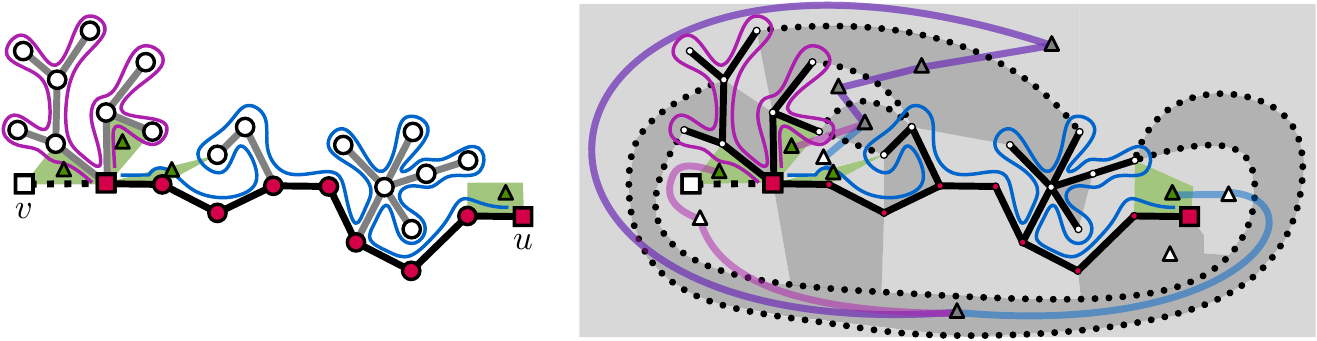}
  \caption{
  Case 3.
   A path and point cluster $\gamma$ and $\beta$ are part of a path cluster. 
  }
  \label{fig:pathpathmerge}
\end{figure}

\textbf{Case 3-5: A four-way merge. }
Let $\nu$ be a path cluster where its children are two path clusters $\alpha$ and $\beta$, and possibly two additional point clusters $\gamma$ and $\zeta$.
Denote by $u$ and $v$ the boundary vertices of $\nu$ and by $m$ its central vertex.
Just as in Case 2, we can categorize the gap closers $e$ of $\nu$ based on which Euler tours they intersect. 
However, for brevity, we describe all cases more high-level.
Any such edge $e$ must have an endpoint:

\begin{enumerate}[i, noitemsep, nolistsep]
\item on a point cluster, and on a path cluster but not on $u, v$ or $m$ (Figure~\ref{fig:pathpathmerge}),
    \item on two path clusters but not on $u, v$ or $m$ (Figure~\ref{fig:pathingmerges}),
    \item on two point clusters but not $m$ (Figure~\ref{fig:specialmerge}), or
    \item on a path cluster and $m$.
\end{enumerate}

\begin{figure}[h]
  \centering
  \includegraphics{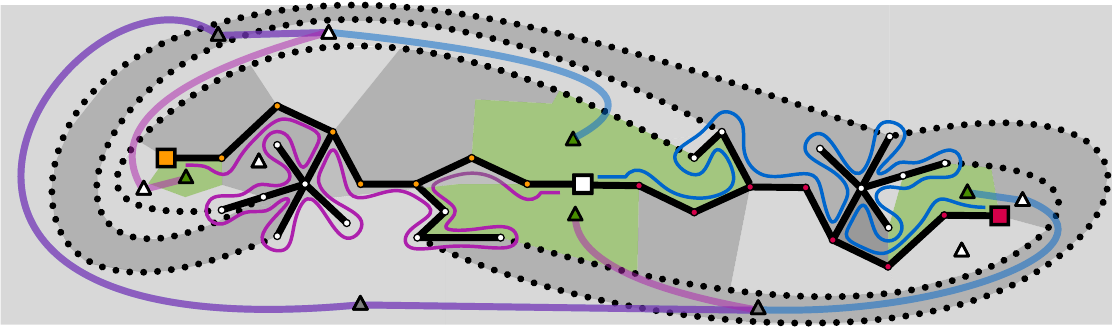}
  \caption{
  Case 4.
    (a) Two path clusters $\alpha$ and $\beta$ that are part of a path cluster.
  } 
  \label{fig:pathingmerges}
\end{figure}

It follows from the above classification, that any gap closer in $G[\nu]$ must lie on the intersection $\pi_1 \cap \pi_2$ where $\pi_1$ and $\pi_2$ are paths in $\D$ that coincide with either: $\tour{\alpha}^\uparrow$, $\tour{\alpha^\downarrow}$, $\tour{\beta}^\uparrow$, $\tour{\beta}^\downarrow$, $\tour{\gamma}$, $\tour{\beta}$ or, alternatively, on any of the four paths in the Euler tree that are incident to $m$ and connect two of these aforementioned tourpaths (these are the tourpaths that detect edges that are incident to $m$ but not in $G[\alpha] \cup G[\beta] \cup G[\gamma] \cup G[\zeta]$).
Just as in Case 2, this implies that we only have to consider at most a constant number of pairs of paths $(\pi_1, \pi_2)$ in the Euler tree and only the first edge of their intersection $\pi_1 \cap \pi_2$. 
To each of these edges $e^\circ$, we apply Theorem~\ref{thm:merging} to obtain constantly many `projected biconnected components'.
This proves that the number $k$ of Theorem~\ref{thm:merging} is a constant and $O(k \log n) = O(\log n)$, 

All biconnected components $B \in BC^*_u(\nu)$ are the join of constantly many projected components. What remains, is to correctly compute the join of these projected components.
To this end, we make one final case distinction.
If there exists no edge from $G[\gamma]$ to $G[\zeta]$, then we greedily compute for every pair of projected  components their join in $O(\log n)$ time by comparing their borders in $G[\nu]$ (the argument that we can compute the join of each pair of projected components in $O(\log n)$ time is identical to the argument in Case 2).

However, if there exists an edge $e^\circ$ from $G[\gamma]$ to $G[\zeta]$, we need to be a bit more careful (Figure~\ref{fig:specialmerge}). Theorem~\ref{thm:merging} gives an empty border of the projected biconnected component $B^\circ$ of $e^\circ$ (this is an artifact from the fact that, in this specific case, we are not able to compute the biconnected component in $G[\gamma \cup \zeta]$ that contains $e^\circ$). This prevents us from immediately computing the join between projected component (as two such components $A$ and $C$ may be biconnected through $B^\circ$). 
We show in Section~\ref{sec:specialcase} Theorem~\ref{thm:separation} to compute if removing $m$ separates $\pi(\alpha)$ from $\pi(\beta)$ in  $G[\nu]$ in $O(\log n)$ time. 

\begin{figure}[h]
  \centering
  \includegraphics{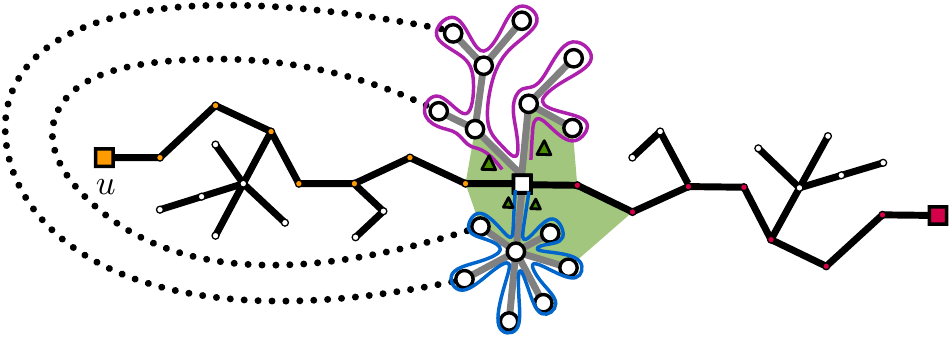}
  \caption{
  Case 5. Two point clusters $\gamma$ and $\zeta$ that share at least one edge.
  } 
  \label{fig:specialmerge}
\end{figure}

Suppose that removing $m$ separates $\pi(\alpha)$ from $\pi(\beta)$ in $G[\nu]$. Then we know that the two edges $e_1$ and $e_2$ on $\pi(\nu)$ incident to $m$, are not biconnected in $G[\nu]$. 
Thus, any two projected components $A$ and $C$ where $A$ contains $e_1$ and $C$ contains $e_2$ cannot be biconnected. 
Similarly, if removing $m$ not separate $\pi(\alpha)$ from $\pi(\beta)$ then $e_1$ and $e_2$ are biconnected in $G[\nu]$. 
Thus, any two projected components $A$ and $C$ where $A$ contains $e_1$ and $C$ contains $e_2$ must be biconnected. 
This allows us to compute the join between two pairs of projected components (and whether they form a biconnected component that is \emph{relevant} to $\nu$) in $O(\log n)$ time, analogue to Case 2. 

Finally, that remains is to show that we can test for each relevant maximal biconnected component $B^*$ in $G[\nu]$ if they are edge-incident to the face $\des_u(\nu)$. There are two ways $B^*$ can be edge-incident to $\des_u(\nu)$. Firstly, $B^*$ could contain a gap closer $e^\circ$ that is edge-incident to $\des_u(\nu)$. We verify this for every gap closer in $O(\log n)$ total time (Theorem~\ref{thm:edgeincident}). 
If no gap closer is edge-incident to $\des_u(\nu)$ then it must be that $\des_u(\nu)$ is already enclosed by some cycle in $G[\beta]$.
In this case, $B^*$ is edge-incident to $\des_u(\nu)$ if and only if it contains a biconnected component $B \in BC(\beta, \proj_u(\beta))$.
Such a biconnected component $B$ must contain the spine edge incident to the border of $B$ and so we identify if such a component exists in $O(k \log n)$ time through Lemma~\ref{lemma:pathdetection}.

\paragraph*{Upper bounding $k$.}
It follows from the above case distinction that for every vertex $u$, for every node $\nu \in \TT$, the number $k$ if biconnected components in the set $B_u^*(\nu)$ is at most a constant (specifically, the constant $k$ is upper bound by the number of pairs that can be selected in a set of ten paths in the Euler tree: $\tour{\alpha}^\uparrow$, $\tour{\alpha^\downarrow}$, $\tour{\beta}^\uparrow$, $\tour{\beta}^\downarrow$, $\tour{\gamma}$, $\tour{\beta}$ or, alternatively, on any of the four paths in the Euler tree that are incident to $m$ and connect two of these aforementioned tourpaths).
Moreover, for every merge we show how to identify these biconnected components and their borders in $O(\log n)$ time.
By the previous result of Holm and Rotenberg~\cite{holm2017dynamic}, we maintain Invariants~\ref{inv:relevant_bicomp}, ~\ref{inv:upwardpointers} and \ref{inv:component_storage} in $O(\log^2 n)$ worst case time per update operation. Indeed, Invariants~\ref{inv:relevant_bicomp}, \ref{inv:upwardpointers} and \ref{inv:component_storage} in $O(\log n)$ additional time per merge in the top tree. Which, by Holm and Rotenberg~\cite{holm2017dynamic} implies a worst case update time of $O(\log^2 n)$ per update operation in $G$.

\paragraph{Answering biconnectivity queries via a root merge.}
Finally we show how to answer biconnectivity queries for two vertices $u$ and $v$.
We expose $u$ and $v$ in $O(\log^2 n)$ time. 
The final root merge, creates the root $\mu$ through merging a path cluster $\nu$ and at most two point clusters $\alpha$ and $\beta$.
Any edge between the two point clusters must create a cycle that contains $u$ and $v$. Such edges must intersect both $\tour{\alpha}$ and $\tour{beta}$. Thus, we can easily detect if any such edge exists and if so, conclude that $u$ and $v$ are biconnected. 

If no such edge exists, we consider the merge between $\nu$ and $\alpha$ (resp. $\nu$ and $\beta$) is identical to case 2. From the analysis of case 2, it follows that we obtain at most a constant number of edges $e^\circ$ that may be part of a maximal biconnected component in $G[\mu]$. 
For each of these, we can obtain their projected component into $\nu$.
The vertices $u$ and $v$ are biconnected if and only if the join of their projections in to $\nu$ covers $\pi(\nu)$. 
This concludes the theorem.
\end{proof}

\section{Testing if a central vertex separates the spine.
}
\label{sec:specialcase}

In this section, we peek into the black box of the Holm and Rotenberg~\cite{holm2017dynamic} data structure to show that we can test for any central vertex $m$, if its removal separates the spine in $G$, in logarithmic time.

\begin{restatable}{theorem}{separation}
\label{thm:separation}
Let $\nu$ be a path cluster with boundary vertices $u$ and $v$. Let $m$ be the central vertex of the merge of its children. 
We can decide if removing $m$ separates $u$ and $v$ in the graph $G[\nu]$ in $O(\log n)$ time. 
\end{restatable}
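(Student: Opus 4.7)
The plan is to reformulate the separation test as a meet-style query on the cotree $\D$ and answer it by a single $O(\log n)$ top-tree walk that peeks inside the Holm--Rotenberg data structure~\cite{holm2017dynamic}, rather than invoking its vertex-split and connectivity operations in sequence (which would cost $O(\log^2 n)$).

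First, I would use Invariant~\ref{inv:upwardpointers-lca} to follow the pointer from $m$ to the four-way merge at which $m$ is central, and read off (via Invariant~\ref{inv:upwardpointers-path}) the two spine edges $e_1, e_2 \in \pi(\nu)$ incident to $m$. The corners $c_1, c_2$ at $m$ that would separate $e_1$ from $e_2$ under a vertex split, and the flanking faces $f_{c_1}, f_{c_2}$ of $G$ at those corners, are then identified in $O(1)$ additional time from the structure of the four-way merge (two path children contributing $e_1, e_2$, at most two non-adjacent point children contributing non-spine arcs).

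Next, I appeal to the standard planar characterisation of cutvertices: $m$ separates $u$ from $v$ in $G[\nu]$ iff, after a vertex split at $c_1, c_2$ into copies $m', m''$ (with $e_1$ on $m'$'s side and $e_2$ on $m''$'s side, so that $u$ is connected to $m'$ and $v$ to $m''$ along the spine), $m'$ and $m''$ are disconnected, and this happens iff $f_{c_1}, f_{c_2}$ coincide as a single face of the planar embedding of $G[\nu]$. Two faces of $G$ coincide in $G[\nu]$ iff they lie in the same connected component of the dual of $G$ after deletion of every edge of $G[\nu]$; since $\D$ is a spanning tree of this dual and is already maintained by~\cite{holm2017dynamic}, the coincidence reduces to a path query on $\D$ between $f_{c_1}$ and $f_{c_2}$, asking whether every cotree edge on the path corresponds to a non-tree edge of $G \setminus G[\nu]$, plus a secondary check that the deleted tree edges of $\T \cap (G \setminus G[\nu])$ (which form non-$\D$ dual edges) do not open additional connections between $f_{c_1}$ and $f_{c_2}$.

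Both checks can be executed in $O(\log n)$: the endpoints of the $\D$-path are obtained via a constant number of meet operations on $\D$ (each $O(\log n)$), and the path is then traversed through the top tree on $\D$, using the boundary pointers of Invariant~\ref{inv:upwardpointers-boundary-parent} to evaluate, per visited cluster, whether any cotree edge inside it lies in $G[\nu]$. This single cotree walk is exactly the ``peek into the black box'' of~\cite{holm2017dynamic} that replaces the naive split-and-query pipeline; the walk touches $O(\log n)$ clusters and spends $O(1)$ per cluster.

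The main obstacle I anticipate is the correctness of the duality reduction in the presence of deleted tree edges of $G$ in $G[\nu]$: such deletions correspond to non-$\D$ edges of the dual and can in principle merge additional faces, so the $\D$-path query alone does not immediately decide the face-coincidence question. The key to resolving this will be that the designated faces $\des_u(\nu), \des_v(\nu)$ from Lemmas~\ref{lemma:unique_path_face}--\ref{lemma:unique_point_face} localise every removed edge of $G$ (both tree and non-tree) into those two designated regions, which by construction do not contain the corners $c_1, c_2$. A careful case analysis of the four-way merge type around $m$, in the spirit of the proof of Theorem~\ref{thm:merging}, should then show that the ``extra'' merges from deleted tree edges do not affect the $f_{c_1}$-vs-$f_{c_2}$ test, closing the argument within $O(\log n)$ time.
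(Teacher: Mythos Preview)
Your reduction hinges on an equivalence that does not hold. You fix two specific corners $c_1,c_2$ at $m$ and claim that $m$ separates $u,v$ in $G[\nu]$ if and only if $f_{c_1}$ and $f_{c_2}$ coincide as a face of $G[\nu]$. The ``if'' direction is fine, but the ``only if'' fails whenever the four-way merge has a point-cluster child $\gamma$ whose vertices are connected (via a gap closer) to the path cluster on the \emph{wrong} side of your split. Concretely, if your $c_1,c_2$ put $\gamma$ on $m''$'s side together with $e_2$, but some edge of $G[\nu]$ runs from $\gamma$ into $\alpha$, then after the split $m''$ reaches $u$'s component through $\gamma$ and hence reaches $m'$, so $f_{c_1}\neq f_{c_2}$; yet $m$ can still separate $u$ from $v$ (take the case with no $\alpha$--$\beta$ or $\gamma$--$\beta$ edge at all). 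The correct characterisation, which the paper uses, is existential: $m$ separates $u,v$ iff \emph{some} face of $G[\nu]$ is incident to corners of $m$ both north and south of $\pi(\nu)$. With point clusters present at $m$ there are several north and several south corners, and which pair witnesses the separation depends on where the gap closers land; no single canonical pair suffices.

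Your self-identified obstacle is also real, and the proposed resolution is not correct. Deleted tree edges of $G\setminus G[\nu]$ \emph{do} affect whether two corner faces coincide: precisely when the witnessing face is $\des_u(\nu)$ or $\des_v(\nu)$, the coincidence happens only through those deletions, so one cannot argue they are irrelevant. The paper handles this not by dismissing the tree-edge deletions but by splitting into three cases (the witnessing face is $\des_u(\nu)$, is $\des_v(\nu)$, or is a face common to $G$ and $G[\nu]$) and testing each with the cover-and-search primitives already present in~\cite{holm2017dynamic}: temporarily cover all corners except those along the relevant tourpaths, then search the appropriate $\D$-path for an internal face incident to an uncovered corner of $m$ on the required side(s). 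That primitive---covering Euler-tour segments and searching a cotree path for faces meeting a corner predicate---is the actual ``peek into the black box'', and it is different from the $\D$-path membership walk you sketch.
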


\begin{proof}
    The structure from Holm and Rotenberg~\cite{holm2017dynamic} is based on a slim-path top tree $\TT$ over $\T$, where each merge and split updates a secondary top tree over $\D$. We augment $\TT$ to also maintain our invariants. An important part of the structure in~\cite{holm2017dynamic} is an operation that in $O(\log n)$ time \emph{covers} all corners along a segment of the extended Euler tour, and an operation that given a path $\pi$ in $\D$, in $O(\log n)$ time finds an internal face on $\pi$ that is incident to an uncovered corner on both sides of the path if such a face exists.  We can easily extend this structure to support a search for an internal face on $\pi$ that is incident to an uncovered corner on at least one side (instead of on both sides).
    
    Observe that $m$ separates $u$ and $v$ if and only if there exists a face $f_m$ in $G[\nu]$ that is incident to corners of $m$ on both the north and south side of $\pi(\nu)$.
    That face may be either $\des_u(\nu)$, $\des_v(\nu)$, or some (other) face that exists in both $G$ and $G[\nu]$.
    
    Now $f_m=\des_u(\nu)$ if and only if in $\D$ the path $\first{\tour{\nu}^\uparrow}\cdots\first{\tour{\nu}^\downarrow}$  both contains a face incident to a corner of $m$ that is north of the spine, and a corner of $m$ that is south of the spine. 
    We can check each of these questions for the path in worst case $O(\log n)$ time by asking first with everything but $\tour{\nu}^\uparrow$ temporarily covered and then with everything but $\tour{\nu}^\downarrow$ temporarily covered.
    The case $f_m=\des_v(\nu)$ is symmetric.
    
    Finally, a face that is in both $G$ and $G[\nu]$ can only be $f_m$ if it is an internal face on the common path in $\D$ between $\first{\tour{\nu}^\uparrow}\cdots\last{\tour{\nu}^\uparrow}$ and $\first{\tour{\nu}^\downarrow}\cdots\last{\tour{\nu}^\downarrow}$.  We can find the ends $f_1,f_2$ of this path in worst case $O(\log n)$ time using the meet operation, temporarily cover everything except $\tour{\nu}^\uparrow$ and $\tour{\nu}^\downarrow$, and then search $f_1\cdots f_2$ for an internal face that is incident to $m$ on both sides of the path in worst case $O(\log n)$ time.
\end{proof}

\bibliography{refs}
\appendix

\end{document}